\newcommand{\submission}[1]{}
\newcommand{\report}[1]{#1}
\DeclareFontFamily{U}{mathx}{\hyphenchar\font45}
\DeclareFontShape{U}{mathx}{m}{n}{
      <5> <6> <7> <8> <9> <10>
      <10.95> <12> <14.4> <17.28> <20.74> <24.88>
      mathx10
      }{}
\DeclareSymbolFont{mathx}{U}{mathx}{m}{n}
\DeclareMathAccent{\widecheck}{0}{mathx}{"71}
\renewcommand{\epsilon}{\varepsilon}
\let\oldphi\phi
\let\oldvarphi\varphi
\renewcommand{\phi}{\oldvarphi}
\renewcommand{\varphi}{\oldphi}
\renewcommand{\hat}[1]{\widehat{#1}}
\renewcommand{\check}[1]{\widecheck{#1}}
\newcommand{\ZZ}{\mathbb{Z}}
\newcommand{\AAA}{\mathscr{C}}
\newcommand{\QQ}{\mathbb{Q}}
\newcommand{\TT}{\mathcal{T}}
\newcommand{\RR}{\mathbb{R}}
\newcommand{\NN}{\mathbb{N}}
\newcommand{\VV}{\mathcal{V}}
\newcommand{\charfun}[1]{I_{#1}}
\newcommand{\Def}{\mathrel{\mathop:}=}
\newcommand{\assign}{\leftarrow}
\newcommand{\mDo}{\mathbf{do}}
\newcommand{\mWhile}[2]{\mathbf{while}\ #1\ \mDo\ #2}
\newcommand{\relmiddle}[1]{\mathrel{}\middle#1\mathrel{}}
\newcommand{\MLoop}{\mathit{Loop}}
\newcommand{\Prop}[1]{\mathit{FO_{QF}}(#1)}
\newcommand{\accelerate}{\mathit{accel}}
\newcommand{\nonterm}{\mathit{nt}}
\newcommand{\closure}{\mathit{closure}}
\newcommand{\mf}{\mathit{mf}}
\newcommand{\expr}{e}
\newcommand{\mat}[1]{\left(\begin{smallmatrix} #1 \end{smallmatrix}\right)}
\newcommand{\tool}[1]{\textsf{#1}}
\newcommand{\loat}{\tool{LoAT}\xspace}
\newcommand{\prob}[4]{\left\llbracket #1 \relmiddle{\vert} #2 \relmiddle{\vert} #3 \right\rrbracket_{#4}}
\newcommand{\ntprob}[4]{\left\| #1 \relmiddle{\vert} #2 \relmiddle{\vert} #3 \right\|_{#4}}
\newcommand{\nt}[2]{#1 \longrightarrow_{#2}^ \infty \bot}
\newcommand{\leadstont}{\leadsto_{\mathit{nt}}}
\newcommand{\CC}{{\sf C\nolinebreak\hspace{-.1em}\raisebox{.4ex}{\tiny\bf +}\nolinebreak\hspace{-.10em}\raisebox{.4ex}{\tiny\bf +}}}
\DeclareMathOperator{\dom}{dom}
\crefname{chapter}{chapter}{chapters}%
\crefname{enumi}{item}{items}%
\crefname{footnote}{footnote}{footnotes}%
\crefname{table}{table}{tables}%
\crefname{proposition}{proposition}{propositions}%
\crefname{result}{result}{results}%
\crefname{remark}{remark}{remarks}%
\crefname{note}{note}{notes}%
\title[ ]{A Calculus for Modular Loop Acceleration and Non-Termination Proofs}
\author[1]{Florian Frohn}
\author[2]{Carsten Fuhs}
\affil[1]{Max Planck Institute for Informatics, Saarland Informatics Campus, Saarbr\"ucken, Germany}
\affil[1]{LuFG Informatik 2, RWTH Aachen University, Aachen, Germany}
\affil[2]{Department of Computer Science and Information Systems, Birkbeck, University of London, London, UK}
\theoremstyle{thmstyleone}%
\newtheorem{theorem}{Theorem}%
\newtheorem{lemma}{Lemma}%
\theoremstyle{thmstyletwo}%
\newenvironment{example}
  {\pushQED{\qed}\examplex}
  {\popQED\endexamplex}
\newtheorem{remark}{Remark}%
\theoremstyle{thmstylethree}%
\newtheorem{definition}{Definition}%
\begin{document}

\abstract{
  Loop acceleration can be used to prove safety, reachability, runtime bounds, and (non-)termination of programs.
  To this end, a variety of acceleration techniques has been proposed.
  However, so far all of them have been monolithic, i.e., a single loop could not be accelerated using a \emph{combination of several different} acceleration techniques.
  In contrast, we present a calculus that allows for combining acceleration techniques in a modular way and we show how to integrate many existing acceleration techniques into our calculus.
  Moreover, we propose two novel acceleration techniques that can be incorporated into our calculus seamlessly.
  Some of these acceleration techniques apply only to non-terminating loops.
  Thus, combining them with our novel calculus results in a new, modular approach for proving non-termination.
  An empirical evaluation demonstrates the applicability of our approach, both for loop acceleration and for proving non-termination.
}

\maketitle

\section{Introduction}
\label{sec:intro}

In the last years, loop acceleration techniques have successfully been used to build static analyses for programs operating on integers \cite{underapprox15,fmcad19,journal,iosif17,Bozga14,fast}.
Essentially, such techniques extract a quantifier-free first-order formula $\psi$ from a single-path loop $\TT$, i.e., a loop without branching in its body, such that $\psi$ under-approximates (or is equivalent to) $\TT$.
More specifically, each model of the resulting formula $\psi$ corresponds to an execution of $\TT$ (and vice versa).
By integrating such techniques into a suitable program-analysis framework \cite{journal,fmcad19,iosif12,iosif17,FlatFramework}, whole programs can be transformed into first-order formulas which can then be analyzed by off-the-shelf solvers.
Applications include proving safety \cite{iosif12} or reachability \cite{iosif12,underapprox15}, deducing bounds on the runtime complexity \cite{journal}, and proving (non-)termination \cite{fmcad19,Bozga14}.

However, existing acceleration techniques apply only if certain prerequisites are in place.
So the power of static analyses built upon loop acceleration depends on the applicability of the underlying acceleration technique.

In this paper, we introduce a calculus which allows for combining several acceleration techniques modularly in order to accelerate a single loop.
This not only allows for modular combinations of standalone techniques, but it also enables interactions between different acceleration techniques, allowing them to obtain better results together.
Consequently, our calculus can handle classes of loops where all standalone techniques fail.
Moreover, we present two novel acceleration techniques and integrate them into our calculus.

One important application of loop acceleration is proving non-termination.
As already observed in \cite{fmcad19}, certain properties of loops -- in particular monotonicity of (parts of) the loop condition w.r.t.\ the loop body -- are crucial for both loop acceleration and proving non-termination.
In \cite{fmcad19}, this observation has been exploited to develop a technique for deducing invariants that are helpful to deal with non-terminating as well as terminating loops: For the former, they help to prove non-termination, and for the latter, they help to accelerate the loop.

In this paper, we take the next step by also unifying the actual techniques that are used for loop acceleration and for proving non-termination.
To this end, we identify loop acceleration techniques that, if applied in isolation, give rise to non-termination proofs.
Furthermore, we show that the combination of such non-termination techniques via our novel calculus for loop acceleration gives rise to non-termination proofs, too.
In this way, we obtain a modular framework for combining several different non-termination techniques in order to prove non-termination of a single loop.

In the following, we introduce preliminaries in \Cref{sec:preliminaries}.
Then, we discuss existing acceleration techniques in \Cref{sec:monotonic}.
In \Cref{sec:integration}, we present our calculus to combine acceleration techniques and show how existing acceleration techniques can be integrated into our framework.
\Cref{sec:conditional} lifts existing acceleration techniques to \emph{conditional} acceleration techniques, which provides additional power in the context of our framework by enabling interactions between different acceleration techniques.
Next, we present two novel acceleration techniques and incorporate them into our calculus in \Cref{sec:accel}.
Then we adapt our calculus and certain acceleration techniques for proving non-termination in \Cref{sec:nonterm}.
After discussing related work in \Cref{sec:related}, we demonstrate the applicability of our approach via an empirical evaluation in \Cref{sec:experiments} and conclude in \Cref{sec:conclusion}.

A conference version of this paper was published in \cite{conference}.
The present paper provides the following additional contributions:
\begin{itemize}
\item We present formal correctness proofs for all of our results, which were omitted in \cite{conference} for reasons of space.
\item We present an improved version of the loop acceleration technique from \cite[Thm.\ 3.8]{journal} and \cite[Thm.\ 7]{ijcar16} that yields simpler formulas.
\item We prove an informal statement from \cite{conference} on using arbitrary existing acceleration techniques in our setting, resulting in the novel \Cref{thm:accel-as-cond-accel}.
\item The adaptation of our calculus and of certain acceleration techniques for proving non-termination (\Cref{sec:nonterm}) is completely new.
\item We extend the empirical evaluation from \cite{conference} with extensive experiments comparing the adaptation of our calculus for proving non-termination with other state-of-the-art tools for proving non-termination (\Cref{sec:eval-nonterm}).
\end{itemize}

\section{Preliminaries}
\label{sec:preliminaries}

We use the notation $\vec{x}$, $\vec{y}$, $\vec{z}$, ... for vectors.
Let $\AAA(\vec{z})$ be the set of \emph{closed-form expressions} over the variables $\vec{z}$.
So $\AAA(\vec{z})$ may, for example, be defined to be the smallest set containing all expressions built from $\vec{z}$, integer constants, and binary function symbols $\{{+},{-},{\cdot},{/},{\exp}\}$ for addition, subtraction, multiplication, division, and exponentiation.
However, there is no widely accepted definition of ``closed forms'', and the results of the current paper are independent of the precise definition of $\AAA(\vec{z})$ (which may use other function symbols).
Thus, we leave its definition open to avoid restricting our results unnecessarily.
We consider loops of the form
\begin{equation}
  \label{loop}\tag{\ensuremath{\TT_{loop}}}
  \mWhile{\phi}{\vec{x} \assign \vec{a}}
\end{equation}
where $\vec{x}$ is a vector of $d$ pairwise different variables that range over the integers, the loop condition $\phi \in \Prop{\AAA(\vec{x})}$ (which we also call \emph{guard}) is a finite quantifier-free first-order formula over the atoms $\{p>0 \mid p \in \AAA(\vec{x})\}$, and $\vec{a} \in \AAA(\vec{x})^d$ such that the function\footnote{i.e., the (anonymous) function that maps $\vec{x}$ to $\vec{a}$} $\vec{x} \mapsto \vec{a}$ maps integers to integers.
$\MLoop$ denotes the set of all such loops.

We identify \ref{loop} and the pair $\langle \phi, \vec{a} \rangle$.
Moreover, we identify $\vec{a}$ and the function $\vec{x} \mapsto \vec{a}$, where we sometimes write $\vec{a}(\vec{x})$ to make the variables $\vec{x}$ explicit.
We use the same convention for other (vectors of) expressions.
Similarly, we identify the formula $\phi(\vec{x})$ (or just $\phi$) with the predicate $\vec{x} \mapsto \phi$.
We use the standard integer-arithmetic semantics for the symbols occurring in formulas.

Throughout this paper, let $n$ be a designated variable ranging over $\NN = \{0,1,2,\ldots\}$ and let:
\[
  \vec{a} \Def \mat{a_1\\\ldots\\a_d} \hspace{0.83em} \vec{x} \Def \mat{x_1\\\ldots\\x_d} \hspace{0.83em} \vec{x}' \Def \mat{x'_1\\\ldots\\x'_d} \hspace{0.83em} \vec{y} \Def \mat{\vec{x}\\n\\\vec{x}'}
\]
Intuitively, the variable $n$ represents the number of loop iterations and $\vec{x}'$ corresponds to the values of the program variables $\vec{x}$ after $n$ iterations.

$\ref{loop}$ induces a relation ${\longrightarrow_{\ref{loop}}}$ on $\ZZ^d$:
\[
  \phi(\vec{x}) \land \vec{x}' = \vec{a}(\vec{x}) \iff \vec{x} \longrightarrow_{\ref{loop}} \vec{x}'
\]

\section{Existing Acceleration Techniques}
\label{sec:monotonic}

In the following (up to and including \Cref{sec:accel}), our goal is to \emph{accelerate} \ref{loop}, i.e., to find a formula $\psi \in \Prop{\AAA(\vec{y})}$ such that
\begin{equation}
  \label{eq:equiv}\tag{equiv}
  \psi \iff \vec{x} \longrightarrow_{\ref{loop}}^n \vec{x}' \qquad \text{for all } n>0.
\end{equation}
To see why we use $\AAA(\vec{y})$ instead of, e.g., polynomials, consider the loop
\begin{equation}
  \label{loop:exp}
  \tag{\ensuremath{\TT_{exp}}}
  \mWhile{x_1>0}{\mat{x_1\\x_2} \assign \mat{x_1-1\\2 \cdot x_2}.}
\end{equation}
Here, an acceleration technique synthesizes, e.g., the formula
\begin{equation}
  \label{psi:exp}
  \tag{\ensuremath{\psi_{exp}}}
  \mat{x_1'\\x_2'} = \mat{x_1-n\\2^n \cdot x_2} \land x_1 - n + 1 > 0,
\end{equation}
where $\mat{x_1-n\\2^n \cdot x_2}$ is equivalent to the value of $\mat{x_1\\x_2}$ after $n$ iterations, and the inequation $x_1 - n + 1 > 0$ ensures that \ref{loop:exp} can be executed at least $n$ times.
Clearly, the growth of $x_2$ cannot be captured by a polynomial, i.e., even the behavior of quite simple loops is beyond the expressiveness of polynomial arithmetic.

In practice, one can restrict our approach to weaker classes of expressions to ease automation, but the presented results are independent of such considerations.

Some acceleration techniques cannot guarantee \eqref{eq:equiv}, but the resulting formula is an under-approximation of \ref{loop}, i.e., we have
\begin{equation}
  \label{eq:approx}\tag{approx}
  \psi \implies \vec{x} \longrightarrow_{\ref{loop}}^n \vec{x}' \qquad \text{for all } n>0.
\end{equation}
If \eqref{eq:equiv} holds, then  $\psi$ is \emph{equivalent} to \ref{loop}.
Similarly, if \eqref{eq:approx} holds, then $\psi$ \emph{approximates} \ref{loop}.\footnote{While there are also over-approximating acceleration techniques (see \Cref{sec:related-accel}), in this paper we are interested only in under-approximations.}

\begin{definition}[Acceleration Technique]
  \label{def:accel}
  An \emph{acceleration technique} is a partial function
  \[
    \accelerate: \MLoop \rightharpoonup \Prop{\AAA(\vec{y})}.
  \]
  It is \emph{sound} if the formula $\accelerate(\TT)$ approximates $\TT$ for all $\TT \in \dom(\accelerate)$.
  It is \emph{exact} if the formula $\accelerate(\TT)$ is equivalent to $\TT$ for all $\TT \in \dom(\accelerate)$.
\end{definition}

We now recall several existing acceleration techniques.
In \Cref{sec:integration} we will see how these techniques can be combined in a modular way.
All of them first compute a \emph{closed form} $\vec{c} \in \AAA(\vec{x},n)^d$ for the values of the program variables after $n$ iterations.
\begin{definition}[Closed Form]
  \label{def:closed}
  We call $\vec{c} \in \AAA(\vec{x},n)^d$ a \emph{closed form} of \ref{loop} if
  \[
    \forall \vec{x} \in \ZZ^d, n \in \NN.\ \vec{c} = \vec{a}^n(\vec{x}).
  \]
  Here, $\vec{a}^n$ is the $n$-fold application of $\vec{a}$, i.e., $\vec{a}^0(\vec{x}) = \vec{x}$ and $\vec{a}^{n+1}(\vec{x}) = \vec{a}(\vec{a}^n(\vec{x}))$.
\end{definition}

To find closed forms, one tries to solve the system of recurrence equations $\vec{x}^{(n)} = \vec{a}(\vec{x}^{(n-1)})$ with the initial condition $\vec{x}^{(0)} = \vec{x}$.
In the sequel, we assume that we can represent $\vec{a}^n(\vec{x})$ in closed form.
Note that one can always do so if $\vec{a}(\vec{x}) = A\vec{x} + \vec{b}$ with $A \in \ZZ^{d \times d}$ and $\vec{b} \in \ZZ^d$, i.e., if $\vec{a}$ is linear.
To this end, one considers the matrix $B \Def \mat{A & \vec{b} \\ \vec{0}^T & 1}$ and computes its Jordan normal form $J$ where $B = T^{-1}JT$ and $J$ is a block diagonal matrix (which has complex entries if $B$ has complex eigenvalues).
Then the closed form for $J^n$ can be given directly (see, e.g., \cite{Ouaknine15}), and $\vec{a}^n(\vec{x})$ is equal to the first $d$ components of $T^{-1}J^nT\mat{\vec{x}\\1}$.
Moreover, one can compute a closed form if $\vec{a} = \mat{c_1 \cdot x_1 + p_1\\\ldots\\c_d \cdot x_d + p_d}$ where $c_i \in \NN$ and each $p_i$ is a polynomial over $x_1,\ldots,x_{i-1}$ \cite{polyloopsLPAR20,polyloopsSAS20}.

\subsection{Acceleration via Decrease \emph{or} Increase}
\label{subsec:kroening}

The first acceleration technique discussed in this section exploits the following observation:
If $\phi(\vec{a}(\vec{x}))$ implies $\phi(\vec{x})$ and if $\phi(\vec{a}^{n-1}(\vec{x}))$ holds, then the loop condition $\phi$ of \ref{loop} is satisfied throughout (at least) $n$ loop iterations.
So in other words, it requires that the indicator function (or characteristic function) $\charfun{\phi}: \ZZ^d \to \{0,1\}$ of $\phi$ with $\charfun{\phi}(\vec{x}) = 1 \iff \phi(\vec{x})$ is monotonically decreasing w.r.t.\ $\vec{a}$, i.e., $\charfun{\phi}(\vec{x}) \geq \charfun{\phi}(\vec{a}(\vec{x}))$.
\begin{theorem}[Acceleration via Monotonic Decrease \cite{underapprox15}]
  \label{thm:one-way}
  If
  \[
    \phi(\vec{a}(\vec{x})) \implies \phi(\vec{x}),
  \]
  then the following acceleration technique is exact:
  \[
    \ref{loop} \mapsto \vec{x}' = \vec{a}^n(\vec{x}) \land \phi(\vec{a}^{n-1}(\vec{x}))
  \]
\end{theorem}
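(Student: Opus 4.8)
The plan is to establish both directions of the biconditional in \eqref{eq:equiv} for the candidate formula $\psi \Def \vec{x}' = \vec{a}^n(\vec{x}) \land \phi(\vec{a}^{n-1}(\vec{x}))$, fixing an arbitrary $n > 0$. Since $\vec{c} = \vec{a}^n(\vec{x})$ is a closed form, the equation $\vec{x}' = \vec{a}^n(\vec{x})$ is just a reformulation of $\vec{x}'$ being the $n$-fold iterate, so the real content is relating the single clause $\phi(\vec{a}^{n-1}(\vec{x}))$ to the requirement that $\phi$ hold at all intermediate points $\vec{a}^0(\vec{x}), \vec{a}^1(\vec{x}), \ldots, \vec{a}^{n-1}(\vec{x})$. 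Unfolding the definition of $\longrightarrow_{\ref{loop}}$, one sees that $\vec{x} \longrightarrow_{\ref{loop}}^n \vec{x}'$ holds iff $\vec{x}' = \vec{a}^n(\vec{x})$ and $\phi(\vec{a}^i(\vec{x}))$ holds for every $i \in \{0, 1, \ldots, n-1\}$.

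For the $\Leftarrow$ direction (which gives exactness beyond mere soundness), I would assume $\vec{x} \longrightarrow_{\ref{loop}}^n \vec{x}'$; then in particular $\phi(\vec{a}^{n-1}(\vec{x}))$ holds, which is exactly the nontrivial conjunct of $\psi$, and $\vec{x}' = \vec{a}^n(\vec{x})$ gives the other conjunct. For the $\Rightarrow$ direction, I would assume $\psi$, so $\vec{x}' = \vec{a}^n(\vec{x})$ and $\phi(\vec{a}^{n-1}(\vec{x}))$, and prove by downward (or upward) induction that $\phi(\vec{a}^i(\vec{x}))$ holds for all $i \in \{0, \ldots, n-1\}$. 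The base case $i = n-1$ is the hypothesis. For the inductive step, suppose $\phi(\vec{a}^{i+1}(\vec{x}))$ holds for some $i < n-1$; instantiating the theorem's premise $\phi(\vec{a}(\vec{x})) \implies \phi(\vec{x})$ at the point $\vec{a}^i(\vec{x})$ in place of $\vec{x}$ yields $\phi(\vec{a}(\vec{a}^i(\vec{x}))) \implies \phi(\vec{a}^i(\vec{x}))$, i.e., $\phi(\vec{a}^{i+1}(\vec{x})) \implies \phi(\vec{a}^i(\vec{x}))$, so $\phi(\vec{a}^i(\vec{x}))$ holds. This closes the induction, and together with $\vec{x}' = \vec{a}^n(\vec{x})$ we conclude $\vec{x} \longrightarrow_{\ref{loop}}^n \vec{x}'$.

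The main (and only) subtlety is the reindexing in the inductive step: one has to instantiate the universally quantified premise $\phi(\vec{a}(\vec{x})) \implies \phi(\vec{x})$ at $\vec{a}^i(\vec{x})$ rather than at $\vec{x}$, and then simplify $\vec{a}(\vec{a}^i(\vec{x})) = \vec{a}^{i+1}(\vec{x})$ using the recursive definition of $\vec{a}^{n}$. Everything else is a direct unfolding of the definition of the $n$-fold relational composition $\longrightarrow_{\ref{loop}}^n$. I expect no serious obstacle; the proof is essentially a one-line induction once the semantics of iterating \ref{loop} is spelled out.
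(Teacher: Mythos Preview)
Your proposal is correct and follows essentially the same inductive idea as the paper: use the premise $\phi(\vec{a}(\vec{x})) \implies \phi(\vec{x})$, instantiated at $\vec{a}^i(\vec{x})$, to propagate $\phi$ backwards from index $n-1$ down to $0$, with the exactness direction being trivial. The only cosmetic difference is that the paper does not prove \Cref{thm:one-way} directly but instead proves the more general conditional version (\Cref{thm:conditional-one-way}) by induction on the number of steps $m$ and recovers \Cref{thm:one-way} as the special case $\check\phi \equiv \top$.
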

\noindent
We will prove the more general \Cref{thm:conditional-one-way} in \Cref{sec:conditional}.

So for example, \Cref{thm:one-way} accelerates \ref{loop:exp} to \ref{psi:exp}.
However, the requirement $\phi(\vec{a}(\vec{x})) \implies \phi(\vec{x})$ is often violated in practice.
To see this, consider the loop
\begin{equation}
  \label{eq:conditional-ex}\tag{\ensuremath{\TT_{non\text{-}dec}}}
  \resizebox{0.355\textwidth}{!}{$\mWhile{x_1 > 0 \land x_2 > 0}{\mat{x_1\\x_2} \assign \mat{x_1 - 1\\x_2 + 1}}$}.
\end{equation}
It cannot be accelerated with \Cref{thm:one-way} as
\[
  x_1-1 > 0 \land x_2+1 > 0 \centernot\implies x_1 > 0 \land x_2 > 0.
\]

A dual acceleration technique to \Cref{thm:one-way} is obtained by ``reversing'' the implication in the prerequisites of the theorem, i.e., by requiring
\[
  \phi(\vec{x}) \implies \phi(\vec{a}(\vec{x})).
\]
So the resulting dual acceleration technique applies iff $\phi$ is a loop invariant of \ref{loop}.\footnote{We call a formula $\delta$ a \emph{loop invariant} of a loop \ref{loop} if $\phi(\vec{x}) \land \delta(\vec{x}) \implies \delta(\vec{a}(\vec{x}))$ is valid.}
Then $\{\vec{x} \in \ZZ^d \mid \phi(\vec{x})\}$ is a \emph{recurrent set} \cite{rupak08} (see also \Cref{subsec:related-nonterm}) of \ref{loop}.
In other words, this acceleration technique applies if $\charfun{\phi}$ is monotonically increasing w.r.t.\ $\vec{a}$.

\begin{theorem}[Acceleration via Monotonic Increase]
  \label{thm:recurrent}
  If
  \[
    \phi(\vec{x}) \implies \phi(\vec{a}(\vec{x})),
  \]
  then the following acceleration technique is exact:
  \[
    \ref{loop} \mapsto \vec{x}' = \vec{a}^n(\vec{x}) \land \phi(\vec{x})
  \]
\end{theorem}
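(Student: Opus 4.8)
The plan is to establish exactness by proving the biconditional $\psi \iff \vec{x} \longrightarrow_{\ref{loop}}^n \vec{x}'$ for every fixed $n > 0$, where $\psi$ abbreviates the formula $\vec{x}' = \vec{a}^n(\vec{x}) \land \phi(\vec{x})$ produced by the acceleration technique. First I would unfold the definition of the $n$-fold relation and record the auxiliary characterization
\[
  \vec{x} \longrightarrow_{\ref{loop}}^n \vec{x}' \quad\Longleftrightarrow\quad \vec{x}' = \vec{a}^n(\vec{x}) \;\land\; \bigl(\phi(\vec{a}^i(\vec{x})) \text{ for all } i \in \{0, \dots, n-1\}\bigr),
\]
which follows by a straightforward induction on $n$ from the single-step definition $\vec{x} \longrightarrow_{\ref{loop}} \vec{x}' \iff \phi(\vec{x}) \land \vec{x}' = \vec{a}(\vec{x})$, peeling off either the first or the last iteration in the step case.

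Given this characterization, the direction $\vec{x} \longrightarrow_{\ref{loop}}^n \vec{x}' \implies \psi$ is immediate and does not even use the hypothesis: the right-hand side of the characterization yields $\vec{x}' = \vec{a}^n(\vec{x})$, and taking $i = 0$ yields $\phi(\vec{a}^0(\vec{x})) = \phi(\vec{x})$, so $\psi$ holds.

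For the converse, assume $\psi$, i.e., $\vec{x}' = \vec{a}^n(\vec{x})$ and $\phi(\vec{x})$. The key step is a second induction, this time on $i$, proving that $\phi(\vec{a}^i(\vec{x}))$ holds for every $i \geq 0$: the base case $i = 0$ is exactly the assumption $\phi(\vec{x})$, and the inductive step instantiates the prerequisite $\phi(\vec{x}) \implies \phi(\vec{a}(\vec{x}))$ at $\vec{a}^i(\vec{x}) \in \ZZ^d$ (legitimate since $\vec{a}$ maps integers to integers), using $\vec{a}^{i+1}(\vec{x}) = \vec{a}(\vec{a}^i(\vec{x}))$ from \Cref{def:closed}. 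In particular $\phi(\vec{a}^i(\vec{x}))$ holds for all $i \in \{0, \dots, n-1\}$, so the characterization together with $\vec{x}' = \vec{a}^n(\vec{x})$ gives $\vec{x} \longrightarrow_{\ref{loop}}^n \vec{x}'$, completing the biconditional.

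There is no genuine obstacle in this argument; the two ingredients are both elementary inductions. The only point that warrants care is that the hypothesis $\phi(\vec{x}) \implies \phi(\vec{a}(\vec{x}))$ must be read as a validity universally quantified over $\vec{x} \in \ZZ^d$, which is what makes its repeated instantiation at the iterates $\vec{a}^i(\vec{x})$ sound — exactly the monotonic increase of the indicator function $\charfun{\phi}$ w.r.t.\ $\vec{a}$ described informally before the theorem. The statement is moreover subsumed by the conditional acceleration technique developed in \Cref{sec:conditional} (whose dual specialises to it), but the direct proof above is entirely self-contained.
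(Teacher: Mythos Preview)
Your argument is correct and matches the paper's approach: the paper defers \Cref{thm:recurrent} to the conditional version \Cref{thm:conditional-recurrent}, whose proof specialised to $\check\phi = \top$ is exactly your induction showing $\phi(\vec{a}^i(\vec{x}))$ for all $i$, together with the trivial converse direction. One minor wording slip: the result in \Cref{sec:conditional} that subsumes \Cref{thm:recurrent} is its \emph{conditional} generalisation (\Cref{thm:conditional-recurrent}), not its ``dual'' (which would be monotonic decrease).
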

\noindent
We will prove the more general \Cref{thm:conditional-recurrent} in \Cref{sec:conditional}.

\begin{example}
As a minimal example, \Cref{thm:recurrent} accelerates
\begin{equation}
  \label{ex:recurrent}\tag{\ensuremath{\TT_{inc}}}
  \mWhile{x > 0}{x \assign x+1}
\end{equation}
to $x' = x + n \land x > 0$.
\end{example}

\subsection{Acceleration via Decrease \emph{and} Increase}
\label{subsec:three-way}

Both acceleration techniques presented so far have been generalized in \cite{fmcad19}.
\begin{theorem}[Acceleration via Monotonicity \cite{fmcad19}]
  \label{thm:three-way}
  If
  \begin{align*}
    \phi(\vec{x}) \iff& \phi_1(\vec{x}) \land \phi_2(\vec{x}) \land \phi_3(\vec{x}),\\
    \phi_1(\vec{x}) \implies& \phi_1(\vec{a}(\vec{x})),\\
    \phi_1(\vec{x}) \land \phi_2(\vec{a}(\vec{x})) \implies& \phi_2(\vec{x}), \qquad\qquad \text{and} \\
    \phi_1(\vec{x}) \land \phi_2(\vec{x}) \land \phi_3(\vec{x}) \implies& \phi_3(\vec{a}(\vec{x})),
  \end{align*}
  then the following acceleration technique is exact:
  \[
    \ref{loop} \mapsto \vec{x}' = \vec{a}^n(\vec{x}) \land \phi_1(\vec{x}) \land \phi_2(\vec{a}^{n-1}(\vec{x})) \land \phi_3(\vec{x})
  \]
\end{theorem}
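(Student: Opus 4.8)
The plan is to show the two directions of \eqref{eq:equiv} for the proposed formula $\psi \Def \vec{x}' = \vec{a}^n(\vec{x}) \land \phi_1(\vec{x}) \land \phi_2(\vec{a}^{n-1}(\vec{x})) \land \phi_3(\vec{x})$, using the three monotonicity hypotheses to bridge between ``$\phi$ holds at iteration $0$'' and ``$\phi$ holds throughout iterations $0,\ldots,n-1$''. The conjunct $\vec{x}' = \vec{a}^n(\vec{x})$ is handled separately and trivially, since by \Cref{def:closed} it just states that $\vec{x}'$ records the value of $\vec{x}$ after $n$ steps; the real content is that the loop condition survives for $n$ iterations iff the three side conditions on $\vec{x}$ hold.

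First I would establish, as a lemma, that under the four hypotheses of the theorem we have for every $\vec{x} \in \ZZ^d$ and every $n > 0$:
\[
  \phi_1(\vec{x}) \land \phi_2(\vec{a}^{n-1}(\vec{x})) \land \phi_3(\vec{x}) \iff \bigwedge_{i=0}^{n-1} \phi(\vec{a}^i(\vec{x})).
\]
The direction ``$\Leftarrow$'' is immediate: taking $i=0$ gives $\phi(\vec{x})$, hence $\phi_1(\vec{x})$ and $\phi_3(\vec{x})$; taking $i = n-1$ gives $\phi(\vec{a}^{n-1}(\vec{x}))$, hence $\phi_2(\vec{a}^{n-1}(\vec{x}))$. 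For ``$\Rightarrow$'', I would prove separately that each of $\phi_1$, $\phi_2$, $\phi_3$ holds along the whole prefix. For $\phi_1$: from $\phi_1(\vec{x})$ and the invariance hypothesis $\phi_1(\vec{x}) \implies \phi_1(\vec{a}(\vec{x}))$, a straightforward induction gives $\phi_1(\vec{a}^i(\vec{x}))$ for all $i \geq 0$. For $\phi_2$: knowing already that $\phi_1(\vec{a}^i(\vec{x}))$ holds for all $i$, and given $\phi_2(\vec{a}^{n-1}(\vec{x}))$, I would run a \emph{downward} induction on $i$ from $n-1$ to $0$ using the third hypothesis $\phi_1(\vec{x}) \land \phi_2(\vec{a}(\vec{x})) \implies \phi_2(\vec{x})$ instantiated at $\vec{a}^i(\vec{x})$, to obtain $\phi_2(\vec{a}^i(\vec{x}))$ for all $0 \leq i \leq n-1$. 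For $\phi_3$: now that $\phi_1(\vec{a}^i(\vec{x}))$ and $\phi_2(\vec{a}^i(\vec{x}))$ are known for all $i$ in range, and given $\phi_3(\vec{x})$, an \emph{upward} induction on $i$ using the fourth hypothesis $\phi_1 \land \phi_2 \land \phi_3 \implies \phi_3(\vec{a}(\vec{x}))$ yields $\phi_3(\vec{a}^i(\vec{x}))$ for all $0 \leq i \leq n-1$. Combining, $\phi(\vec{a}^i(\vec{x})) = \phi_1 \land \phi_2 \land \phi_3$ holds at each $\vec{a}^i(\vec{x})$, proving the lemma.

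Given the lemma, the theorem follows quickly. By unfolding the definition of $\longrightarrow_{\ref{loop}}$, one has $\vec{x} \longrightarrow_{\ref{loop}}^n \vec{x}'$ iff $\vec{x}' = \vec{a}^n(\vec{x})$ and $\phi(\vec{a}^i(\vec{x}))$ holds for $i = 0, \ldots, n-1$ (this is a routine induction on $n$ that I would either cite or dispatch in one line). By the lemma, the latter conjunction is equivalent to $\phi_1(\vec{x}) \land \phi_2(\vec{a}^{n-1}(\vec{x})) \land \phi_3(\vec{x})$, so $\vec{x} \longrightarrow_{\ref{loop}}^n \vec{x}' \iff \psi$, which is exactly exactness. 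I would then note that \Cref{thm:one-way} and \Cref{thm:recurrent} are the special cases $\phi_1 = \phi_3 = \top$ (resp.\ $\phi_2 = \phi_3 = \top$).

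The main obstacle is getting the order and the hypotheses of the three nested inductions exactly right: the $\phi_2$-induction must go \emph{downward} and is allowed to assume $\phi_1$ everywhere but not $\phi_3$; the $\phi_3$-induction goes \emph{upward} and may assume both $\phi_1$ and $\phi_2$ everywhere along the prefix. Making the dependency structure explicit — first nail down $\phi_1$ on all of $\NN$, then $\phi_2$ on the prefix, then $\phi_3$ on the prefix — is what keeps the argument from circularity, and is the subtle point that the hypotheses of \Cref{thm:three-way} were evidently designed to permit. Everything else (the semantics unfolding, the base cases, the specialization to the earlier theorems) is mechanical.
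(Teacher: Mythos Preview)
Your proof is correct. The three nested inductions are set up with the right dependencies (establish $\phi_1$ on all of $\NN$, then $\phi_2$ on the prefix by downward induction, then $\phi_3$ on the prefix by upward induction), and the lemma you state is exactly the equivalence one needs to conclude exactness.

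However, the paper does \emph{not} prove \Cref{thm:three-way} directly in this way. Instead, the paper defers the proof and obtains \Cref{thm:three-way} as an immediate consequence of the general calculus: it proves (i) that the calculus ${\leadsto_e}$ preserves exactness (\Cref{cor:calculus-sound}), (ii) that the conditional acceleration techniques of \Cref{thm:conditional-one-way,thm:conditional-recurrent} are exact, and (iii) that any application of \emph{acceleration via monotonicity} can be simulated by at most three ${\leadsto_e}$-steps using those conditional techniques (\Cref{thm:simulate-three-way}). The induction content is still there, but it lives inside the proofs of \Cref{thm:conditional-one-way,thm:conditional-recurrent}, each of which handles one monotonicity direction in isolation. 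Your approach is more elementary and self-contained; the paper's approach is deliberately indirect because one of its points is precisely that \Cref{thm:three-way} is subsumed by the calculus, so there is no need for a bespoke argument.
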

\makeproof{thm:three-way}{Immediate consequence of \Cref{cor:calculus-sound} and \Cref{thm:simulate-three-way}, which will be proven in \Cref{sec:integration,sec:conditional}.}

Here, $\phi_1$ and $\phi_3$ are again invariants of the loop.
Thus, as in \Cref{thm:recurrent} it suffices to require that they hold before entering the loop.
On the other hand, $\phi_2$ needs to satisfy a similar condition as in \Cref{thm:one-way}, and thus it suffices to require that $\phi_2$ holds before the last iteration.
In such cases, i.e., if
\[
  \phi_1(\vec{x}) \land \phi_2(\vec{a}(\vec{x})) \implies \phi_2(\vec{x})
\]
is valid, we also say that $\phi_2$ is a \emph{converse invariant} (w.r.t.\ $\phi_1$).
It is easy to see that \Cref{thm:three-way} is equivalent to \Cref{thm:one-way} if $\phi_1 \equiv \phi_3 \equiv \top$ (where $\top$ denotes logical truth) and it is equivalent to \Cref{thm:recurrent} if $\phi_2 \equiv \phi_3 \equiv \top$.

\begin{example}
With \Cref{thm:three-way}, \ref{eq:conditional-ex} can be accelerated to
\begin{equation}
  \label{psi:conditional-ex}
  \tag{\ensuremath{\psi_{non\text{-}dec}}}
  \resizebox{0.355\textwidth}{!}{$\mat{x_1'\\x_2'} = \mat{x_1-n\\x_2+n} \land x_2 > 0 \land x_1-n+1 > 0$}
\end{equation}
by choosing $\phi_1 \Def x_2 > 0$, $\phi_2 \Def x_1 > 0$, and $\phi_3 \Def \top$.
\end{example}

\Cref{thm:three-way} naturally raises the question: Why do we need \emph{two} invariants?
To see this, consider a restriction of \Cref{thm:three-way} where $\phi_3 \Def \top$.
It would fail for a loop like
\begin{equation}
  \label{eq:three-way-ex}
  \tag{\ensuremath{\TT_{2\text{-}invs}}}
  \resizebox{0.37\textwidth}{!}{$\mWhile{x_1 > 0 \land x_2 > 0}{\mat{x_1\\x_2} \assign \mat{x_1+x_2\\x_2-1}}$}
\end{equation}
which can easily be handled by \Cref{thm:three-way} (by choosing $\phi_1 \Def \top$, $\phi_2 \Def x_2 > 0$, and $\phi_3 \Def x_1 > 0$).
The problem is that the converse invariant $x_2 > 0$ is needed to prove invariance of $x_1 > 0$.
Similarly, a restriction of \Cref{thm:three-way} where $\phi_1 \Def \top$ would fail for the following variant of \ref{eq:three-way-ex}:
\[
  \mWhile{x_1 > 0 \land x_2 > 0}{\mat{x_1\\x_2} \assign \mat{x_1-x_2\\x_2+1}}
\]
Here, the problem is that the invariant $x_2 > 0$ is needed to prove converse invariance of $x_1 > 0$.

\subsection{Acceleration via Metering Functions}
\label{sec:metering}

Another approach for loop acceleration uses \emph{metering functions}, a variation of classical \emph{ranking functions} from termination and complexity analysis \cite{ijcar16}.
While ranking functions give rise to \emph{upper} bounds on the runtime of loops, metering functions provide \emph{lower} runtime bounds, i.e., the definition of a metering function $\mf: \ZZ^d \to \QQ$ ensures that for each $\vec{x} \in \ZZ^d$, the loop under consideration can be applied at least $\lceil \mf(\vec{x}) \rceil$ times.

\begin{definition}[Metering Function \cite{ijcar16}]
  \label{def:metering}
  We call a function $\mf: \ZZ^d \to \QQ$ a \emph{metering function} if the following holds:
  \begin{align}
    \phi(\vec{x}) & \implies \mf(\vec{x}) - \mf(\vec{a}(\vec{x})) \leq 1 \text{ and} \notag \\
    \neg\phi(\vec{x}) & \implies \mf(\vec{x}) \leq 0 \label{eq:mf-bounded} \tag{mf-bounded}
  \end{align}
\end{definition}

We can use metering functions to accelerate loops as follows:

\begin{theorem}[Acceleration via Metering Functions \cite{ijcar16,journal}]
  \label{thm:meter}
  Let $\mf$ be a metering function for \ref{loop}.
  Then the following acceleration technique is sound:
  \[
    \ref{loop} \mapsto \vec{x}' = \vec{a}^n(\vec{x}) \land n < \mf(\vec{x}) + 1
  \]
\end{theorem}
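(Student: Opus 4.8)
The plan is to show that the acceleration technique $\ref{loop} \mapsto \vec{x}' = \vec{a}^n(\vec{x}) \land n < \mf(\vec{x}) + 1$ is sound, i.e., that the produced formula \emph{approximates} \ref{loop}: whenever $\vec{x}' = \vec{a}^n(\vec{x})$ and $n < \mf(\vec{x}) + 1$ hold for some $n > 0$, we must exhibit a run $\vec{x} \longrightarrow_{\ref{loop}}^n \vec{x}'$. Since $\vec{x}' = \vec{a}^n(\vec{x})$ already pins down the final state correctly (by \Cref{def:closed}, $\vec{a}^n(\vec{x})$ is the $n$-fold application of the update), the only thing to verify is that the loop condition $\phi$ holds at each of the intermediate states $\vec{a}^0(\vec{x}), \vec{a}^1(\vec{x}), \ldots, \vec{a}^{n-1}(\vec{x})$; this is exactly the statement $\vec{x} \longrightarrow_{\ref{loop}}^n \vec{x}'$ unfolded via the definition of $\longrightarrow_{\ref{loop}}$.

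The key step is a telescoping argument using the metering-function inequality. First I would establish, by induction on $i$, that $\phi(\vec{a}^j(\vec{x}))$ holds for all $j < i$ implies $\mf(\vec{x}) - \mf(\vec{a}^i(\vec{x})) \leq i$: the base case $i = 0$ is trivial, and the step adds the inequality $\mf(\vec{a}^i(\vec{x})) - \mf(\vec{a}^{i+1}(\vec{x})) \leq 1$ (which is licensed by the first condition of \Cref{def:metering} applied at $\vec{a}^i(\vec{x})$, provided $\phi(\vec{a}^i(\vec{x}))$ holds). Now suppose for contradiction that some intermediate state fails $\phi$; let $i < n$ be minimal with $\neg\phi(\vec{a}^i(\vec{x}))$. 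Then $\phi$ holds at $\vec{a}^0(\vec{x}),\ldots,\vec{a}^{i-1}(\vec{x})$, so the telescoped bound gives $\mf(\vec{x}) \leq \mf(\vec{a}^i(\vec{x})) + i$, and by \eqref{eq:mf-bounded} (the second clause, since $\neg\phi(\vec{a}^i(\vec{x}))$) we get $\mf(\vec{a}^i(\vec{x})) \leq 0$, hence $\mf(\vec{x}) \leq i$. But $i < n$ and the hypothesis $n < \mf(\vec{x}) + 1$ give $i \leq n - 1 < \mf(\vec{x})$, i.e., $\mf(\vec{x}) > i$ — a contradiction. Therefore $\phi$ holds at every intermediate state, and by assembling the single-step relations we obtain $\vec{x} \longrightarrow_{\ref{loop}}^n \vec{x}'$.

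The main subtlety — though it is minor — is being careful about the direction of the inequalities and the fractional/ceiling aspect of $\mf$: because $\mf$ is $\QQ$-valued, one should note that $n < \mf(\vec{x}) + 1$ together with $i < n$ and $i \in \NN$ really does yield $i < \mf(\vec{x})$ strictly (via $i \leq n-1 < \mf(\vec{x})$), which is what clashes with $\mf(\vec{x}) \leq i$. I do not expect any genuine obstacle here; the argument is a standard telescoping/ranking-style estimate, and the theorem claims only soundness (an under-approximation), not exactness, so there is no need to prove a converse direction.
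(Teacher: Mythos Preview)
Your argument is correct and uses the same core idea as the paper: telescope the metering-function decrease to bound $\mf(\vec{x}) - \mf(\vec{a}^i(\vec{x})) \leq i$, then invoke the contrapositive of \eqref{eq:mf-bounded} to force $\phi$ at each intermediate state. The only organizational difference is that the paper defers to the more general conditional version (\Cref{thm:conditional-metering}) and proves that by forward induction on the number of steps, whereas you prove the unconditional statement directly via a minimal-counterexample contradiction; both routes unwind to the same telescoping estimate.
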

\noindent
We will prove the more general \Cref{thm:conditional-metering} in \Cref{sec:conditional}.
In contrast to \cite[Thm.\ 3.8]{journal} and \cite[Thm.\ 7]{ijcar16}, the acceleration technique from \Cref{thm:meter} does not conjoin the loop condition $\phi$ to the result, which turned out to be superfluous.
The reason is that $0 < n < \mf(\vec{x}) + 1$ implies $\phi$ due to \eqref{eq:mf-bounded}.

\begin{example}
Using the metering function $(x_1,x_2) \mapsto x_1$, \Cref{thm:meter} accelerates \ref{loop:exp} to
\[
  \left(\mat{x_1'\\x_2'} = \mat{x_1-n\\2^n \cdot x_2} \land n < x_1 + 1\right) \equiv \ref{psi:exp}.
\]
\end{example}

However, synthesizing non-trivial (i.e., non-constant) metering functions is challenging.
Moreover, unless the number of iterations of \ref{loop} equals $\lceil \mf(\vec{x}) \rceil$ for all $\vec{x} \in \ZZ^d$, \emph{acceleration via metering functions} is not exact.

\emph{Linear} metering functions can be synthesized via Farkas' Lemma and SMT solving \cite{ijcar16}.
However, many loops have only trivial linear metering functions.
To see this, reconsider \ref{eq:conditional-ex}.
Here, $(x_1,x_2) \mapsto x_1$ is not a metering function as \ref{eq:conditional-ex} cannot be iterated at least $x_1$ times if $x_2 \leq 0$.
Thus, \cite{journal} proposes a refinement of \cite{ijcar16} based on metering functions of the form $\vec{x} \mapsto \charfun{\xi}(\vec{x}) \cdot f(\vec{x})$ where $\xi \in \Prop{\AAA(\vec{x})}$ and $f$ is linear.
With this improvement, the metering function
\[
  (x_1,x_2) \mapsto \charfun{x_2 > 0}(x_2) \cdot x_1
\]
can be used to accelerate \ref{eq:conditional-ex} to
\[
  \mat{x_1'\\x_2'} = \mat{x_1-n\\x_2+n} \land x_2 > 0 \land n < x_1 + 1.
\]

\section{A Calculus for Modular Loop Acceleration}
\label{sec:integration}

All acceleration techniques presented so far are monolithic:
Either they accelerate a loop successfully or they fail completely.
In other words, we cannot \emph{combine} several techniques to accelerate a single loop.
To this end, we now present a calculus that repeatedly applies acceleration techniques to simplify an \emph{acceleration problem} resulting from a loop \ref{loop} until it is \emph{solved} and hence gives rise to a suitable $\psi \in \Prop{\AAA(\vec{y})}$ which approximates or is equivalent to \ref{loop}.
\begin{definition}[Acceleration Problem]
  \label{def:acceleration_problem}
  A tuple
  \[
    \prob{\psi}{\check{\phi}}{\hat{\phi}}{\vec{a}}
  \]
  where $\psi \in \Prop{\AAA(\vec{y})}$, $\check{\phi}, \hat{\phi} \in \Prop{\AAA(\vec{x})}$, and $\vec{a}: \ZZ^d \to \ZZ^d$ is an \emph{acceleration problem}.
  It is \emph{consistent} if $\psi$ approximates $\langle \check{\phi}, \vec{a} \rangle$, \emph{exact} if $\psi$ is equivalent to $\langle \check{\phi}, \vec{a} \rangle$, and \emph{solved} if it is consistent and $\hat{\phi} \equiv \top$.
  The \emph{canonical acceleration problem} of a loop \ref{loop} is
  \[
    \prob{\vec{x}' = \vec{a}^n(\vec{x})}{\top}{\phi(\vec{x})}{\vec{a}}.
  \]
\end{definition}
\begin{example}
  \label{ex:canonical}
  The canonical acceleration problem of \ref{eq:conditional-ex} is
  \[
    \prob{\mat{x_1'\\x_2'} = \mat{x_1-n\\x_2+n}}{\top}{x_1>0 \land x_2>0}{\mat{x_1-1\\x_2+1}}.
  \]
\end{example}

The first component $\psi$ of an acceleration problem $\prob{\psi}{\check{\phi}}{\hat{\phi}}{\vec{a}}$ is the partial result that has been computed so far.
The second component $\check{\phi}$ corresponds to the part of the loop condition that has already been processed successfully.
As our calculus preserves consistency, $\psi$ always approximates $\langle \check{\phi}, \vec{a} \rangle$.
The third component is the part of the loop condition that remains to be processed, i.e., the loop $\langle \hat{\phi},\vec{a} \rangle$ still needs to be accelerated.
The goal of our calculus is to transform a canonical into a solved acceleration problem.

More specifically, whenever we have simplified a canonical acceleration problem
\[
  \prob{\vec{x}' = \vec{a}^n(\vec{x})}{\top}{\phi(\vec{x})}{\vec{a}}
\]
to
\[
  \prob{\psi_1(\vec{y})}{\check{\phi}(\vec{x})}{\hat{\phi}(\vec{x})}{\vec{a}},
\]
then $\phi \equiv \check{\phi} \land \hat{\phi}$ and
\[
  \psi_1 \quad \text{implies} \quad \vec{x} \longrightarrow^n_{\langle \check{\phi}, \vec{a} \rangle} \vec{x}'.
\]
Then it suffices to find some $\psi_2 \in \Prop{\AAA(\vec{y})}$ such that
\begin{equation}
  \label{eq:post}
  \vec{x} \longrightarrow^n_{\langle \check{\phi}, \vec{a} \rangle} \vec{x}' \land \psi_2 \quad \text{implies} \quad \vec{x} \longrightarrow^n_{\langle \hat{\phi}, \vec{a} \rangle} \vec{x}'.
\end{equation}
The reason is that we have ${\longrightarrow_{\langle \check{\phi}, \vec{a} \rangle}} \cap {\longrightarrow_{\langle \hat{\phi}, \vec{a} \rangle}} = {\longrightarrow_{\langle \check \phi \land \hat \phi, \vec{a} \rangle}} = {\longrightarrow_{\langle \phi, \vec{a} \rangle}}$ and thus
\[
  \psi_1 \land \psi_2 \quad \text{implies} \quad \vec{x} \longrightarrow^n_{\langle \phi, \vec{a} \rangle} \vec{x}',
\]
i.e., $\psi_1 \land \psi_2$ approximates \ref{loop}.

Note that the acceleration techniques presented so far would map $\langle \hat{\phi}, \vec{a} \rangle$ to some $\psi_2 \in \Prop{\AAA(\vec{y})}$ such that
\begin{equation}
  \label{eq:current}
  \psi_2 \quad \text{implies} \quad \vec{x} \longrightarrow^n_{\langle \hat{\phi}, \vec{a} \rangle} \vec{x}',
\end{equation}
which does not use the information that we have already accelerated $\langle \check{\phi}, \vec{a} \rangle$.
In \Cref{sec:conditional}, we will adapt all acceleration techniques from \Cref{sec:monotonic} to search for some $\psi_2 \in \Prop{\AAA(\vec{y})}$ that satisfies \eqref{eq:post} instead of \eqref{eq:current}, i.e., we will turn them into \emph{conditional acceleration techniques}.
\begin{definition}[Conditional Acceleration]
  \label{def:cond-accel}
  We call a partial function
  \[
    \accelerate: \MLoop \times \Prop{\AAA(\vec{x})} \rightharpoonup \Prop{\AAA(\vec{y})}
  \]
  a \emph{conditional acceleration technique}.
  It is \emph{sound} if
  \[
    \vec{x} \longrightarrow^n_{\langle \check{\phi}, \vec{a} \rangle} \vec{x}' \land \accelerate(\langle \chi, \vec{a} \rangle,\check{\phi}) \quad \text{implies} \quad \vec{x} \longrightarrow^n_{\langle \chi, \vec{a} \rangle} \vec{x}'
  \]
  for all $(\langle \chi, \vec{a} \rangle,\check{\phi}) \in \dom(\accelerate)$, $\vec{x},\vec{x}' \in \ZZ^d$, and $n > 0$.
  It is \emph{exact} if additionally
  \[
    \vec{x} \longrightarrow^n_{\langle \chi \land \check{\phi}, \vec{a} \rangle} \vec{x}' \quad \text{implies} \quad \accelerate(\langle \chi, \vec{a} \rangle,\check{\phi})
  \]
  for all $(\langle \chi, \vec{a} \rangle,\check{\phi}) \in \dom(\accelerate)$, $\vec{x},\vec{x}' \in \ZZ^d$, and $n > 0$.
\end{definition}

Note that every acceleration technique gives rise to a conditional acceleration technique in a straightforward way (by disregarding the second argument $\check{\phi}$ of $\accelerate$ in \Cref{def:cond-accel}).
Soundness and exactness can be lifted directly to the conditional setting.
\begin{lemma}[Acceleration as Conditional Acceleration]
  \label{thm:accel-as-cond-accel}
  Let $\accelerate_0$ be an acceleration technique following \Cref{def:accel} such that $\accelerate_0(\langle \chi, \vec{a} \rangle) \implies \vec{x}' = \vec{a}^n(\vec{x})$ is valid whenever $\langle \chi, \vec{a} \rangle \in \dom(\accelerate_0)$.
  Then for the conditional acceleration technique $\accelerate$ given by $\accelerate(\TT,\check\phi) \Def \accelerate_0(\TT)$, the following holds:
  \begin{enumerate}
    \item $\accelerate$ is sound if and only if $\accelerate_0$ is sound
    \item $\accelerate$ is exact if and only if $\accelerate_0$ is exact
  \end{enumerate}
\end{lemma}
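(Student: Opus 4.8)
The plan is to unfold \Cref{def:accel,def:cond-accel} and observe that the conditional conditions for $\accelerate$, once specialized to $\check{\phi} \equiv \top$, reduce verbatim to the ordinary conditions for $\accelerate_0$, while the general $\check{\phi}$ case follows from two elementary monotonicity facts about the reachability relation. Concretely, I would repeatedly use that $\vec{x} \longrightarrow^n_{\langle \top, \vec{a} \rangle} \vec{x}'$ holds iff $\vec{x}' = \vec{a}^n(\vec{x})$ (the guard $\top$ imposing no constraint on any of the $n$ steps), that $\langle \chi \land \top, \vec{a} \rangle$ is literally the loop $\langle \chi, \vec{a} \rangle$, and that ${\longrightarrow_{\langle \chi \land \check{\phi}, \vec{a} \rangle}} = {\longrightarrow_{\langle \chi, \vec{a} \rangle}} \cap {\longrightarrow_{\langle \check{\phi}, \vec{a} \rangle}}$, so that $\vec{x} \longrightarrow^n_{\langle \chi \land \check{\phi}, \vec{a} \rangle} \vec{x}'$ implies $\vec{x} \longrightarrow^n_{\langle \chi, \vec{a} \rangle} \vec{x}'$. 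I also fix notation once and for all: write $\psi \Def \accelerate_0(\langle \chi, \vec{a} \rangle)$ for an arbitrary $\langle \chi, \vec{a} \rangle \in \dom(\accelerate_0)$, and note that by the definition $\accelerate(\TT,\check{\phi}) \Def \accelerate_0(\TT)$ we have $(\langle \chi, \vec{a} \rangle, \check{\phi}) \in \dom(\accelerate)$ for every $\check{\phi}$, with value $\psi$; in particular the specialization $\check{\phi} \Def \top$ used below is legitimate.

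For part~(1) I would first do the ``$\Leftarrow$'' direction, which is immediate and uses neither the side condition nor $\check{\phi} \equiv \top$: if $\accelerate_0$ is sound then $\psi$ already implies $\vec{x} \longrightarrow^n_{\langle \chi, \vec{a} \rangle} \vec{x}'$ for all $n > 0$, hence a fortiori so does $\vec{x} \longrightarrow^n_{\langle \check{\phi}, \vec{a} \rangle} \vec{x}' \land \psi$, which is exactly soundness of $\accelerate$. For ``$\Rightarrow$'' I instantiate soundness of $\accelerate$ with $\check{\phi} \Def \top$, obtaining $\vec{x}' = \vec{a}^n(\vec{x}) \land \psi \implies \vec{x} \longrightarrow^n_{\langle \chi, \vec{a} \rangle} \vec{x}'$; here the hypothesis $\psi \implies \vec{x}' = \vec{a}^n(\vec{x})$ lets me replace the antecedent by $\psi$ alone, so $\psi$ approximates $\langle \chi, \vec{a} \rangle$ and $\accelerate_0$ is sound. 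This is the single place where the side condition on $\accelerate_0$ is actually used.

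For part~(2) I would first record that exactness entails soundness in both settings (in \Cref{def:cond-accel} because the extra condition is added on top of soundness; in \Cref{def:accel} because equivalence entails approximation), so by part~(1) it only remains to match up the extra reverse implications. ``$\Leftarrow$'': if $\accelerate_0$ is exact then $\vec{x} \longrightarrow^n_{\langle \chi, \vec{a} \rangle} \vec{x}' \implies \psi$, and chaining this with $\vec{x} \longrightarrow^n_{\langle \chi \land \check{\phi}, \vec{a} \rangle} \vec{x}' \implies \vec{x} \longrightarrow^n_{\langle \chi, \vec{a} \rangle} \vec{x}'$ gives precisely the extra implication required for $\accelerate$. ``$\Rightarrow$'': instantiate the extra condition for $\accelerate$ with $\check{\phi} \Def \top$ to get $\vec{x} \longrightarrow^n_{\langle \chi \land \top, \vec{a} \rangle} \vec{x}' \implies \psi$, and since $\langle \chi \land \top, \vec{a} \rangle = \langle \chi, \vec{a} \rangle$ this is the missing implication for $\accelerate_0$.

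There is no real obstacle here: the statement is an unfolding of the definitions together with the two trivial observations that a $\top$ guard is vacuous and that $\chi \land \top \equiv \chi$. The only ingredient that is not purely formal is the side condition $\accelerate_0(\langle \chi, \vec{a} \rangle) \implies \vec{x}' = \vec{a}^n(\vec{x})$, which enters exactly once (part~(1), ``$\Rightarrow$'') to bridge the gap between ``$\psi$'' and ``$\vec{x}' = \vec{a}^n(\vec{x}) \land \psi$''; conceptually it merely records that $\accelerate_0$, like every acceleration technique considered in the paper, pins the post-state down to the closed form of the $n$-fold iterate. The only care needed is the domain bookkeeping noted above, so that the quantifier ranges in \Cref{def:accel,def:cond-accel} line up when passing between $\accelerate_0$ and $\accelerate$.
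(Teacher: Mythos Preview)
Your proposal is correct and follows essentially the same approach as the paper: both directions of (1) and (2) are handled by dropping the extra hypothesis (for $\accelerate_0 \Rightarrow \accelerate$) and by instantiating $\check\phi \Def \top$ together with the side condition $\psi \implies \vec{x}' = \vec{a}^n(\vec{x})$ (for $\accelerate \Rightarrow \accelerate_0$), and the exactness case uses the monotonicity $\vec{x} \longrightarrow^n_{\langle \chi \land \check\phi, \vec{a} \rangle} \vec{x}' \implies \vec{x} \longrightarrow^n_{\langle \chi, \vec{a} \rangle} \vec{x}'$. Your explicit remark that the side condition is used exactly once (in part~(1), ``$\Rightarrow$'') and your domain bookkeeping are nice additions that the paper leaves implicit.
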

\makeproof{thm:accel-as-cond-accel}{
  For the ``if'' direction of 1., we need to show that
  \[
    \vec{x} \longrightarrow^n_{\langle \check{\phi}, \vec{a} \rangle} \vec{x}' \land \accelerate(\langle \chi, \vec{a} \rangle,\check{\phi}) \quad \text{implies} \quad \vec{x} \longrightarrow^n_{\langle \chi, \vec{a} \rangle} \vec{x}'
  \]
  if $\accelerate_0$ is a sound acceleration technique. Thus:
  \begin{align*}
    &\vec{x} \longrightarrow^n_{\langle \check{\phi}, \vec{a} \rangle} \vec{x}' \land \accelerate(\langle \chi, \vec{a} \rangle,\check{\phi}) \\
    \implies & \accelerate(\langle \chi, \vec{a} \rangle,\check{\phi}) \\
    \iff & \accelerate_0(\langle \chi, \vec{a} \rangle) \tag{by definition of $\accelerate$}\\
    \implies & \vec{x} \longrightarrow^n_{\langle \chi, \vec{a} \rangle} \vec{x}' \tag{by soundness of $\accelerate_0$} \\
  \end{align*}

  For the ``only if'' direction of 1., we need to show that
  \[
    \accelerate_0(\langle \phi,\vec{a} \rangle) \quad \text{implies} \quad \vec{x} \longrightarrow^n_{\langle \phi, \vec{a} \rangle} \vec{x}'
  \]
  if $\accelerate$ is a sound conditional acceleration technique.
  Thus:
  \begin{align*}
    & \accelerate_0(\langle \phi,\vec{a} \rangle) \\
    \iff & \accelerate_0(\langle \phi,\vec{a} \rangle) \land \vec{x}' = \vec{a}^{n}(\vec{x}) \tag{by the prerequisites of the lemma}\\
    \iff & \vec{x} \longrightarrow^n_{\langle \top, \vec{a} \rangle} \vec{x}' \land \accelerate_0(\langle \phi,\vec{a} \rangle) \\
    \iff & \vec{x} \longrightarrow^n_{\langle \top, \vec{a} \rangle} \vec{x}' \land \accelerate(\langle \phi,\vec{a} \rangle, \top) \tag{by definition of $\accelerate$}\\
    \implies & \vec{x} \longrightarrow^n_{\langle \phi, \vec{a} \rangle} \vec{x}' \tag{by soundness of $\accelerate$} \\
  \end{align*}

  For the ``if'' direction of 2., soundness of $\accelerate$ follows from 1. We still need to show that
  \[
    \vec{x} \longrightarrow^n_{\langle \chi \land \check{\phi}, \vec{a} \rangle} \vec{x}' \quad \text{implies} \quad \accelerate(\langle \chi, \vec{a} \rangle,\check{\phi})
  \]
  if $\accelerate_0$ is an exact acceleration technique. Thus:
  \begin{align*}
    &\vec{x} \longrightarrow^n_{\langle \chi \land \check{\phi}, \vec{a} \rangle} \vec{x}'\\
  \implies &\vec{x} \longrightarrow^n_{\langle \chi, \vec{a} \rangle} \vec{x}'\\
  \iff & \accelerate_0(\langle \chi, \vec{a} \rangle) \tag{by exactness of $\accelerate_0$}\\
  \iff & \accelerate(\langle \chi, \vec{a} \rangle,\check{\phi}) \tag{by definition of $\accelerate$}
  \end{align*}
  
  For the ``only if'' direction of 2., soundness of $\accelerate_0$ follows from 1. We still need to show that
  \[
    \vec{x} \longrightarrow^n_{\langle \phi, \vec{a} \rangle} \vec{x}' \quad \text{implies} \quad \accelerate_0(\langle \phi,\vec{a} \rangle)
  \]
  if $\accelerate$ is an exact conditional acceleration technique. Thus:
  \begin{align*}
    &\vec{x} \longrightarrow^n_{\langle \phi, \vec{a} \rangle} \vec{x}'\\
    \implies & \accelerate(\langle \phi, \vec{a} \rangle, \top) \tag{by exactness of $\accelerate$} \\
    \iff & \accelerate_0(\langle \phi, \vec{a} \rangle) \tag{by definition of $\accelerate$} \\
  \end{align*}
}

We are now ready to present our \emph{acceleration calculus}, which combines loop acceleration techniques in a modular way.
In the following, w.l.o.g.\ we assume that formulas are in CNF, and we identify the formula $\bigwedge_{i=1}^k C_i$ with the set of clauses $\{C_i \mid 1 \leq i \leq k\}$.
\begin{definition}[Acceleration Calculus]
  \label{def:calculus}
  The relation ${\leadsto}$ on acceleration problems is defined by the rule
  \[
    \infer{
      \prob{\psi_1}{\check{\phi}}{\hat{\phi}}{\vec{a}} \leadsto_{(e)} \prob{\psi_1 \cup \psi_2}{\check{\phi} \cup \chi}{\hat{\phi} \setminus \chi}{\vec{a}}
    }{
      \emptyset \neq \chi \subseteq \hat{\phi} & \accelerate(\langle \chi, \vec{a} \rangle, \check{\phi}) = \psi_2
    }
  \]
  where $\accelerate$ is a sound conditional acceleration technique.
  A ${\leadsto}$-step is \emph{exact} (written ${\leadsto_e}$) if $\accelerate$ is exact.
\end{definition}

So our calculus allows us to pick a subset $\chi$ (of clauses) from the yet unprocessed condition $\hat{\phi}$ and ``move'' it to $\check\phi$, which has already been processed successfully.
To this end, $\langle \chi, \vec{a} \rangle$ needs to be accelerated by a conditional acceleration technique, i.e., when accelerating $\langle \chi, \vec{a} \rangle$ we may assume $\vec{x} \longrightarrow_{\langle \check\phi, \vec{a} \rangle}^n \vec{x}'$.

With \Cref{thm:accel-as-cond-accel}, our calculus allows for combining arbitrary existing acceleration techniques without adapting them.
However, many acceleration techniques can easily be turned into more sophisticated conditional acceleration techniques (see \Cref{sec:conditional}), which increases the power of our approach.

\begin{figure*}
    \begin{align*}
      & \prob{\psi^{init}_{non\text{-}dec} \Def \mat{x_1'\\x_2'} = \mat{x_1-n\\x_2+n}}{\top}{x_1>0 \land x_2>0}{\mat{x_1-1\\x_2+1}} \\
      \leadsto_e & \prob{\psi^{init}_{non\text{-}dec} \land x_1-n+1 > 0}{x_1>0}{x_2>0}{\mat{x_1-1\\x_2+1}} \tag{\Cref{thm:one-way}} \\
      \leadsto_e & \prob{\psi^{init}_{non\text{-}dec} \land x_1-n+1 > 0 \land x_2 > 0}{x_1>0 \land x_2>0}{\top}{\mat{x_1-1\\x_2+1}} \tag{\Cref{thm:recurrent}}\\
      {} = {} & \prob{\ref{psi:conditional-ex}}{x_1 > 0 \land x_2 > 0}{\top}{\mat{x_1-1\\x_2+1}}
    \end{align*}
  \caption{${\leadsto}$-derivation for \ref{eq:conditional-ex}}
  \label{fig:deriv-calculus}
\end{figure*}

\begin{example}
  \label{ex:calculus}
  We continue \Cref{ex:canonical}, where we fix $\chi \Def x_1>0$ for the first acceleration step.
  Thus, we first need to accelerate the loop $\left\langle x_1>0, \mat{x_1-1\\x_2+1} \right\rangle$ to enable a first $\leadsto$-step, and we need to accelerate $\left\langle x_2>0, \mat{x_1-1\\x_2+1} \right\rangle$ afterwards.
  The resulting derivation is shown in \Cref{fig:deriv-calculus}.
  Thus, we successfully constructed the formula \ref{psi:conditional-ex}, which is equivalent to \ref{eq:conditional-ex}.
  Note that here neither of the two steps benefit from the component $\hat{\phi}$ of the acceleration problems.
  We will introduce more powerful conditional acceleration techniques that benefit from $\hat{\phi}$ in \Cref{sec:conditional}.
\end{example}

The crucial property of our calculus is the following.
\begin{lemma}
  \label{thm:calculus-sound}
  The relation $\leadsto$ preserves consistency, and the relation $\leadsto_e$ preserves exactness.
\end{lemma}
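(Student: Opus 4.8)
The plan is to prove the two claims of \Cref{thm:calculus-sound} by a single case analysis on a $\leadsto$-step, since they have essentially the same structure — the exactness claim just adds one more direction of implication to verify.

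First I would fix an arbitrary step
\[
  \prob{\psi_1}{\check{\phi}}{\hat{\phi}}{\vec{a}} \leadsto \prob{\psi_1 \cup \psi_2}{\check{\phi} \cup \chi}{\hat{\phi} \setminus \chi}{\vec{a}}
\]
produced by the rule of \Cref{def:calculus}, so that $\emptyset \neq \chi \subseteq \hat{\phi}$ and $\accelerate(\langle \chi, \vec{a} \rangle, \check{\phi}) = \psi_2$ for a sound conditional acceleration technique $\accelerate$. Assume the source problem is consistent, i.e.\ $\psi_1$ approximates $\langle \check{\phi}, \vec{a} \rangle$; this means that for all $\vec{x}, \vec{x}' \in \ZZ^d$ and $n > 0$ we have $\psi_1 \implies \vec{x} \longrightarrow^n_{\langle \check{\phi}, \vec{a} \rangle} \vec{x}'$. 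I must show $\psi_1 \cup \psi_2 = \psi_1 \land \psi_2$ approximates $\langle \check{\phi} \cup \chi, \vec{a} \rangle = \langle \check{\phi} \land \chi, \vec{a} \rangle$. The key chain is: $\psi_1 \land \psi_2$ gives $\psi_1$, hence (consistency) $\vec{x} \longrightarrow^n_{\langle \check{\phi}, \vec{a} \rangle} \vec{x}'$; combining this with $\psi_2 = \accelerate(\langle \chi, \vec{a} \rangle, \check{\phi})$ and invoking soundness of $\accelerate$ (\Cref{def:cond-accel}) yields $\vec{x} \longrightarrow^n_{\langle \chi, \vec{a} \rangle} \vec{x}'$. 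Since we also have $\vec{x} \longrightarrow^n_{\langle \check{\phi}, \vec{a} \rangle} \vec{x}'$, and ${\longrightarrow_{\langle \check{\phi}, \vec{a} \rangle}} \cap {\longrightarrow_{\langle \chi, \vec{a} \rangle}} = {\longrightarrow_{\langle \check{\phi} \land \chi, \vec{a} \rangle}}$ (as noted in the text, using that $n$-fold composition of the intersection of these relations is the intersection of the $n$-fold compositions, because both relations are functional modulo the guard and share the same update $\vec{a}$), we conclude $\vec{x} \longrightarrow^n_{\langle \check{\phi} \land \chi, \vec{a} \rangle} \vec{x}'$. That is exactly consistency of the target problem. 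Preservation of consistency along arbitrary $\leadsto$-derivations then follows by induction on the number of steps, with the canonical acceleration problem as a trivially consistent base case.

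For the exactness claim I would additionally assume the source problem is exact ($\psi_1 \iff \vec{x} \longrightarrow^n_{\langle \check{\phi}, \vec{a} \rangle} \vec{x}'$) and that the step is a $\leadsto_e$-step, so $\accelerate$ is exact. Soundness from the first part already gives one direction; for the converse I would assume $\vec{x} \longrightarrow^n_{\langle \check{\phi} \land \chi, \vec{a} \rangle} \vec{x}'$. This splits into $\vec{x} \longrightarrow^n_{\langle \check{\phi}, \vec{a} \rangle} \vec{x}'$, which gives $\psi_1$ by exactness of the source problem, and $\vec{x} \longrightarrow^n_{\langle \chi \land \check{\phi}, \vec{a} \rangle} \vec{x}'$, which gives $\accelerate(\langle \chi, \vec{a} \rangle, \check{\phi}) = \psi_2$ by exactness of $\accelerate$ (the second clause of \Cref{def:cond-accel}). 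Hence $\psi_1 \land \psi_2$ holds, completing the equivalence; induction over the derivation finishes the proof.

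The main obstacle I anticipate is not any of the implication-chasing, which is routine, but making the relational identity ${\longrightarrow^n_{\langle \check{\phi}, \vec{a} \rangle}} \cap {\longrightarrow^n_{\langle \chi, \vec{a} \rangle}} = {\longrightarrow^n_{\langle \check{\phi} \land \chi, \vec{a} \rangle}}$ airtight. For $n = 1$ this is immediate from the definition of ${\longrightarrow_{\ref{loop}}}$, but for general $n$ one needs that an $n$-step run under $\langle \check{\phi}, \vec{a} \rangle$ and an $n$-step run under $\langle \chi, \vec{a} \rangle$ with the same endpoints $\vec{x}, \vec{x}'$ must pass through the same intermediate states $\vec{a}^i(\vec{x})$ — which holds because in both loops the update is the deterministic function $\vec{a}$, so the intermediate states are forced and one only has to check that each satisfies $\check{\phi} \land \chi$. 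I would state this as a small auxiliary observation (or fold it into the running text where the identity ${\longrightarrow_{\langle \check\phi, \vec{a}\rangle}} \cap {\longrightarrow_{\langle \hat\phi, \vec{a}\rangle}} = {\longrightarrow_{\langle \phi, \vec{a}\rangle}}$ is already used) and then apply it uniformly in both parts of the proof.
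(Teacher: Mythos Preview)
Your proof is correct and follows essentially the same approach as the paper: both establish consistency via the chain $\psi_1 \land \psi_2 \implies \vec{x} \longrightarrow^n_{\langle \check{\phi}, \vec{a} \rangle} \vec{x}' \land \psi_2 \implies \vec{x} \longrightarrow^n_{\langle \check{\phi} \land \chi, \vec{a} \rangle} \vec{x}'$ using consistency of the source and soundness of $\accelerate$, and both obtain exactness by additionally invoking the exactness clause of \Cref{def:cond-accel}. The only differences are cosmetic---the paper presents the exactness argument as a single chain of equivalences rather than two separate implications, and it silently uses the relational identity ${\longrightarrow^n_{\langle \check{\phi}, \vec{a} \rangle}} \cap {\longrightarrow^n_{\langle \chi, \vec{a} \rangle}} = {\longrightarrow^n_{\langle \check{\phi} \land \chi, \vec{a} \rangle}}$ that you take care to justify via determinism of $\vec{a}$.
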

\makeproof{thm:calculus-sound}{
  For the first part of the lemma, assume
  \[
    \prob{\psi_1}{\check{\phi}}{\hat{\phi}}{\vec{a}} \leadsto \prob{\psi_1 \cup \psi_2}{\check{\phi} \cup \chi}{\hat{\phi} \setminus \chi}{\vec{a}}
  \]
  where $\prob{\psi_1}{\check{\phi}}{\hat{\phi}}{\vec{a}}$ is consistent and
  \[
    \accelerate(\langle \chi, \vec{a} \rangle, \check{\phi}) = \psi_2.
  \]
  We get
  \begin{align*}
    &\psi_1 \land \psi_2\\
    \implies& \vec{x} \longrightarrow^n_{\langle \check{\phi}, \vec{a} \rangle} \vec{x}' \land \psi_2 \\
    \implies& \vec{x} \longrightarrow^n_{\langle \check{\phi}, \vec{a} \rangle} \vec{x}' \land \vec{x} \longrightarrow^n_{\langle \chi, \vec{a} \rangle} \vec{x}' \\
    \iff& \vec{x} \longrightarrow^n_{\langle \check{\phi} \land \chi, \vec{a} \rangle} \vec{x}'
  \end{align*}
  The first step holds since $\prob{\psi_1}{\check{\phi}}{\hat{\phi}}{\vec{a}}$ is consistent and the second step holds since $\accelerate$ is sound.
  This proves consistency of
  \begin{align*}
    &\prob{\psi_1 \land \psi_2}{\check{\phi} \land \chi}{\hat{\phi} \setminus \chi}{\vec{a}}\\
    = & \prob{\psi_1 \cup \psi_2}{\check{\phi} \cup \chi}{\hat{\phi} \setminus \chi}{\vec{a}}.
  \end{align*}

  For the second part of the lemma, assume
  \[
    \prob{\psi_1}{\check{\phi}}{\hat{\phi}}{\vec{a}} \leadsto_e \prob{\psi_1 \cup \psi_2}{\check{\phi} \cup \chi}{\hat{\phi} \setminus \chi}{\vec{a}}
  \]
  where $\prob{\psi_1}{\check{\phi}}{\hat{\phi}}{\vec{a}}$ is exact and $\accelerate(\langle \chi, \vec{a} \rangle, \check{\phi}) = \psi_2$.
  We get
  \begin{align*}
    &\vec{x} \longrightarrow^n_{\langle \check{\phi} \land \chi, \vec{a} \rangle} \vec{x}'\\
    \iff & \vec{x} \longrightarrow^n_{\langle \check{\phi} \land \chi, \vec{a} \rangle} \vec{x}' \land \psi_2 \tag{by exactness of $\accelerate$}\\
    \iff & \vec{x} \longrightarrow^n_{\langle \check{\phi}, \vec{a} \rangle} \vec{x}' \land \psi_2\\
    \iff & \psi_1 \land \psi_2 \tag{by exactness of $\prob{\psi_1}{\check{\phi}}{\hat{\phi}}{\vec{a}}$}\\
  \end{align*}
  which proves exactness of
  \[
     \prob{\psi_1 \cup \psi_2}{\check{\phi} \cup \chi}{\hat{\phi} \setminus \chi}{\vec{a}}.
  \]
}

Then the correctness of our calculus follows immediately.
The reason is that
\[
  \prob{\vec{x}' = \vec{a}^n(\vec{x})}{\top}{\phi(\vec{x})}{\vec{a}} \leadsto_{(e)}^* \prob{\psi(\vec{y})}{\check{\phi}(\vec{x})}{\top}{\vec{a}}
\]
implies $\phi \equiv \check{\phi}$.

\begin{theorem}[Correctness of ${\leadsto}$]
  \label{cor:calculus-sound}
  If
  \[
    \prob{\vec{x}' = \vec{a}^n(\vec{x})}{\top}{\phi(\vec{x})}{\vec{a}} \leadsto^* \prob{\psi(\vec{y})}{\check{\phi}(\vec{x})}{\top}{\vec{a}},
  \]
  then $\psi$ approximates \ref{loop}.
  If
  \[
    \prob{\vec{x}' = \vec{a}^n(\vec{x})}{\top}{\phi(\vec{x})}{\vec{a}} \leadsto_e^* \prob{\psi(\vec{y})}{\check{\phi}(\vec{x})}{\top}{\vec{a}},
  \]
  then $\psi$ is equivalent to \ref{loop}.
\end{theorem}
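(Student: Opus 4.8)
The plan is to obtain the theorem as a direct consequence of \Cref{thm:calculus-sound} by induction on the length of the $\leadsto^*$-derivation, combined with a simple bookkeeping argument about the first two components $\check\phi$ and $\hat\phi$ of the acceleration problems.

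First I would treat the base case: the canonical acceleration problem $\prob{\vec{x}' = \vec{a}^n(\vec{x})}{\top}{\phi(\vec{x})}{\vec{a}}$ is exact, hence in particular consistent. Indeed, for every $n > 0$ an $n$-step chain under $\langle \top, \vec{a}\rangle$ starting in $\vec{x}$ is forced to be $\vec{x}, \vec{a}(\vec{x}), \dots, \vec{a}^n(\vec{x})$ — the guard $\top$ imposes nothing — so $\vec{x} \longrightarrow^n_{\langle \top, \vec{a}\rangle} \vec{x}'$ holds if and only if $\vec{x}' = \vec{a}^n(\vec{x})$; that is, $\vec{x}' = \vec{a}^n(\vec{x})$ is equivalent to $\langle \top, \vec{a}\rangle$.

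Now I would run the induction. For the length-$0$ case there is nothing to do, and $\psi$ approximates (resp.\ is equivalent to) $\langle \check\phi, \vec a\rangle$ by the base case. For the step, \Cref{thm:calculus-sound} says that a $\leadsto$-step preserves consistency and a $\leadsto_e$-step preserves exactness, so the final problem $\prob{\psi(\vec{y})}{\check{\phi}(\vec{x})}{\top}{\vec{a}}$ is consistent in the $\leadsto^*$ case and exact in the $\leadsto_e^*$ case; i.e., $\psi$ approximates (resp.\ is equivalent to) $\langle \check{\phi}, \vec{a}\rangle$. It remains to show $\check{\phi} \equiv \phi$, so that $\langle \check{\phi}, \vec{a}\rangle = \ref{loop}$. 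Using the convention that a CNF formula is its set of clauses, each $\leadsto$-step replaces $(\check{\phi}, \hat{\phi})$ by $(\check{\phi} \cup \chi, \hat{\phi} \setminus \chi)$ with $\emptyset \neq \chi \subseteq \hat{\phi}$, which leaves the set $\check{\phi} \cup \hat{\phi}$ unchanged. Hence $\check{\phi} \cup \hat{\phi}$ equals the clause set of $\phi$ throughout the derivation; since the final $\hat{\phi}$ is $\top$ (the empty clause set), $\check{\phi}$ equals the clause set of $\phi$, i.e., $\check{\phi} \equiv \phi$, and we are done.

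I do not expect a real obstacle: the substance is entirely in \Cref{thm:calculus-sound}, and the rest is the induction plus the clause-set invariant. The only points needing care are making the base case precise (spelling out that $\vec{x}' = \vec{a}^n(\vec{x})$ is not merely an approximation but is equivalent to $\langle \top, \vec{a}\rangle$ for $n > 0$) and noting that — thanks to the identification of CNF formulas with clause sets — the derivation really does reach a problem whose third component is the empty clause set, which is what lets us conclude $\check{\phi} \equiv \phi$.
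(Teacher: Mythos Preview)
Your proposal is correct and matches the paper's approach: the paper states that the theorem follows immediately from \Cref{thm:calculus-sound} together with the observation that the derivation implies $\phi \equiv \check{\phi}$, and you have simply spelled out the induction and the clause-set invariant that justify this. The only addition you make is the explicit verification that the canonical acceleration problem is exact, which the paper leaves implicit.
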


Termination of our calculus is trivial, as the size of the third component $\hat{\phi}$ of the acceleration problem is decreasing.

\begin{theorem}[Well-Foundedness of ${\leadsto}$]
  \label{thm:term}
  The relation ${\leadsto}$ is well-founded.
\end{theorem}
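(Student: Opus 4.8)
The plan is to exhibit a well-founded measure on acceleration problems that strictly decreases along every ${\leadsto}$-step, exactly as hinted by the shape of the rule in \Cref{def:calculus}. Concretely, I would set $\mu$ to be the size of the third component, i.e.
\[
  \mu\bigl(\prob{\psi}{\check{\phi}}{\hat{\phi}}{\vec{a}}\bigr) \Def |\hat{\phi}|,
\]
where $|\hat{\phi}|$ is the number of clauses in the CNF representation of $\hat{\phi}$. By the convention fixed just before \Cref{def:calculus}, $\hat{\phi}$ is identified with its set of clauses, and this set is finite because the loop condition of \ref{loop} is a finite quantifier-free formula; hence $\mu$ takes values in $\NN$.

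Next I would check that $\mu$ strictly decreases along ${\leadsto}$. For an arbitrary step
\[
  \prob{\psi_1}{\check{\phi}}{\hat{\phi}}{\vec{a}} \leadsto \prob{\psi_1 \cup \psi_2}{\check{\phi} \cup \chi}{\hat{\phi} \setminus \chi}{\vec{a}},
\]
the side condition of the calculus rule gives $\emptyset \neq \chi \subseteq \hat{\phi}$, so $|\hat{\phi} \setminus \chi| = |\hat{\phi}| - |\chi| < |\hat{\phi}|$ since $|\chi| \geq 1$. Thus $\mu$ drops by at least one at each step. Since $(\NN, <)$ is well-founded, any infinite ${\leadsto}$-chain would induce an infinite strictly descending sequence of natural numbers, which is impossible. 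Hence ${\leadsto}$ is well-founded, and the same conclusion holds for ${\leadsto_e}$ because ${\leadsto_e} \subseteq {\leadsto}$.

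There is essentially no obstacle here: the only point meriting a word of care is that $\hat{\phi}$ is genuinely a finite clause set, so that $\mu$ is well-defined — this is immediate from the definition of the loop condition together with the CNF convention. Everything else is the one-line cardinality computation above.
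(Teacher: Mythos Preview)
Your proposal is correct and matches the paper's own argument: the paper simply remarks that termination is trivial because the size of the third component $\hat{\phi}$ strictly decreases at each ${\leadsto}$-step, which is exactly the measure you spell out. Your write-up just makes this explicit, including the finiteness of the clause set and the cardinality computation.
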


\section{Conditional Acceleration Techniques}
\label{sec:conditional}

We now show how to turn the acceleration techniques from \Cref{sec:monotonic} into conditional acceleration techniques, starting with \emph{acceleration via monotonic decrease}.
\begin{theorem}[Conditional Acceleration via Monotonic Decrease]
  \label{thm:conditional-one-way}
  If
  \begin{equation}
    \label{eq:conditional-one-way}
    \submission{\notag}
    \check{\phi}(\vec{x}) \land \chi(\vec{a}(\vec{x})) \implies \chi(\vec{x}),
  \end{equation}
  then the following conditional acceleration technique is exact:
  \[
    (\langle \chi, \vec{a} \rangle, \check{\phi}) \mapsto \vec{x}' = \vec{a}^n(\vec{x}) \land \chi(\vec{a}^{n-1}(\vec{x}))
  \]
\end{theorem}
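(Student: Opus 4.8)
The plan is to prove the two implications required by \Cref{def:cond-accel} directly, using the closed form $\vec{a}^n(\vec{x})$ and induction on the iteration count. Fix $(\langle \chi, \vec{a} \rangle, \check\phi)$ with $\check\phi(\vec{x}) \land \chi(\vec{a}(\vec{x})) \implies \chi(\vec{x})$, and write $\psi_2 \Def \vec{x}' = \vec{a}^n(\vec{x}) \land \chi(\vec{a}^{n-1}(\vec{x}))$.

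For soundness, I would assume $\vec{x} \longrightarrow^n_{\langle \check\phi, \vec{a} \rangle} \vec{x}'$ and $\psi_2$ for some $n > 0$, and show $\vec{x} \longrightarrow^n_{\langle \chi, \vec{a} \rangle} \vec{x}'$. The first hypothesis unfolds to $\vec{x}' = \vec{a}^n(\vec{x})$ together with $\check\phi(\vec{a}^i(\vec{x}))$ for all $0 \le i < n$; the second gives $\chi(\vec{a}^{n-1}(\vec{x}))$. It then suffices to establish $\chi(\vec{a}^i(\vec{x}))$ for all $0 \le i < n$, which gives exactly $\vec{x} \longrightarrow^n_{\langle \chi, \vec{a} \rangle} \vec{x}'$. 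I would prove this by downward induction on $i$, from $i = n-1$ down to $i = 0$: the base case $i = n-1$ is the hypothesis $\chi(\vec{a}^{n-1}(\vec{x}))$; for the step, from $\chi(\vec{a}^{i+1}(\vec{x})) = \chi(\vec{a}(\vec{a}^i(\vec{x})))$ and $\check\phi(\vec{a}^i(\vec{x}))$ (which holds since $i < n$), the prerequisite \eqref{eq:conditional-one-way} instantiated at $\vec{x} \mapsto \vec{a}^i(\vec{x})$ yields $\chi(\vec{a}^i(\vec{x}))$.

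For exactness, I would additionally assume $\vec{x} \longrightarrow^n_{\langle \chi \land \check\phi, \vec{a} \rangle} \vec{x}'$ for some $n > 0$ and derive $\psi_2$. Unfolding gives $\vec{x}' = \vec{a}^n(\vec{x})$ and $(\chi \land \check\phi)(\vec{a}^i(\vec{x}))$ for all $0 \le i < n$; in particular $\chi(\vec{a}^{n-1}(\vec{x}))$ holds since $n - 1 < n$, so both conjuncts of $\psi_2$ are satisfied.

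The argument is essentially routine once the right induction direction is chosen, so there is no serious obstacle; the only subtlety worth care is the direction of the induction — unlike invariance arguments, here monotonicity propagates $\chi$ \emph{backwards} along the orbit, so one argues from the last iteration towards the first, which mirrors why $\chi(\vec{a}^{n-1}(\vec{x}))$ (rather than $\chi(\vec{x})$) is the natural conjunct to add. I would also note explicitly that this generalizes \Cref{thm:one-way}: taking $\check\phi \equiv \top$ recovers the prerequisite $\phi(\vec{a}(\vec{x})) \implies \phi(\vec{x})$ and the original conclusion.
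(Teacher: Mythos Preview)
Your proposal is correct and uses essentially the same idea as the paper: backward propagation of $\chi$ along the orbit via the prerequisite \eqref{eq:conditional-one-way}, with exactness being trivial. The only difference is organizational: the paper inducts forward on the run length $m$ (using the prerequisite once per step together with the induction hypothesis), whereas you fix $n$ and run a downward induction on the position $i$; both amount to the same chain of applications of \eqref{eq:conditional-one-way}.
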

\makeproof{thm:conditional-one-way}{
  For soundness, we need to prove
  \begin{multline}
    \label{thm:conditional-one-way-ih}
    \vec{x} \longrightarrow^m_{\langle \check{\phi}, \vec{a} \rangle} \vec{a}^{m}(\vec{x}) \land \chi(\vec{a}^{m-1}(\vec{x})) \\ \implies \vec{x} \longrightarrow^m_{\langle \chi, \vec{a} \rangle} \vec{a}^{m}(\vec{x})
  \end{multline}
  for all $m > 0$.
  We use induction on $m$.
  If $m=1$, then
  \begin{align*}
    &\vec{x} \longrightarrow^m_{\langle \check{\phi}, \vec{a} \rangle} \vec{a}^{m}(\vec{x}) \land \chi(\vec{a}^{m-1}(\vec{x}))\\
    \implies & \chi(\vec{x}) \tag{as $m=1$}\\
    \iff & \vec{x} \longrightarrow_{\langle \chi, \vec{a} \rangle} \vec{a}(\vec{x})\\
    \iff & \vec{x} \longrightarrow^m_{\langle \chi, \vec{a} \rangle} \vec{a}^m(\vec{x}). \tag{as $m=1$}
  \end{align*}
  In the induction step, we have
  \begin{align*}
    &\vec{x} \longrightarrow^{m+1}_{\langle \check{\phi}, \vec{a} \rangle} \vec{a}^{m+1}(\vec{x}) \land \chi(\vec{a}^m(\vec{x})) \\
    \implies & \vec{x} \longrightarrow^{m}_{\langle \check{\phi}, \vec{a} \rangle} \vec{a}^{m}(\vec{x}) \land \chi(\vec{a}^m(\vec{x}))\\
    \iff & \vec{x} \longrightarrow^{m}_{\langle \check{\phi}, \vec{a} \rangle} \vec{a}^{m}(\vec{x}) \land \check{\phi}(\vec{a}^{m-1}(\vec{x})) \land \chi(\vec{a}^m(\vec{x})) \tag{as $m > 0$}\\
    \implies & \vec{x} \longrightarrow^{m}_{\langle \check{\phi}, \vec{a} \rangle} \vec{a}^{m}(\vec{x}) \land \chi(\vec{a}^{m-1}(\vec{x})) \land \chi(\vec{a}^m(\vec{x})) \tag{due to \eqref{eq:conditional-one-way}} \\
    \implies & \vec{x} \longrightarrow^m_{\langle \chi, \vec{a} \rangle} \vec{a}^m(\vec{x}) \land \chi(\vec{a}^m(\vec{x})) \tag{by the induction hypothesis \eqref{thm:conditional-one-way-ih}}\\
    \iff & \vec{x} \longrightarrow^{m+1}_{\langle \chi, \vec{a} \rangle} \vec{a}^{m+1}(\vec{x}).
  \end{align*}

  For exactness, we need to prove
  \[
    \vec{x} \longrightarrow^m_{\langle \chi \land \check{\phi}, \vec{a} \rangle} \vec{a}^m(\vec{x}) \implies \chi(\vec{a}^{m-1}(\vec{x}))
  \]
  for all $m>0$, which is trivial.
}

So we just add $\check{\phi}$ to the premise of the implication that needs to be checked to apply \emph{acceleration via monotonic decrease}.
\Cref{thm:recurrent} can be adapted analogously.
\begin{theorem}[Conditional Acceleration via Monotonic Increase]
  \label{thm:conditional-recurrent}
  If
  \begin{equation}
    \label{eq:conditional-recurrent}
    \submission{\notag}
    \check{\phi}(\vec{x}) \land \chi(\vec{x}) \implies \chi(\vec{a}(\vec{x})),
  \end{equation}
  then the following conditional acceleration technique is exact:
  \[
    (\langle \chi, \vec{a} \rangle, \check{\phi}) \mapsto \vec{x}' = \vec{a}^n(\vec{x}) \land \chi(\vec{x})
  \]
\end{theorem}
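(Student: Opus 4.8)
The plan is to mirror the proof of \Cref{thm:conditional-one-way} as closely as possible, since \Cref{thm:conditional-recurrent} is the ``dual'' statement. For soundness, I would prove by induction on $m > 0$ that
\[
  \vec{x} \longrightarrow^m_{\langle \check{\phi}, \vec{a} \rangle} \vec{a}^m(\vec{x}) \land \chi(\vec{x}) \implies \vec{x} \longrightarrow^m_{\langle \chi, \vec{a} \rangle} \vec{a}^m(\vec{x}).
\]
The base case $m = 1$ is immediate: $\vec{x} \longrightarrow_{\langle \check{\phi}, \vec{a} \rangle} \vec{a}(\vec{x}) \land \chi(\vec{x})$ gives $\chi(\vec{x})$, which is by definition $\vec{x} \longrightarrow_{\langle \chi, \vec{a} \rangle} \vec{a}(\vec{x})$.

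For the induction step from $m$ to $m+1$, I would start from $\vec{x} \longrightarrow^{m+1}_{\langle \check{\phi}, \vec{a} \rangle} \vec{a}^{m+1}(\vec{x}) \land \chi(\vec{x})$. This splits into $\vec{x} \longrightarrow^{m}_{\langle \check{\phi}, \vec{a} \rangle} \vec{a}^{m}(\vec{x})$, together with $\check{\phi}(\vec{a}^m(\vec{x}))$ and $\vec{a}^m(\vec{x}) \longrightarrow_{\langle \check{\phi}, \vec{a} \rangle} \vec{a}^{m+1}(\vec{x})$, plus $\chi(\vec{x})$. The induction hypothesis applied to the length-$m$ run (with the extra hypothesis $\chi(\vec{x})$) yields $\vec{x} \longrightarrow^m_{\langle \chi, \vec{a} \rangle} \vec{a}^m(\vec{x})$; in particular $\chi(\vec{a}^{m-1}(\vec{x}))$ holds along this run, and more usefully I also need $\chi(\vec{a}^m(\vec{x}))$ to extend the $\langle \chi, \vec{a} \rangle$-run by one more step. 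That last fact is exactly where \eqref{eq:conditional-recurrent} comes in: from $\check{\phi}(\vec{a}^{m-1}(\vec{x}))$ (available since the original run has length $m \ge 1$, so intermediate states satisfy $\check\phi$) and $\chi(\vec{a}^{m-1}(\vec{x}))$ we get $\chi(\vec{a}^m(\vec{x}))$, which together with $\vec{a}^m(\vec{x}) \longrightarrow_{\langle \chi, \vec{a} \rangle} \vec{a}^{m+1}(\vec{x})$ closes the step to obtain $\vec{x} \longrightarrow^{m+1}_{\langle \chi, \vec{a} \rangle} \vec{a}^{m+1}(\vec{x})$.

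The subtle point — and the one I expect to be the main obstacle to state cleanly — is the bookkeeping of \emph{which} intermediate states are known to satisfy $\check\phi$ and $\chi$: I want to propagate $\chi$ forward along $\vec{x}, \vec{a}(\vec{x}), \ldots$ one step at a time using \eqref{eq:conditional-recurrent}, and to do so I need $\check\phi$ at each of those intermediate states, which is precisely what $\vec{x} \longrightarrow^m_{\langle \check{\phi}, \vec{a} \rangle} \vec{a}^m(\vec{x})$ supplies for states $\vec{a}^0(\vec{x}), \ldots, \vec{a}^{m-1}(\vec{x})$. It may in fact be cleaner to prove the strengthened invariant that $\vec{x} \longrightarrow^m_{\langle \check{\phi}, \vec{a} \rangle} \vec{a}^m(\vec{x}) \land \chi(\vec{x})$ implies $\chi(\vec{a}^i(\vec{x}))$ for all $0 \le i \le m$ (a straightforward sub-induction on $i$ via \eqref{eq:conditional-recurrent}), and then derive the $\langle \chi, \vec{a}\rangle$-run directly from that. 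For exactness, we additionally need $\vec{x} \longrightarrow^m_{\langle \chi \land \check{\phi}, \vec{a} \rangle} \vec{a}^m(\vec{x}) \implies \chi(\vec{x})$ for all $m > 0$, which is trivial since the first step of any such run requires $\chi(\vec{x})$.
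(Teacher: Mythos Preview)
Your proposal is correct and follows essentially the same approach as the paper's proof: induction on $m$, with the induction step extracting $\check\phi(\vec{a}^{m-1}(\vec{x}))$ from the $\langle \check\phi,\vec{a}\rangle$-run and $\chi(\vec{a}^{m-1}(\vec{x}))$ from the induction hypothesis, then applying \eqref{eq:conditional-recurrent} to obtain $\chi(\vec{a}^m(\vec{x}))$ and extend the $\langle \chi,\vec{a}\rangle$-run. The exactness argument is likewise identical.
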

\makeproof{thm:conditional-recurrent}{
  For soundness, we need to prove
  \begin{equation}
    \label{eq:conditional-recurrent-ih}
    \vec{x} \longrightarrow^m_{\langle \check\phi, \vec{a} \rangle} \vec{a}^m(\vec{x}) \land \chi(\vec{x}) \implies \vec{x} \longrightarrow^m_{\langle \chi, \vec{a} \rangle} \vec{a}^m(\vec{x})
  \end{equation}
  for all $m>0$.
  We use induction on $m$. If $m=1$, then
  \begin{align*}
    & \vec{x} \longrightarrow^m_{\langle \check\phi, \vec{a} \rangle} \vec{a}^m(\vec{x}) \land \chi(\vec{x})\\
    \implies & \vec{x} \longrightarrow_{\langle \chi, \vec{a} \rangle} \vec{a}(\vec{x}) \\
    \iff & \vec{x} \longrightarrow^m_{\langle \chi, \vec{a} \rangle} \vec{a}^m(\vec{x}).\tag{as $m=1$}
  \end{align*}
  In the induction step, we have
  \begin{align*}
    & \vec{x} \longrightarrow^{m+1}_{\langle \check\phi, \vec{a} \rangle} \vec{a}^{m+1}(\vec{x}) \land \chi(\vec{x})\\
    \implies & \vec{x} \longrightarrow^{m}_{\langle \check\phi, \vec{a} \rangle} \vec{a}^{m}(\vec{x}) \land \chi(\vec{x})\\
    \implies & \vec{x} \longrightarrow^{m}_{\langle \check\phi, \vec{a} \rangle} \vec{a}^{m}(\vec{x}) \land \vec{x} \longrightarrow^m_{\langle \chi, \vec{a} \rangle} \vec{a}^m(\vec{x}) \tag{by the induction hypothesis \eqref{eq:conditional-recurrent-ih}} \\
    \implies & \vec{x} \longrightarrow^m_{\langle \chi, \vec{a} \rangle} \vec{a}^m(\vec{x}) \land \check\phi(\vec{a}^{m-1}(\vec{x})) \land \chi(\vec{a}^{m-1}(\vec{x})) \tag{as $m > 0$}\\
    \implies & \vec{x} \longrightarrow^m_{\langle \chi, \vec{a} \rangle} \vec{a}^m(\vec{x}) \land \chi(\vec{a}^{m}(\vec{x})) \tag{due to \eqref{eq:conditional-recurrent}} \\
    \iff & \vec{x} \longrightarrow^{m+1}_{\langle \chi, \vec{a} \rangle} \vec{a}^{m+1}(\vec{x}).
  \end{align*}

  For exactness, we need to prove
  \[
    \vec{x} \longrightarrow^m_{\langle \chi \land \check\phi, \vec{a} \rangle} \vec{a}^m(\vec{x}) \implies \chi(\vec{x}),
  \]
  for all $m>0$, which is trivial.
}

\begin{figure*}
  \begin{align*}
    & \prob{\vec{x}' = \vec{a}_{2\text{-}invs}^n(\vec{x})}{\top}{x_1>0 \land x_2>0}{\vec{a}_{2\text{-}invs}} \\
    \leadsto_e & \prob{\vec{x}' = \vec{a}_{2\text{-}invs}^n(\vec{x}) \land x_2 - n + 1 > 0}{x_2>0}{x_1>0}{\vec{a}_{2\text{-}invs}} \tag{\Cref{thm:conditional-one-way}} \\
    \leadsto_e & \prob{\vec{x}' = \vec{a}_{2\text{-}invs}^n(\vec{x}) \land x_2 - n + 1 > 0 \land x_1 > 0}{x_2>0 \land x_1>0}{\top}{\vec{a}_{2\text{-}invs}} \tag{\Cref{thm:conditional-recurrent}}
  \end{align*}
  \caption{${\leadsto}$-derivation for \ref{eq:three-way-ex}}
  \label{fig:deriv-three-way}
\end{figure*}

\begin{example}
  For the canonical acceleration problem of \ref{eq:three-way-ex}, we obtain the derivation shown in \Cref{fig:deriv-three-way}, where $\vec{a}_{2\text{-}invs} \Def \mat{x_1+x_2\\x_2-1}$.
  While we could also use \Cref{thm:one-way} for the first step, \Cref{thm:recurrent} is inapplicable in the second step.
  The reason is that we need the converse invariant $x_2>0$ to prove invariance of $x_1 > 0$.
\end{example}

It is not a coincidence that \ref{eq:three-way-ex}, which could also be accelerated with \emph{acceleration via monotonicity} (see \Cref{thm:three-way}) directly, can be handled by applying our novel calculus with \Cref{thm:conditional-one-way,thm:conditional-recurrent}.
\begin{remark}
  \label{thm:simulate-three-way}
  If applying \emph{acceleration via monotonicity} to \ref{loop} yields $\psi$, then
  \[
    \prob{\vec{x}' = \vec{a}^n(\vec{x})}{\top}{\phi(\vec{x})}{\vec{a}}\leadsto_e^{\leq 3} \prob{\psi(\vec{y})}{\phi(\vec{x})}{\top}{\vec{a}}
  \]
  where either \Cref{thm:conditional-one-way} or \Cref{thm:conditional-recurrent} is applied in each ${\leadsto_e}$-step.
\end{remark}
\makeproof{thm:simulate-three-way}{
  As \Cref{thm:three-way} applies, we have
  \[
    \phi(\vec{x}) \equiv \phi_1(\vec{x}) \land \phi_2(\vec{x}) \land \phi_3(\vec{x})
  \]
  where
  \begin{align}
    \phi_1(\vec{x}) &\implies \phi_1(\vec{a}(\vec{x})) & \land \label{eq:simulate-three-way1} \\
    \phi_1(\vec{x}) \land \phi_2(\vec{a}(\vec{x})) &\implies \phi_2(\vec{x}) & \land \label{eq:simulate-three-way2}\\
    \phi_1(\vec{x}) \land \phi_2(\vec{x}) \land \phi_3(\vec{x}) &\implies \phi_3(\vec{a}(\vec{x})). \label{eq:simulate-three-way3}
  \end{align}
  If $\phi_1 \neq \top$, then \Cref{thm:conditional-recurrent} applies to $\langle \phi_1, \vec{a} \rangle$ with $\check{\phi} \Def \top$ due to \eqref{eq:simulate-three-way1}, and we obtain:
  \begin{align*}
    & \prob{\vec{x}' = \vec{a}^n(\vec{x})}{\top}{\phi(\vec{x})}{\vec{a}}\\
    = & \prob{\vec{x}' = \vec{a}^n(\vec{x})}{\top}{\phi_1(\vec{x}) \land \phi_2(\vec{x}) \land \phi_3(\vec{x})}{\vec{a}}\\
    \leadsto_e & \prob{\vec{x}' = \vec{a}^n(\vec{x}) \land \phi_1(\vec{x})}{\phi_1(\vec{x})}{\phi_2(\vec{x}) \land \phi_3(\vec{x})}{\vec{a}} \\
    = & \prob{\psi_1(\vec{y})}{\phi_1(\vec{x})}{\phi_2(\vec{x}) \land \phi_3(\vec{x})}{\vec{a}}
  \end{align*}
  Next, if $\phi_2 \neq \top$, then \Cref{thm:conditional-one-way} applies to $\langle \phi_2, \vec{a} \rangle$ with $\check\phi \Def \phi_1$ due to \eqref{eq:simulate-three-way2} and we obtain:
  \begin{align*}
    &\prob{\psi_1(\vec{y})}{\phi_1(\vec{x})}{\phi_2(\vec{x}) \land \phi_3(\vec{x})}{\vec{a}}\\
    \leadsto_e& \prob{\psi_1(\vec{y}) \land \phi_2(\vec{a}^{n-1}(\vec{x}))}{\phi_1(\vec{x}) \land \phi_2(\vec{x})}{\phi_3(\vec{x})}{\vec{a}} \\
    =& \prob{\psi_2(\vec{y})}{\phi_1(\vec{x}) \land \phi_2(\vec{x})}{\phi_3(\vec{x})}{\vec{a}}
  \end{align*}
  Finally, if $\phi_3 \neq \top$, then \Cref{thm:conditional-recurrent} applies to $\langle \phi_3, \vec{a} \rangle$ with $\check\phi \Def \phi_1 \land \phi_2$ due to \eqref{eq:simulate-three-way3} and we obtain
  \begin{align*}
    &\prob{\psi_2(\vec{y})}{\phi_1(\vec{x}) \land \phi_2(\vec{x})}{\phi_3(\vec{x})}{\vec{a}}\\
    \leadsto_e &\prob{\psi_2(\vec{y}) \land \phi_3(\vec{x})}{\phi(\vec{x})}{\top}{\vec{a}} \\
    =& \prob{\psi(\vec{y})}{\phi(\vec{x})}{\top}{\vec{a}}.
  \end{align*}
}

Thus, there is no need for a conditional variant of \emph{acceleration via monotonicity}.
Note that combining \Cref{thm:conditional-one-way,thm:conditional-recurrent} with our calculus is also useful for loops where \emph{acceleration via monotonicity} is inapplicable.

\begin{figure*}
  \begin{align*}
    & \prob{\psi^{init}_{2\text{-}c\text{-}invs}}{\top}{\phi_{2\text{-}c\text{-}invs}}{\vec{a}_{2\text{-}c\text{-}invs}} \\
    \leadsto_e & \prob{\psi^{init}_{2\text{-}c\text{-}invs} \land x_1^{(n-1)} > 0}{x_1 > 0}{x_2 > 0 \land x_3 > 0}{\vec{a}_{2\text{-}c\text{-}invs}} \tag{\Cref{thm:conditional-one-way}} \\
    \leadsto_e & \prob{\psi^{init}_{2\text{-}c\text{-}invs} \land x_1^{(n-1)} > 0 \land x_2 > 0}{x_1 > 0 \land x_2 > 0}{x_3 > 0}{\vec{a}_{2\text{-}c\text{-}invs}} \tag{\Cref{thm:conditional-recurrent}} \\
    \leadsto_e & \prob{\psi^{init}_{2\text{-}c\text{-}invs} \land x_1^{(n-1)} \! > 0 \land x_2 > 0 \land x_3^{(n-1)} \! > 0}{\phi_{2\text{-}c\text{-}invs}}{\top}{\vec{a}_{2\text{-}c\text{-}invs}} \tag{\Cref{thm:conditional-one-way}}
  \end{align*}
  \caption{${\leadsto}$-derivation for \ref{eq:beyond-three-way}}
  \label{fig:derivation-beyond-three-way}
\end{figure*}

\begin{example}
  Consider the following loop, which can be accelerated by splitting its guard into one invariant and two converse invariants.
  \[
    \label{eq:beyond-three-way}\tag{\ensuremath{\TT_{2\text{-}c\text{-}invs}}}
    \begin{aligned}
      & \mathbf{while}\ x_1 > 0 \land x_2 > 0 \land x_3 > 0\ \mDo\\
      & \qquad \mat{x_1\\x_2\\x_3} \assign \mat{x_1 - 1\\x_2+x_1\\x_3-x_2}
    \end{aligned}
  \]
  Let
  \begin{align*}
    \phi_{2\text{-}c\text{-}invs} &\Def x_1 > 0 \land x_2 > 0 \land x_3 > 0,\\
    \vec{a}_{2\text{-}c\text{-}invs} &\Def \mat{x_1 - 1\\x_2+x_1\\x_3-x_2},\\
    \psi^{init}_{2\text{-}c\text{-}invs} &\Def \vec{x}' = \vec{a}_{2\text{-}c\text{-}invs}^n(\vec{x}),
  \end{align*}
  and let $x_i^{(m)}$ be the $i^{th}$ component of $\vec{a}_{2\text{-}c\text{-}invs}^{m}(\vec{x})$.
  Starting with the canonical acceleration problem of \ref{eq:beyond-three-way}, we obtain the derivation shown in \Cref{fig:derivation-beyond-three-way}.
\end{example}

Finally, we present a variant of \Cref{thm:meter} for conditional acceleration.
The idea is similar to the approach for deducing metering functions of the form $\vec{x} \mapsto \charfun{\check{\phi}}(\vec{x}) \cdot f(\vec{x})$ from \cite{journal} (see \Cref{sec:metering} for details).
But in contrast to \cite{journal}, in our setting the ``conditional'' part $\check{\phi}$ does not need to be an invariant of the loop.
\begin{theorem}[Conditional Acceleration via Metering Functions]
  \label{thm:conditional-metering}
  Let $\mf: \ZZ^d \to \QQ$.
  If
  \begin{align}
    \report{\label{eq:conditional-metering-decrease}}
    \submission{\notag}
    \check{\phi}(\vec{x}) \land \chi(\vec{x}) & \implies \mf(\vec{x}) - \mf(\vec{a}(\vec{x})) \leq 1 \text{ and} \\
    \report{\label{eq:conditional-metering-bound}}
    \submission{\notag}
    \check{\phi}(\vec{x}) \land \neg\chi(\vec{x}) & \implies \mf(\vec{x}) \leq 0,
  \end{align}
  then the following conditional acceleration technique is sound:
  \[
    (\langle \chi, \vec{a} \rangle, \check{\phi}) \mapsto \vec{x}' = \vec{a}^n(\vec{x}) \land n < \mf(\vec{x}) + 1
  \]
\end{theorem}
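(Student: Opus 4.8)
The plan is to prove \emph{soundness} directly; note that, unlike the monotonicity‑based techniques, this theorem claims only soundness, and indeed \emph{acceleration via metering functions} is not exact in general. Unfolding \Cref{def:cond-accel}, we must show that for all $\vec{x},\vec{x}'\in\ZZ^d$ and $n>0$,
\[
  \vec{x} \longrightarrow^n_{\langle \check{\phi}, \vec{a} \rangle} \vec{x}' \;\land\; \vec{x}' = \vec{a}^n(\vec{x}) \;\land\; n < \mf(\vec{x})+1 \;\implies\; \vec{x} \longrightarrow^n_{\langle \chi, \vec{a} \rangle} \vec{x}'.
\]
Since $\vec{x} \longrightarrow^n_{\langle \check{\phi}, \vec{a} \rangle} \vec{x}'$ already forces $\vec{x}' = \vec{a}^n(\vec{x})$ and provides $\check\phi(\vec{a}^i(\vec{x}))$ for all $0\le i<n$, it suffices to establish $\chi(\vec{a}^i(\vec{x}))$ for all $0\le i<n$ under the hypotheses $\check\phi(\vec{a}^i(\vec{x}))$ for $0\le i<n$ and $n < \mf(\vec{x})+1$.

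I would carry this out by induction on $n$, peeling off the first loop iteration. In the base case $n=1$ we have $\check\phi(\vec{x})$ and $\mf(\vec{x})>0$, so the contrapositive of \eqref{eq:conditional-metering-bound} yields $\chi(\vec{x})$, i.e.\ $\vec{x}\longrightarrow_{\langle\chi,\vec{a}\rangle}\vec{a}(\vec{x})$. For the step, assume the claim for $n$ and suppose $\vec{x}\longrightarrow^{n+1}_{\langle\check\phi,\vec{a}\rangle}\vec{a}^{n+1}(\vec{x})$ and $n+1 < \mf(\vec{x})+1$, i.e.\ $\mf(\vec{x})>n\ge 1$. As in the base case, $\check\phi(\vec{x})$ together with $\mf(\vec{x})>0$ gives $\chi(\vec{x})$; then \eqref{eq:conditional-metering-decrease} applies and yields $\mf(\vec{a}(\vec{x}))\ge\mf(\vec{x})-1>n-1$, hence $n<\mf(\vec{a}(\vec{x}))+1$. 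Shifting the run by one step gives $\vec{a}(\vec{x})\longrightarrow^{n}_{\langle\check\phi,\vec{a}\rangle}\vec{a}^{n+1}(\vec{x})$, so the induction hypothesis, instantiated with $\vec{a}(\vec{x})$ in place of $\vec{x}$, yields $\vec{a}(\vec{x})\longrightarrow^{n}_{\langle\chi,\vec{a}\rangle}\vec{a}^{n+1}(\vec{x})$. Prepending the step $\vec{x}\longrightarrow_{\langle\chi,\vec{a}\rangle}\vec{a}(\vec{x})$ obtained from $\chi(\vec{x})$ completes the chain $\vec{x}\longrightarrow^{n+1}_{\langle\chi,\vec{a}\rangle}\vec{a}^{n+1}(\vec{x})$.

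An alternative closer to the bookkeeping in the proofs of \Cref{thm:conditional-one-way,thm:conditional-recurrent} is to peel off the \emph{last} iteration: if $\chi$ first fails at state $\vec{a}^j(\vec{x})$ with $j<n$, then \eqref{eq:conditional-metering-bound} gives $\mf(\vec{a}^j(\vec{x}))\le 0$, while telescoping \eqref{eq:conditional-metering-decrease} over the preceding (already $\chi$‑ and $\check\phi$‑satisfying) states gives $\mf(\vec{x})-\mf(\vec{a}^j(\vec{x}))\le j$, so $\mf(\vec{x})\le j\le n-1$, contradicting $n<\mf(\vec{x})+1$. I expect the only (minor) obstacle to be arithmetic care with the strict rational inequality: one must observe that $\chi$ has to be shown to hold at the current state \emph{before} \eqref{eq:conditional-metering-decrease} becomes usable, and that the quantity propagated through the induction is $n<\mf(\vec{x})+1$ itself, not a derived form. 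Finally, \Cref{thm:meter} is recovered as the special case $\check\phi\equiv\top$, $\chi\equiv\phi$, under which \eqref{eq:conditional-metering-decrease} and \eqref{eq:conditional-metering-bound} collapse exactly to the conditions of \Cref{def:metering}.
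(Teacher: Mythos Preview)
Your proof is correct. Both the induction you give (peeling off the \emph{first} iteration) and your alternative sketch (assume $\chi$ first fails at index $j$ and derive a contradiction via telescoping) go through without gaps; in particular, in your induction step you correctly establish $\chi(\vec{x})$ \emph{before} invoking \eqref{eq:conditional-metering-decrease}, and the universally quantified induction hypothesis is legitimately instantiated at $\vec{a}(\vec{x})$.

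The paper takes a route closer to your alternative, not your primary argument: it fixes $\vec{x}$ and performs a direct induction that peels off the \emph{last} iteration. In the step from $m$ to $m+1$, the paper first applies the induction hypothesis (same $\vec{x}$) to obtain $\chi(\vec{a}^i(\vec{x}))$ for $i<m$, then telescopes \eqref{eq:conditional-metering-decrease} over all of $i=0,\ldots,m-1$ to get $\mf(\vec{x})-\mf(\vec{a}^m(\vec{x}))\le m$, and from $m<\mf(\vec{x})$ concludes $\mf(\vec{a}^m(\vec{x}))>0$, whence $\chi(\vec{a}^m(\vec{x}))$ via the contrapositive of \eqref{eq:conditional-metering-bound}. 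Your approach is arguably cleaner: by shifting the base point to $\vec{a}(\vec{x})$ you need only a single application of \eqref{eq:conditional-metering-decrease} per step and avoid the telescoping sum entirely. The paper's approach, on the other hand, keeps the starting state $\vec{x}$ fixed throughout, which matches the style of the proofs of \Cref{thm:conditional-one-way,thm:conditional-recurrent}.
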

\makeproof{thm:conditional-metering}{
  We need to prove
  \begin{multline}
    \label{eq:conditional-metering-ih}
    \vec{x} \longrightarrow^m_{\langle \check\phi, \vec{a} \rangle} \vec{a}^{m}(\vec{x}) \land m < \mf(\vec{x}) + 1 \\
    \implies \vec{x} \longrightarrow^m_{\langle \chi, \vec{a} \rangle} \vec{a}^{m}(\vec{x})
  \end{multline}
  for all $m > 0$.
  Note that \eqref{eq:conditional-metering-bound} is equivalent to
  \begin{equation}
    \label{eq:conditional-metering-bound-reversed}
    \mf(\vec{x}) > 0 \implies \neg\check{\phi}(\vec{x}) \lor \chi(\vec{x}).
  \end{equation}
  We use induction on $m$.
  If $m=1$, then
  \begin{align*}
    & \vec{x} \longrightarrow^m_{\langle \check\phi, \vec{a} \rangle} \vec{a}^{m}(\vec{x}) \land m < \mf(\vec{x}) + 1 \\
    \iff & \check\phi(\vec{x}) \land \mf(\vec{x}) > 0 \tag{as $m=1$}\\
    \implies & \chi(\vec{x}) \tag{due to \eqref{eq:conditional-metering-bound-reversed}}\\
    \iff & \vec{x} \longrightarrow_{\langle \chi, \vec{a} \rangle} \vec{a}(\vec{x})\\
    \iff & \vec{x} \longrightarrow^m_{\langle \chi, \vec{a} \rangle} \vec{a}^m(\vec{x}) \tag{as $m=1$}
  \end{align*}
  In the induction step, assume
  \begin{equation}
    \label{eq:conditional-metering-is}
    \vec{x} \longrightarrow^{m+1}_{\langle \check{\phi}, \vec{a} \rangle} \vec{a}^{m+1}(\vec{x}) \land m < \mf(\vec{x}).
  \end{equation}
  Then we have:
  \begin{align*}
    & \eqref{eq:conditional-metering-is} \\
    \implies & \vec{x} \longrightarrow^{m}_{\langle \check{\phi}, \vec{a} \rangle} \vec{a}^{m}(\vec{x}) \land m < \mf(\vec{x}) \\
    \implies & \vec{x} \longrightarrow^m_{\langle \check{\phi}, \vec{a} \rangle} \vec{a}^{m}(\vec{x}) \land m < \mf(\vec{x}) \land {} \\
             & \qquad\qquad\qquad\qquad\quad \vec{x} \longrightarrow^m_{\langle \chi, \vec{a} \rangle} \vec{a}^{m}(\vec{x}) \tag{due to the induction hypothesis \eqref{eq:conditional-metering-ih}} \\
    \iff & m < \mf(\vec{x}) \land \vec{x} \longrightarrow^m_{\langle \chi, \vec{a} \rangle} \vec{a}^{m}(\vec{x})  \land {} \\
    & \qquad \forall i \in [0,m-1].\ \left( \check{\phi}(\vec{a}^i(\vec{x})) \land \chi(\vec{a}^i(\vec{x})) \right)\\
    \implies & m < \mf(\vec{x}) \land \vec{x} \longrightarrow^m_{\langle \chi, \vec{a} \rangle} \vec{a}^{m}(\vec{x}) \land {} \\
    & \qquad \mf(\vec{x}) - \mf(\vec{a}^m(\vec{x})) \leq m \tag{due to \eqref{eq:conditional-metering-decrease}}\\
    \implies & \vec{x} \longrightarrow^m_{\langle \chi, \vec{a} \rangle} \vec{a}^{m}(\vec{x}) \land \mf(\vec{a}^m(\vec{x})) > 0 \tag{as $m < \mf(\vec{x})$}\\
    \implies & \vec{x} \longrightarrow^m_{\langle \chi, \vec{a} \rangle} \vec{a}^{m}(\vec{x}) \land (\neg\check{\phi}(\vec{a}^m(\vec{x})) \lor \chi(\vec{a}^m(\vec{x}))) \tag{due to \eqref{eq:conditional-metering-bound-reversed}}\\
    \iff & \vec{x} \longrightarrow^m_{\langle \chi, \vec{a} \rangle} \vec{a}^{m}(\vec{x}) \land \chi(\vec{a}^m(\vec{x})) \tag{as \eqref{eq:conditional-metering-is} implies $\check{\phi}(\vec{a}^m(\vec{x}))$} \\
    \iff & \vec{x} \longrightarrow^{m+1}_{\langle \chi, \vec{a} \rangle} \vec{a}^{m+1}(\vec{x})
  \end{align*}
}

\section{Acceleration via Eventual Monotonicity}
\label{sec:accel}

The combination of the calculus from \Cref{sec:integration} and the conditional acceleration techniques from \Cref{sec:conditional} still fails to handle certain interesting classes of loops.
Thus, to improve the applicability of our approach we now present two new acceleration techniques based on \emph{eventual} monotonicity.

\subsection{Acceleration via Eventual Decrease}

All (combinations of) techniques presented so far fail for the following example.
\begin{equation}
  \label{eq:phases}\tag{\ensuremath{\TT_{ev\text{-}dec}}}
  \mWhile{x_1 > 0}{\mat{x_1\\x_2} \assign \mat{x_1 + x_2\\x_2 - 1}}
\end{equation}
The reason is that $x_1$ does not behave monotonically, i.e., $x_1 > 0$ is neither an invariant nor a converse invariant.
Essentially, \ref{eq:phases} proceeds in two phases:
In the first (optional) phase, $x_2$ is positive and hence the value of $x_1$ is monotonically increasing.
In the second phase, $x_2$ is non-positive and consequently the value of $x_1$ decreases (weakly) monotonically.
The crucial observation is that once the value of $x_1$ decreases, it can never increase again.
Thus, despite the non-monotonic behavior of $x_1$, it suffices to require that $x_1 > 0$ holds before the first and before the $n^{th}$ loop iteration to ensure that the loop can be iterated at least $n$ times.
\begin{theorem}[Acceleration via Eventual Decrease]
  \label{thm:ev-dec}
  If $\phi(\vec{x}) \equiv \bigwedge_{i=1}^k C_i$ where each clause $C_i$ contains an inequation $\expr_i(\vec{x}) > 0$ such that
  \[
    \expr_i(\vec{x}) \geq \expr_i(\vec{a}(\vec{x})) \implies \expr_i(\vec{a}(\vec{x})) \geq \expr_i(\vec{a}^2(\vec{x})),
  \]
  then the following acceleration technique is sound:
  \[
    \resizebox{0.475\textwidth}{!}{
      $\displaystyle \ref{loop} \mapsto \vec{x}' = \vec{a}^n(\vec{x})\land \bigwedge_{i=1}^k \expr_i(\vec{x}) > 0 \land \expr_i(\vec{a}^{n-1}(\vec{x})) > 0$
    }
  \]
  If $C_i \equiv \expr_i > 0$ for all $i \in [1,k]$, then it is exact.
\end{theorem}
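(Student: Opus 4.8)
The statement has two parts: soundness in general, and exactness when every clause $C_i$ is literally $\expr_i > 0$. For soundness I would reduce to a per-clause statement and then run an induction on the iteration count, just as in the proofs of \Cref{thm:conditional-one-way} and \Cref{thm:conditional-recurrent}. Concretely, suppose $\vec{x}' = \vec{a}^n(\vec{x})$ together with $\expr_i(\vec{x}) > 0$ and $\expr_i(\vec{a}^{n-1}(\vec{x})) > 0$ for all $i$, and fix some $n > 0$. It suffices to show $\phi(\vec{a}^j(\vec{x}))$ holds for all $j \in [0, n-1]$, since then each of the $n$ steps $\vec{a}^j(\vec{x}) \longrightarrow_{\ref{loop}} \vec{a}^{j+1}(\vec{x})$ is licensed and hence $\vec{x} \longrightarrow_{\ref{loop}}^n \vec{x}'$. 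Because $\phi \equiv \bigwedge_i C_i$ and $\expr_i > 0$ is one of the disjuncts of $C_i$, it is in turn enough to show $\expr_i(\vec{a}^j(\vec{x})) > 0$ for every $i$ and every $j \in [0, n-1]$.

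The core is the following ``eventual decrease'' claim about a single sequence $s_j \Def \expr_i(\vec{a}^j(\vec{x}))$: the hypothesis says $s_j \geq s_{j+1} \implies s_{j+1} \geq s_{j+2}$ for all $j \geq 0$, i.e., once the sequence makes a non-increasing step it stays non-increasing forever. From this I would argue that the sequence $s_0, \dots, s_{n-1}$ is \emph{unimodal}: it first (weakly) increases and then (weakly) decreases, so $\min(s_0, \dots, s_{n-1}) = \min(s_0, s_{n-1})$. The cleanest way to see this: let $j^*$ be the least index with $s_{j^*} \geq s_{j^*+1}$ (if none exists in range, the whole prefix is strictly increasing and the minimum is $s_0$); for $j < j^*$ we have $s_j < s_{j+1}$ so the prefix up to $j^*$ is increasing, and from $j^*$ onward the hypothesis propagates non-increase, so the suffix is non-increasing down to $s_{n-1}$. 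Either way every $s_j$ with $0 \le j \le n-1$ satisfies $s_j \geq \min(s_0, s_{n-1})$. Since $s_0 = \expr_i(\vec{x}) > 0$ and $s_{n-1} = \expr_i(\vec{a}^{n-1}(\vec{x})) > 0$ by assumption, we get $s_j > 0$ for all $j \in [0,n-1]$, which is exactly what was needed. Doing this for every $i$ finishes soundness.

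For exactness under the extra assumption $C_i \equiv \expr_i > 0$, I must additionally show the converse: if $\vec{x} \longrightarrow_{\ref{loop}}^n \vec{x}'$ then the formula holds. The equality $\vec{x}' = \vec{a}^n(\vec{x})$ is immediate from the definition of $\longrightarrow_{\ref{loop}}^n$. For the inequations, $\vec{x} \longrightarrow_{\ref{loop}}^n \vec{x}'$ means $\phi(\vec{a}^j(\vec{x}))$ holds for all $j \in [0, n-1]$; since now $\phi \equiv \bigwedge_i (\expr_i > 0)$, this directly gives $\expr_i(\vec{a}^j(\vec{x})) > 0$ for all $i$ and all $j \in [0,n-1]$, and in particular for $j = 0$ and $j = n-1$ (using $n > 0$ so that $n - 1 \in [0, n-1]$), which is the conjunction in the conclusion. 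So exactness is essentially routine once soundness is in place.

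**Main obstacle.** The only non-mechanical part is the unimodality argument: carefully justifying that the per-clause hypothesis $\expr_i(\vec{x}) \geq \expr_i(\vec{a}(\vec{x})) \implies \expr_i(\vec{a}(\vec{x})) \geq \expr_i(\vec{a}^2(\vec{x}))$ — instantiated at $\vec{a}^j(\vec{x})$ for each $j$ — forces the sequence $s_j$ to have a single ``peak,'' so that its minimum over $[0, n-1]$ is attained at an endpoint. I would phrase this as a small standalone lemma or an inline induction ("if $s_{j} \ge s_{j+1}$ for some $j$, then $s_{j'} \ge s_{j'+1}$ for all $j' \ge j$"), being careful about the boundary cases $n = 1$ (where the two required inequations coincide and there is nothing to propagate) and the case where the sequence never stops increasing within range. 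Everything else is bookkeeping of the kind already carried out in the proofs of the earlier conditional acceleration theorems.
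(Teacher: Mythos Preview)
Your proposal is correct and follows essentially the same route as the paper: the paper defers to the conditional version (\Cref{thm:ev-dec-cond}), whose proof likewise fixes an index $i$, establishes that the sequence $\expr_i(\vec{a}^m(\vec{x}))$ is unimodal (strictly increasing up to a pivot, then non-increasing) via two small inductions, concludes that each value is bounded below by $\min(\expr_i(\vec{x}),\expr_i(\vec{a}^{n-1}(\vec{x})))$, and handles exactness by the same trivial observation you give. The only cosmetic difference is the choice of pivot---the paper takes the first index where the maximum is attained, whereas you take the first index where a non-increasing step occurs---but these yield the same split and the same argument.
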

\noindent
We will prove the more general \Cref{thm:ev-dec-cond} later in this section.

\begin{example}
With \Cref{thm:ev-dec}, we can accelerate \ref{eq:phases} to
\begin{align*}
  & \mat{x_1'\\x_2'} = \mat{\tfrac{n-n^2}{2} + x_2 \cdot n + x_1\\x_2 - n} \\
  {} \land {} & x_1 > 0 \\
  {} \land {} & \tfrac{n-1-(n-1)^2}{2} + x_2 \cdot (n-1) + x_1 > 0
\end{align*}
as we have
\begin{align*}
  &(x_1 \geq x_1 + x_2) \equiv (0 \geq x_2) \implies \\
  & \qquad (0 \geq x_2 - 1) \equiv (x_1 + x_2 \geq x_1 + x_2 + x_2 - 1).
\end{align*}
\end{example}

Turning \Cref{thm:ev-dec} into a conditional acceleration technique is straightforward.
\begin{theorem}[Conditional Acceleration via Eventual Decrease]
  \label{thm:ev-dec-cond}
  If we have $\chi(\vec{x}) \equiv \bigwedge_{i=1}^k C_i$ where each clause $C_i$ contains an inequation $\expr_i(\vec{x}) > 0$ such that
  \begin{equation}
    \label{eq:ev-dec-cond-pre}
    \resizebox{0.415\textwidth}{!}{$\check{\phi}(\vec{x}) \land \expr_i(\vec{x}) \geq \expr_i(\vec{a}(\vec{x})) \implies \expr_i(\vec{a}(\vec{x})) \geq \expr_i(\vec{a}^2(\vec{x})),$}
  \end{equation}
  then the following conditional acceleration technique is sound:
  \begin{equation}
    \submission{\notag}
    \label{eq:ev-dec-cond}
    \begin{aligned}
      &(\langle \chi,\vec{a} \rangle, \check\phi) \mapsto \Big( \vec{x}' = \vec{a}^n(\vec{x}) \\
      &\qquad\qquad\quad \land \bigwedge_{i=1}^k \expr_i(\vec{x}) > 0 \land \expr_i(\vec{a}^{n-1}(\vec{x})) > 0 \Big)
    \end{aligned}
  \end{equation}
  If $C_i \equiv \expr_i > 0$ for all $i \in [1,k]$, then it is exact.
\end{theorem}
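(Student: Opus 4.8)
The plan is to prove soundness by induction on the number of iterations $m > 0$, mirroring the structure of the proof of \Cref{thm:conditional-one-way}, which treats the monotonic (rather than eventually monotonic) case. The key insight to formalize is that along a valid run under $\langle\check\phi,\vec{a}\rangle$, if each $\expr_i$ is positive at the start and positive at step $m-1$, then it is positive at every intermediate step: this follows because once $\expr_i$ starts (weakly) decreasing along the $\check\phi$-run, it keeps decreasing, by \eqref{eq:ev-dec-cond-pre}. So the first step is to establish the following ``eventual decrease along a $\check\phi$-run'' lemma: if $\vec{x} \longrightarrow^m_{\langle\check\phi,\vec{a}\rangle}\vec{a}^m(\vec{x})$, then for each $i$ the sequence $\expr_i(\vec{x}), \expr_i(\vec{a}(\vec{x})), \ldots, \expr_i(\vec{a}^m(\vec{x}))$ is ``unimodal'' in the sense that it first (weakly) increases and then (weakly) decreases — more precisely, there is no index $j$ with $\expr_i(\vec{a}^j(\vec{x})) \geq \expr_i(\vec{a}^{j+1}(\vec{x}))$ but $\expr_i(\vec{a}^{j+1}(\vec{x})) < \expr_i(\vec{a}^{j+2}(\vec{x}))$, provided $j+1 \leq m-1$ so that $\check\phi(\vec{a}^j(\vec{x}))$ holds. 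This is just iterating \eqref{eq:ev-dec-cond-pre}.

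The second step is the main induction. We want: $\vec{x} \longrightarrow^m_{\langle\check\phi,\vec{a}\rangle}\vec{a}^m(\vec{x}) \land \bigwedge_i (\expr_i(\vec{x}) > 0 \land \expr_i(\vec{a}^{m-1}(\vec{x})) > 0) \implies \vec{x}\longrightarrow^m_{\langle\chi,\vec{a}\rangle}\vec{a}^m(\vec{x})$. The base case $m=1$ is immediate since $\expr_i(\vec{x}) > 0$ for all $i$ gives $\chi(\vec{x})$, hence $\vec{x}\longrightarrow_{\langle\chi,\vec{a}\rangle}\vec{a}(\vec{x})$. For the step, from the hypothesis at $m+1$ we extract (i) a $\check\phi$-run of length $m$, and (ii) positivity of each $\expr_i$ at positions $0$ and $m-1$; to invoke the induction hypothesis we need positivity at position $m-1$, which we have, and at position $0$, which we have. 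But we also need each $\expr_i(\vec{a}^j(\vec{x})) > 0$ for $j = 0,\ldots,m$ to conclude the full $\langle\chi,\vec{a}\rangle$-run of length $m+1$; this is where the unimodality lemma from step one does the work: since $\expr_i$ is positive at $0$ and at $m$ (we have $\expr_i(\vec{a}^m(\vec{x})) > 0$ from the $m+1$-hypothesis) and the sequence $\expr_i(\vec{x}),\ldots,\expr_i(\vec{a}^m(\vec{x}))$ first increases then decreases, it is positive throughout. That yields $\chi(\vec{a}^j(\vec{x}))$ for all $j \in [0,m]$, and combined with the $\langle\chi,\vec a\rangle$-run of length $m$ from the induction hypothesis we get the run of length $m+1$. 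Actually it is cleaner to not go through the induction hypothesis at all for the relational part and instead directly argue: unimodality plus positivity at the two endpoints $0$ and $n-1$ gives $\expr_i > 0$ at positions $0,\ldots,n-1$, i.e. $\chi(\vec a^j(\vec x))$ for $j\in[0,n-1]$, which is exactly $\vec x\longrightarrow^n_{\langle\chi,\vec a\rangle}\vec a^n(\vec x)$; I will decide between the two presentations when writing it out, preferring whichever matches the induction style of \Cref{thm:conditional-one-way}.

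The third step is exactness under the assumption $C_i \equiv \expr_i > 0$ for all $i$. Here $\chi(\vec x)\equiv\bigwedge_i \expr_i(\vec x) > 0$, so a run $\vec x\longrightarrow^n_{\langle\chi\land\check\phi,\vec a\rangle}\vec a^n(\vec x)$ forces $\chi(\vec a^j(\vec x))$ for $j\in[0,n-1]$, in particular $\expr_i(\vec x) > 0$ and $\expr_i(\vec a^{n-1}(\vec x)) > 0$ for each $i$, which is precisely the right-hand formula (together with the trivially-true $\vec x' = \vec a^n(\vec x)$); so exactness is immediate, just as in \Cref{thm:conditional-one-way}.

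The main obstacle is the unimodality argument in step one: one must be careful that \eqref{eq:ev-dec-cond-pre} only propagates the decrease using $\check\phi(\vec a^j(\vec x))$, which is available for $j\in[0,m-1]$ along a $\check\phi$-run of length $m$, so the ``no second increase'' conclusion is only valid at the indices the run actually covers — but that is exactly the range we need. Everything else is routine bookkeeping about iterated applications of $\vec a$ and splitting/combining runs of the relations $\longrightarrow_{\langle\cdot,\vec a\rangle}$.
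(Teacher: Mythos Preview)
Your proposal is correct and lands on essentially the same approach as the paper: the ``cleaner'' direct argument you prefer---establish unimodality of each sequence $\expr_i(\vec{x}),\expr_i(\vec{a}(\vec{x})),\ldots,\expr_i(\vec{a}^{n-1}(\vec{x}))$ along the $\check\phi$-run by iterating \eqref{eq:ev-dec-cond-pre}, then conclude positivity at all intermediate indices from positivity at the two endpoints---is exactly what the paper does, and exactness is handled identically. The only cosmetic difference is that the paper makes the unimodality argument concrete by picking the minimal index $j$ where the maximum is attained and running two separate inductions (backward on $[0,j-1]$ to get strict increase, forward on $[j,n-1]$ to get weak decrease), which is just one explicit realization of your ``no decrease followed by increase'' lemma.
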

\makeproof{thm:ev-dec-cond}{
  For soundness, we need to show
  \begin{equation}
    \label{eq:ev-dec-goal-orig}
    \begin{aligned}
      & \Big( \vec{x} \longrightarrow^n_{\langle \check\phi,\vec{a} \rangle} \vec{a}^n(\vec{x}) \\
      & \land \bigwedge_{i=1}^k \expr_i(\vec{x}) > 0 \land \expr_i(\vec{a}^{n-1}(\vec{x})) > 0 \Big) \\
      & \qquad\qquad\qquad \implies \vec{x} \longrightarrow^n_{\langle \chi, \vec{a}\rangle} \vec{a}^n(\vec{x}).
    \end{aligned}
  \end{equation}
  Assume
  \begin{equation}
    \label{eq:ev-dec-assumption-1}
    \vec{x} \longrightarrow^n_{\langle \check\phi,\vec{a} \rangle} \vec{a}^n(\vec{x}) \land \bigwedge_{i=1}^k \expr_i(\vec{x}) > 0 \land \expr_i(\vec{a}^{n-1}(\vec{x})) > 0.
  \end{equation}
  This implies
  \begin{equation}
    \label{eq:ev-dec-assumption2}
    \bigwedge_{i=0}^{n-1} \check\phi(\vec{a}^i(\vec{x})).
  \end{equation}
  In the following, we show
  \begin{equation}
    \label{eq:ev-dec-goal}
    \bigwedge_{i=1}^k\bigwedge_{m=0}^{n-1} \expr_i(\vec{a}^m(\vec{x})) \geq \min(\expr_i(\vec{x}), \expr_i(\vec{a}^{n-1}(\vec{x}))).
  \end{equation}
  Then the claim \eqref{eq:ev-dec-goal-orig} follows, as we have
  \begin{align*}
    &\bigwedge_{i=1}^k\bigwedge_{m=0}^{n-1} \expr_i(\vec{a}^m(\vec{x})) \geq \min(\expr_i(\vec{x}), \expr_i(\vec{a}^{n-1}(\vec{x})))\\
    \implies& \bigwedge_{i=1}^k\bigwedge_{m=0}^{n-1} \expr_i(\vec{a}^m(\vec{x})) > 0 \tag{due to \eqref{eq:ev-dec-assumption-1}}\\
    \implies& \bigwedge_{m=0}^{n-1} \chi(\vec{a}^m(\vec{x})) \tag{by definition of $\expr_i$}\\
    \iff& \vec{x} \longrightarrow^n_{\langle \chi, \vec{a}\rangle} \vec{a}^n(\vec{x}).
  \end{align*}
  Let $i$ be arbitrary but fixed, let $\expr = \expr_i$, and let $j$ be the minimal natural number with
  \begin{equation}
    \label{eq:ev-dec-assumption}
    \expr(\vec{a}^j(\vec{x})) = \max\{\expr(\vec{a}^m(\vec{x})) \mid m \in [0,n-1]\}.
  \end{equation}
  We first prove
  \begin{equation}
    \label{eq:ev-dec-ih}
    \expr(\vec{a}^{m}(\vec{x})) < \expr(\vec{a}^{m+1}(\vec{x}))
  \end{equation}
  for all $m \in [0,j-1]$ by backward induction on $m$.
  If $m=j-1$, then
  \begin{align*}
    & \expr(\vec{a}^{m}(\vec{x})) \\
    = {}& \expr(\vec{a}^{j-1}(\vec{x})) \tag{as $m=j-1$} \\
    < {}& \expr(\vec{a}^{j}(\vec{x})) \tag{due to \eqref{eq:ev-dec-assumption} as $j$ is minimal} \\
    = {}& \expr(\vec{a}^{m+1}(\vec{x})). \tag{as $m=j-1$}
  \end{align*}

  For the induction step, note that \eqref{eq:ev-dec-cond-pre} implies
  \begin{equation}
    \label{eq:ev-dec-cond-pre2}
    \resizebox{0.415\textwidth}{!}{
        $\expr(\vec{a}(\vec{x})) < \expr(\vec{a}^2(\vec{x})) \implies \neg\check\phi(\vec{x}) \lor \expr(\vec{x}) < \expr(\vec{a}(\vec{x})).$
      }
  \end{equation}
  Then we have
  \begin{align*}
    &\expr(\vec{a}^{m+1}(\vec{x})) < \expr(\vec{a}^{m+2}(\vec{x})) \tag{due to the induction hypothesis \eqref{eq:ev-dec-ih}} \\
    \implies & \neg\check\phi(\vec{a}^{m}(\vec{x})) \lor \expr(\vec{a}^{m}(\vec{x})) < \expr(\vec{a}^{m+1}(\vec{x})) \tag{by \eqref{eq:ev-dec-cond-pre2}} \\
    \implies & \expr(\vec{a}^{m}(\vec{x})) < \expr(\vec{a}^{m+1}(\vec{x})). \tag{by \eqref{eq:ev-dec-assumption2}}
  \end{align*}

  Now we prove
  \begin{equation}
    \label{eq:ev-dec-ih2}
    \expr(\vec{a}^m(\vec{x})) \geq \expr(\vec{a}^{m+1}(\vec{x}))
  \end{equation}
  for all $m \in [j,n-1]$ by induction on $m$.
  If $m=j$, then
  \begin{align*}
    & \expr(\vec{a}^{m}(\vec{x})) \\
    = {}& \expr(\vec{a}^{j}(\vec{x})) \tag{as $m=j$} \\
    = {}& \max\{\expr(\vec{a}^m(\vec{x})) \mid m \in [0,n-1]\} \tag{due to \eqref{eq:ev-dec-assumption}}\\
    \geq {}& \expr(\vec{a}^{j+1}(\vec{x})) \\
    = {}& \expr(\vec{a}^{m+1}(\vec{x})).\tag{as $m=j$}
  \end{align*}

  In the induction step, we have
  \begin{align*}
    &\expr(\vec{a}^{m}(\vec{x})) \geq \expr(\vec{a}^{m+1}(\vec{x})) \tag{due to the induction hypothesis \eqref{eq:ev-dec-ih2}} \\
    \implies & \expr(\vec{a}^{m+1}(\vec{x})) \geq \expr(\vec{a}^{m+2}(\vec{x})) \tag{due to \eqref{eq:ev-dec-assumption2} and \eqref{eq:ev-dec-cond-pre}}.
  \end{align*}

  As \eqref{eq:ev-dec-ih} and \eqref{eq:ev-dec-ih2} imply
  \[
    \bigwedge_{m=0}^{n-1}\expr(\vec{a}^m(\vec{x})) \geq \min(\expr(\vec{x}), \expr(\vec{a}^{n-1}(\vec{x}))),
  \]
  this finishes the proof of \eqref{eq:ev-dec-goal} and hence shows \eqref{eq:ev-dec-goal-orig}.

  For exactness, assume $\chi(\vec{x}) \Def \bigwedge_{i=1}^k \expr_i(\vec{x}) > 0$.
  We have
  \begin{align*}
    &\vec{x} \longrightarrow^n_{\langle \chi \land \check\phi, \vec{a}\rangle} \vec{a}^n(\vec{x})\\
    \implies& \chi(\vec{x}) \land \chi(\vec{a}^{n-1}(\vec{x})) \\
    \iff& \bigwedge_{i=1}^k \expr_i(\vec{x}) > 0 \land \expr_i(\vec{a}^{n-1}(\vec{x})) > 0.
  \end{align*}
}

\begin{figure*}
    \begin{align*}
      & \prob{\vec{x}' = \vec{a}^n(\vec{x})}{\top}{x_1 > 0 \land x_3 > 0}{\vec{a}}\\
      \leadsto_e & \prob{\vec{x}' = \vec{a}^n(\vec{x}) \land x_3 > 0}{x_3 > 0}{x_1 > 0}{\vec{a}} \tag{\Cref{thm:conditional-recurrent}} \\
      \leadsto_e & \prob{\vec{x}' = \vec{a}^n(\vec{x}) \land x_3 > 0 \land x_1 > 0 \land x_1^{(n-1)} > 0}{x_3 > 0 \land x_1 > 0}{\top}{\vec{a}} \tag{\Cref{thm:ev-dec-cond}}
    \end{align*}
  \caption{${\leadsto}$-derivation for \Cref{ex:phases-variant}}
  \label{fig:deriv-phases-variant}
\end{figure*}

\begin{example}
  \label{ex:phases-variant}
  Consider the following variant of \ref{eq:phases}
  \[
    \mWhile{x_1 > 0 \land x_3 > 0}{\mat{x_1\\x_2\\x_3} \assign \mat{x_1 + x_2\\x_2 - x_3\\x_3}},
  \]
  i.e., we have $\vec{a} \Def \mat{x_1 + x_2\\x_2 - x_3\\x_3}$.
  Starting with its canonical acceleration problem, we get the derivation shown in \Cref{fig:deriv-phases-variant}, where the second step can be performed via \Cref{thm:ev-dec-cond} as
  \begin{align*}
    & (\check\phi(\vec{x}) \land \expr(\vec{x}) \geq \expr(\vec{a}(\vec{x})))\\
    {} \equiv {} & (x_3 > 0 \land x_1 \geq x_1 + x_2)\\
    {} \equiv {} & (x_3 > 0 \land 0 \geq x_2)
  \end{align*}
  implies
  \begin{align*}
    & (0 \geq x_2 - x_3) \\
    {} \equiv {} & (x_1 + x_2 \geq x_1 + x_2 + x_2 - x_3) \\
    {} \equiv {} & (\expr(\vec{a}(\vec{x})) \geq \expr(\vec{a}^2(\vec{x}))).
  \end{align*}
\end{example}

\subsection{Acceleration via Eventual Increase}

Still, all (combinations of) techniques presented so far fail for
\begin{equation}
  \label{eq:phases2}\tag{\ensuremath{\TT_{ev\text{-}inc}}}
  \mWhile{x_1 > 0}{\mat{x_1\\x_2} \assign \mat{x_1 + x_2\\x_2 + 1}}.
\end{equation}
As in the case of \ref{eq:phases}, the value of $x_1$ does not behave monotonically, i.e., $x_1 > 0$ is neither an invariant nor a converse invariant.
However, this time $x_1$ is eventually \emph{increasing}, i.e., once $x_1$ starts to grow, it never decreases again.
Thus, in this case it suffices to require that $x_1$ is positive and (weakly) increasing.

\begin{theorem}[Acceleration via Eventual Increase]
  \label{thm:ev-inc}
  If $\phi(\vec{x}) \equiv \bigwedge_{i=1}^k C_i$ where each clause $C_i$ contains an inequation $\expr_i(\vec{x}) > 0$ such that
  \[
    \expr_i(\vec{x}) \leq \expr_i(\vec{a}(\vec{x})) \implies \expr_i(\vec{a}(\vec{x})) \leq \expr_i(\vec{a}^2(\vec{x})),
  \]
  then the following acceleration technique is sound:
  \[
    \ref{loop} \mapsto \vec{x}' = \vec{a}^n(\vec{x}) \land \bigwedge_{i=1}^k 0 < \expr_i(\vec{x}) \leq \expr_i(\vec{a}(\vec{x}))
  \]
\end{theorem}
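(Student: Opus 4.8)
The plan is to verify directly the soundness condition of \Cref{def:accel}: assuming a valuation with $n > 0$, $\vec{x}' = \vec{a}^n(\vec{x})$, and $0 < \expr_i(\vec{x}) \le \expr_i(\vec{a}(\vec{x}))$ for every $i \in [1,k]$, I want to derive $\vec{x} \longrightarrow^n_{\ref{loop}} \vec{x}'$. Since $\vec{x}' = \vec{a}^n(\vec{x})$, the only candidate run is $\vec{x}, \vec{a}(\vec{x}), \ldots, \vec{a}^n(\vec{x})$, so it suffices to establish $\phi(\vec{a}^m(\vec{x}))$ for all $m \in [0,n-1]$. As each clause $C_i$ of $\phi$ contains the literal $\expr_i > 0$, it is enough to prove $\expr_i(\vec{a}^m(\vec{x})) > 0$ for all $i \in [1,k]$ and all $m \in [0,n-1]$; note the premise of the theorem is silent about $n$, so this positivity will hold for every $m \ge 0$ uniformly.

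Fix $i$ and abbreviate $\expr \Def \expr_i$. The key claim is that the sequence $\bigl(\expr(\vec{a}^m(\vec{x}))\bigr)_{m \ge 0}$ is (weakly) monotonically increasing, i.e.\ $\expr(\vec{a}^m(\vec{x})) \le \expr(\vec{a}^{m+1}(\vec{x}))$ for all $m \ge 0$. I would prove this by induction on $m$. The base case $m = 0$ is exactly the assumed inequality $\expr(\vec{x}) \le \expr(\vec{a}(\vec{x}))$. For the step, instantiate the hypothesis of the theorem, $\expr_i(\vec{x}) \le \expr_i(\vec{a}(\vec{x})) \implies \expr_i(\vec{a}(\vec{x})) \le \expr_i(\vec{a}^2(\vec{x}))$, at the point $\vec{a}^m(\vec{x})$ in place of $\vec{x}$; this yields $\expr(\vec{a}^m(\vec{x})) \le \expr(\vec{a}^{m+1}(\vec{x})) \implies \expr(\vec{a}^{m+1}(\vec{x})) \le \expr(\vec{a}^{m+2}(\vec{x}))$, and the induction hypothesis discharges the premise. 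From this monotonicity together with $\expr(\vec{x}) > 0$ we get $\expr(\vec{a}^m(\vec{x})) \ge \expr(\vec{x}) > 0$ for every $m \ge 0$, in particular for $m \in [0,n-1]$, which finishes the argument.

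Conceptually there is no real obstacle here. Unlike \emph{acceleration via eventual decrease} (\Cref{thm:ev-dec-cond}), where $\expr$ may first grow and then shrink, forcing one to bound the whole run below by $\min(\expr(\vec{x}),\expr(\vec{a}^{n-1}(\vec{x})))$ via a two-sided backward/forward induction, here the requirement $\expr(\vec{x}) \le \expr(\vec{a}(\vec{x}))$ already pins the run into the increasing phase from step $0$ on, so a single constant lower bound $\expr(\vec{x})$ suffices and one plain forward induction does the job. The only point that needs a little care is that the clauses $C_i$ may contain further disjuncts besides $\expr_i > 0$; this is why the technique is merely claimed sound rather than exact (the implication cannot be reversed in general, since an early decreasing phase could drop $\expr_i$ below its initial value while another disjunct of $C_i$ keeps the clause true). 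If desired, relativising the theorem's hypothesis to $\check{\phi}$ — which holds along any $\langle \check{\phi}, \vec{a}\rangle$-run at each intermediate point — yields a conditional variant in the style of \Cref{thm:ev-dec-cond} by the very same induction.
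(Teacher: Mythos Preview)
Your proposal is correct and follows essentially the same approach as the paper: the paper proves the more general conditional version (\Cref{thm:conditional-ev-inc}) by exactly the same forward induction on $m$ showing $\expr_i(\vec{a}^m(\vec{x})) \leq \expr_i(\vec{a}^{m+1}(\vec{x}))$, then concluding $\expr_i(\vec{a}^m(\vec{x})) > 0$ and hence $\chi(\vec{a}^m(\vec{x}))$. Your final remark about relativising to $\check\phi$ is precisely how the paper obtains the general result, the only difference being that there the induction is confined to $m \in [0,n{-}1]$ since $\check\phi(\vec{a}^m(\vec{x}))$ is only guaranteed along the given $\langle \check\phi,\vec{a}\rangle$-run.
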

\noindent
We prove the more general \Cref{thm:conditional-ev-inc} later in this section.

\begin{example}
With \Cref{thm:ev-inc}, we can accelerate \ref{eq:phases2} to
\begin{equation}
  \label{psi:phases2}\tag{\ensuremath{\psi_{ev\text{-}inc}}}
  \mat{x_1'\\x_2'} = \mat{\tfrac{n^2-n}{2} + x_2 \cdot n + x_1 \\ x_2 + n} \land 0 < x_1 \leq x_1 + x_2
\end{equation}
as we have
\[
  \begin{aligned}
    & (x_1 \leq x_1 + x_2) \equiv (0 \leq x_2) \implies \\
    & \qquad (0 \leq x_2 + 1) \equiv (x_1 + x_2 \leq x_1 + x_2 + x_2 + 1).
  \end{aligned}
\]
\end{example}
However, \Cref{thm:ev-inc} is \emph{not} exact, as the resulting formula only covers program runs where each $\expr_i$ behaves monotonically.
So \ref{psi:phases2} only covers those runs of \ref{eq:phases2} where the initial value of $x_2$ is non-negative.

Note that \Cref{thm:ev-inc} can also lead to empty under-approximations.
For example, \Cref{thm:ev-inc} can be used to accelerate \ref{loop:exp}, since the implication
\[
  x_1 \leq x_1 - 1 \implies x_1 - 1 \leq x_1 - 2
\]
is valid.
Then the resulting formula is
\[
  \mat{x_1'\\x_2'} = \mat{x_1-n\\2^n \cdot x_2} \land 0 < x_1 \leq x_1 - 1,
\]
which is unsatisfiable.
Thus, when implementing \Cref{thm:ev-inc} (or its conditional version \Cref{thm:conditional-ev-inc}), one has to check whether the resulting formula is satisfiable to avoid trivial (empty) under-approximations.

Again, turning \Cref{thm:ev-inc} into a conditional acceleration technique is straightforward.

\begin{theorem}[Conditional Acceleration via Eventual Increase]
  \label{thm:conditional-ev-inc}
  If we have $\chi(\vec{x}) \equiv \bigwedge_{i=1}^k C_i$ where each clause $C_i$ contains an inequation $\expr_i(\vec{x}) > 0$ such that
  \begin{equation}
    \label{eq:ev-inc-pre}
    \resizebox{0.41\textwidth}{!}{$\check{\phi}(\vec{x}) \land \expr_i(\vec{x}) \leq \expr_i(\vec{a}(\vec{x})) \implies \expr_i(\vec{a}(\vec{x})) \leq \expr_i(\vec{a}^2(\vec{x})),$}
  \end{equation}
  then the following conditional acceleration technique is sound:
  \begin{equation*}
    (\langle \chi, \vec{a} \rangle, \check{\phi}) \mapsto \vec{x}' = \vec{a}^n(\vec{x}) \land \bigwedge_{i=1}^k 0 < \expr_i(\vec{x}) \leq \expr_i(\vec{a}(\vec{x}))
  \end{equation*}
\end{theorem}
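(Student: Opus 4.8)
The plan is to prove soundness directly; note that, in contrast to \Cref{thm:ev-dec-cond}, no exactness claim is made, because the conjunct $\expr_i(\vec{x}) \leq \expr_i(\vec{a}(\vec{x}))$ baked into the synthesized formula excludes exactly those runs in which $\expr_i$ has not yet started to increase, so the under-approximation is genuinely proper. Concretely, I would fix $n > 0$ and $\vec{x} \in \ZZ^d$ and assume both $\vec{x} \longrightarrow^n_{\langle \check{\phi}, \vec{a} \rangle} \vec{a}^n(\vec{x})$ and $\bigwedge_{i=1}^k 0 < \expr_i(\vec{x}) \leq \expr_i(\vec{a}(\vec{x}))$, with the goal $\vec{x} \longrightarrow^n_{\langle \chi, \vec{a} \rangle} \vec{a}^n(\vec{x})$. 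As in the proof of \Cref{thm:ev-dec-cond}, the first assumption yields $\check{\phi}(\vec{a}^m(\vec{x}))$ for all $m \in [0, n-1]$. It then suffices to establish $\chi(\vec{a}^m(\vec{x}))$ for every $m \in [0,n-1]$, since from this we get $\vec{a}^m(\vec{x}) \longrightarrow_{\langle \chi, \vec{a} \rangle} \vec{a}^{m+1}(\vec{x})$ for each such $m$ and composing these $n$ steps gives the claim. Moreover, since each clause $C_i$ contains the literal $\expr_i(\vec{x}) > 0$, it is enough to show $\expr_i(\vec{a}^m(\vec{x})) > 0$ for all $i \in [1,k]$ and $m \in [0,n-1]$.

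The heart of the argument is then, for a fixed $i$ (abbreviate $\expr \Def \expr_i$), to show that the finite sequence $\expr(\vec{x}), \expr(\vec{a}(\vec{x})), \dots, \expr(\vec{a}^{n-1}(\vec{x}))$ is non-decreasing, i.e., $\expr(\vec{a}^m(\vec{x})) \leq \expr(\vec{a}^{m+1}(\vec{x}))$ for all $m \in [0, n-2]$, which I would prove by a simple forward induction on $m$. The base case $m = 0$ is precisely the assumed conjunct $\expr(\vec{x}) \leq \expr(\vec{a}(\vec{x}))$. For the inductive step I would instantiate the precondition \eqref{eq:ev-inc-pre} at $\vec{a}^m(\vec{x})$: from the induction hypothesis $\expr(\vec{a}^m(\vec{x})) \leq \expr(\vec{a}^{m+1}(\vec{x}))$ together with $\check{\phi}(\vec{a}^m(\vec{x}))$, which is available because $m \leq n-1$, we obtain $\expr(\vec{a}^{m+1}(\vec{x})) \leq \expr(\vec{a}^{m+2}(\vec{x}))$. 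Chaining these inequalities yields $\expr(\vec{a}^m(\vec{x})) \geq \expr(\vec{x})$ for all $m \in [0,n-1]$, and since $\expr(\vec{x}) > 0$ by assumption, $\expr(\vec{a}^m(\vec{x})) > 0$ follows for all such $m$, finishing the proof.

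I do not expect a real obstacle here. The reason the argument is so much shorter than that of \Cref{thm:ev-dec-cond} is structural: there the trajectory of $\expr_i$ may first rise and then fall, so its minimum over $[0,n-1]$ sits at one of the two endpoints $\expr_i(\vec{x})$, $\expr_i(\vec{a}^{n-1}(\vec{x}))$, forcing a two-sided induction around the maximal index; here the synthesized formula already commits to monotone increase starting from the very first step, so a single forward induction along the run suffices. The only points requiring a little care are bookkeeping ones — keeping the index $m$ in the inductive step within $[0,n-1]$ so that $\check{\phi}(\vec{a}^m(\vec{x}))$ is indeed guaranteed by the first assumption, and noting the degenerate case $n = 1$, where the chain of inequalities is empty and $\chi(\vec{x})$ follows immediately from the conjuncts $\expr_i(\vec{x}) > 0$.
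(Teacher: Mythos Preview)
Your proposal is correct and follows essentially the same approach as the paper's proof: both derive $\check{\phi}(\vec{a}^m(\vec{x}))$ for $m \in [0,n-1]$ from the first hypothesis, then fix $i$ and prove by forward induction on $m$ that $\expr_i(\vec{a}^m(\vec{x})) \leq \expr_i(\vec{a}^{m+1}(\vec{x}))$, using the base case from the assumed conjunct and the inductive step by instantiating \eqref{eq:ev-inc-pre} at $\vec{a}^m(\vec{x})$. Your index range $m \in [0,n-2]$ is in fact slightly tighter than the paper's $0 \leq m < n$, but the argument is otherwise the same.
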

\makeproof{thm:conditional-ev-inc}{
  We need to show
  \begin{multline*}
    \vec{x} \longrightarrow_{\langle \check\phi, \vec{a} \rangle}^n \vec{a}^n(\vec{x}) \land \bigwedge_{i=1}^k 0 < \expr_i(\vec{x}) \leq \expr_i(\vec{a}(\vec{x})) \\
    \implies \vec{x} \longrightarrow^n_{\langle \chi, \vec{a} \rangle} \vec{a}^n(\vec{x}).
  \end{multline*}
  Due to $\vec{x} \longrightarrow_{\langle \check\phi, \vec{a} \rangle}^n \vec{a}^n(\vec{x})$, we have
  \begin{equation}
    \label{eq:ev-inc-check}
    \bigwedge_{j=0}^{n-1} \check\phi(\vec{a}^j(\vec{x})).
  \end{equation}
  Let $i$ be arbitrary but fixed and assume
  \begin{equation}
    \label{eq:ev-inc-assumption}
    0 < \expr_i(\vec{x}) \leq \expr_i(\vec{a}(\vec{x})).
  \end{equation}
  We prove
  \begin{equation}
    \label{eq:ev-inc-ih}
    \expr_i(\vec{a}^m(\vec{x})) \leq \expr_i(\vec{a}^{m+1}(\vec{x}))
  \end{equation}
  for all $0 \leq m < n $ by induction on $m$.
  Then we get
  \[
    0 < \expr_i(\vec{a}^m(\vec{x}))
  \]
  and thus $\chi(\vec{a}^m(\vec{x}))$ for all $0 \leq m < n$ due to \eqref{eq:ev-inc-assumption} and hence the claim follows.
  If $m=0$, then
  \[
    \expr_i(\vec{a}^m(\vec{x})) = \expr_i(\vec{x}) \leq \expr_i(\vec{a}(\vec{x})) = \expr_i(\vec{a}^{m+1}(\vec{x})). \tag{due to \eqref{eq:ev-inc-assumption}}
  \]
  In the induction step, note that \eqref{eq:ev-inc-check} implies
  \[
    \check\phi(\vec{a}^m(\vec{x}))
  \]
  as $0 \leq m < n$.
  Together with the induction hypothesis \eqref{eq:ev-inc-ih}, we get
  \[
    \check\phi(\vec{a}^m(\vec{x})) \land \expr_i(\vec{a}^m(\vec{x})) \leq \expr_i(\vec{a}^{m+1}(\vec{x})).
  \]
  By \eqref{eq:ev-inc-pre}, this implies
  \[
    \expr_i(\vec{a}^{m+1}(\vec{x})) \leq \expr_i(\vec{a}^{m+2}(\vec{x})),
  \]
  as desired.
}

\begin{figure*}
    \begin{align*}
      & \prob{\vec{x}' = \vec{a}^n(\vec{x})}{\top}{x_1 > 0 \land x_3 > 0}{\vec{a}}\\
      \leadsto_e & \prob{\vec{x}' = \vec{a}^n(\vec{x}) \land x_3 > 0}{x_3 > 0}{x_1 > 0}{\vec{a}} \tag{\Cref{thm:conditional-recurrent}} \\
      \leadsto_{\phantom{e}} & \prob{\vec{x}' = \vec{a}^n(\vec{x}) \land x_3 > 0 \land 0 < x_1 \leq x_1 + x_2}{x_3 > 0 \land x_1 > 0}{\top}{\vec{a}} \tag{\Cref{thm:conditional-ev-inc}}
    \end{align*}
  \caption{${\leadsto}$-derivation for \Cref{ex:phases2-variant}}
  \label{fig:deriv-phases2-variant}
\end{figure*}

\begin{example}
  \label{ex:phases2-variant}
  Consider the following variant of \ref{eq:phases2}.
  \[
    \mWhile{x_1 > 0 \land x_3 > 0}{\mat{x_1\\x_2\\x_3} \assign \mat{x_1 + x_2\\x_2 + x_3\\x_3}}
  \]
  So we have $\vec{a} \Def \mat{x_1 + x_2\\x_2 + x_3\\x_3}$.
  Starting with the canonical acceleration problem, we get the derivation shown in \Cref{fig:deriv-phases2-variant}, where the second step can be performed via \Cref{thm:conditional-ev-inc} as
  \begin{align*}
    & (\check\phi(\vec{x}) \land \expr(\vec{x}) \leq \expr(\vec{a}(\vec{x}))) \\
    {} \equiv {} & (x_3 > 0 \land x_1 \leq x_1 + x_2) \\
    {} \equiv {} & (x_3 > 0 \land 0 \leq x_2)
  \end{align*}
  implies
  \begin{align*}
    & (0 \leq x_2 + x_3) \\
    {} \equiv {} & (x_1 + x_2 \leq x_1 + x_2 + x_2 + x_3) \\
    {} \equiv {} & (\expr(\vec{a}(\vec{x})) \leq \expr(\vec{a}^2(\vec{x}))).
  \end{align*}
\end{example}

We also considered versions of \Cref{thm:ev-dec-cond,thm:conditional-ev-inc} where the inequations in \eqref{eq:ev-dec-cond-pre} and \eqref{eq:ev-inc-pre} are strict, but this did not lead to an improvement in our experiments.
Moreover, we experimented with a variant of \Cref{thm:conditional-ev-inc} that splits the loop under consideration into two consecutive loops, accelerates them independently, and composes the results.
While such an approach can accelerate loops like \ref{psi:phases2} exactly, the impact on our experimental results was minimal.
Thus, we postpone an in-depth investigation of this idea to future work.

\section{Proving Non-Termination via Loop Acceleration}
\label{sec:nonterm}

We now aim for proving \emph{non-termination}.

\begin{definition}[(Non-)Termination]
  \label{def:term}
  We call a vector $\vec{x} \in \ZZ^d$ a \emph{witness of non-termination} for \ref{loop} (denoted $\nt{\vec{x}}{\ref{loop}}$) if
  \[
    \forall n \in \NN.\ \phi(\vec{a}^n(\vec{x})).
  \]
  If there is such a witness, then \ref{loop} is \emph{non-terminating}.
  Otherwise, \ref{loop} \emph{terminates}.
\end{definition}

To this end, we search for a formula that characterizes a non-empty set of witnesses of non-termination, called a \emph{certificate of non-termination}.

\begin{definition}[Certificate of Non-Termination]
  \label{def:certnonterm}
  We call a formula $\eta \in \Prop{\AAA(\vec{x})}$ a \emph{certificate of non-termination} for \ref{loop} if $\eta$ is satisfiable and
  \[
    \forall \vec{x} \in \ZZ^d.\ \eta(\vec{x}) \implies \nt{\vec{x}}{\ref{loop}}.
  \]
\end{definition}

Clearly, the loops \ref{ex:recurrent} and \ref{eq:phases2} that were used to motivate the acceleration techniques \emph{Acceleration via Monotonic Increase} (\Cref{thm:recurrent}) and \emph{Acceleration via Eventual Increase} (\Cref{thm:ev-inc}) are non-terminating:
\ref{ex:recurrent} diverges for all initial valuations that satisfy its guard $x > 0$ and \ref{eq:phases2} diverges if the initial values are sufficiently large, such that $x_1$ remains positive until $x_2$ becomes non-negative and hence $x_1$ starts to increase.

As we will see in the current section, this is not a coincidence:
Unsurprisingly, all loops that can be accelerated with \emph{Acceleration via Monotonic Increase} or \emph{Acceleration via Eventual Increase} are non-terminating.
More interestingly, the same holds for all loops that can be accelerated using our calculus from \Cref{sec:integration}, as long as all ${\leadsto}$-steps use one of the conditional versions of the aforementioned acceleration techniques, i.e., \emph{Conditional Acceleration via Monotonic Increase} (\Cref{thm:conditional-recurrent}) or \emph{Conditional Acceleration via Eventual Increase} (\Cref{thm:conditional-ev-inc}).
Thus, we obtain a novel, modular technique for proving non-termination of loops \ref{loop}.

Recall that derivations of our calculus from \Cref{sec:integration} start with \emph{canonical acceleration problems} (\Cref{def:acceleration_problem}) whose first component is
\[
  \vec{x}' = \vec{a}^n(\vec{x}).
\]
It relates the values of the variables before evaluating the loop ($\vec{x}$) to the values of the variables after evaluating the loop ($\vec{x}'$) using the closed form ($\vec{a}^n$).
However, if we are interested in non-terminating runs, then the values of the variables after evaluating the loop are obviously irrelevant.
Hence, attempts to prove non-termination operate on a variation of \emph{acceleration problems}, which we call \emph{non-termination problems}.
\begin{definition}[Non-Termination Problem]
  A tuple
  \[
    \ntprob{\psi}{\check{\phi}}{\hat{\phi}}{\vec{a}}
  \]
  where $\psi, \check{\phi}, \hat{\phi} \in \Prop{\AAA(\vec{x})}$ and $\vec{a}: \ZZ^d \to \ZZ^d$ is a \emph{non-termination problem}.
  It is \emph{consistent} if every model of $\psi$ is a witness of non-termination for $\langle \check{\phi}, \vec{a} \rangle$ and \emph{solved} if it is consistent and $\hat{\phi} \equiv \top$.
  The \emph{canonical non-termination problem} of a loop \ref{loop} is
  \[
    \ntprob{\top}{\top}{\phi}{\vec{a}}.
  \]
\end{definition}
In particular, this means that the technique presented in the current section can also be applied to loops where $\vec{a}^n$ cannot be expressed in closed form.

\begin{example}
  \label{ex:canonical-nt}
  The canonical non-termination problem of \ref{eq:phases2} is
  \[
    \ntprob{\top}{\top}{x_1>0}{\mat{x_1+x_2\\x_2+1}}.
  \]
\end{example}

We use a variation of \emph{conditional acceleration techniques} (\Cref{def:cond-accel}), which we call \emph{conditional non-termination techniques}, to simplify the canonical non-termination problem of the analyzed loop.
\begin{definition}[Conditional Non-Termination Technique]
  \label{def:cond-nonterm}
  We call a partial function
  \[
    \nonterm: \MLoop \times \Prop{\AAA(\vec{x})} \rightharpoonup \Prop{\AAA(\vec{x})}
  \]
  a \emph{conditional non-termination technique} if
  \[
    \nt{\vec{x}}{\langle \check{\phi}, \vec{a} \rangle} \land \nonterm(\langle \chi, \vec{a} \rangle,\check{\phi}) \quad \text{implies} \quad \nt{\vec{x}}{\langle \chi, \vec{a} \rangle}
  \]
  for all $(\langle \chi, \vec{a} \rangle,\check{\phi}) \in \dom(\nonterm)$ and all $\vec{x} \in \ZZ^d$.
\end{definition}
Thus, we obtain the following variation of our calculus from \Cref{sec:integration}.
\begin{definition}[Non-Termination Calculus]
  \label{def:nt-calculus}
  The relation ${\leadstont}$ on non-termination problems is defined by the rule
  \[
    \infer{
      \ntprob{\psi_1}{\check{\phi}}{\hat{\phi}}{\vec{a}} \leadstont \ntprob{\psi_1 \cup \psi_2}{\check{\phi} \cup \chi}{\hat{\phi} \setminus \chi}{\vec{a}}
    }{
      \emptyset \neq \chi \subseteq \hat{\phi} & \nonterm(\langle \chi, \vec{a} \rangle, \check{\phi}) = \psi_2
    }
  \]
  where $\nonterm$ is a conditional non-termination technique.
\end{definition}

Like ${\leadsto}$, the relation ${\leadstont}$ preserves consistency.
Hence, we obtain the following theorem, which shows that our calculus is indeed suitable for proving non-termination.

\begin{theorem}[Correctness of ${\leadstont}$]
  \label{thm:nonterm-correct}
  If
  \[
    \ntprob{\top}{\top}{\phi}{\vec{a}} \leadstont^* \ntprob{\psi}{\check{\phi}}{\top}{\vec{a}},
  \]
  and $\psi$ is satisfiable, then $\psi$ is a certificate of non-termination for \ref{loop}.
\end{theorem}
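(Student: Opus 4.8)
The plan is to follow the same pattern as for the acceleration calculus: prove an auxiliary lemma stating that $\leadstont$ preserves consistency (the exact analogue of \Cref{thm:calculus-sound}), and then read the theorem off from it exactly as \Cref{cor:calculus-sound} follows from \Cref{thm:calculus-sound}.

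First I would record two bookkeeping facts about a $\leadstont$-derivation issuing from the canonical non-termination problem $\ntprob{\top}{\top}{\phi}{\vec{a}}$. (i) Every $\leadstont$-step only moves a nonempty set of clauses $\chi$ from $\hat{\phi}$ to $\check{\phi}$ and leaves $\vec{a}$ unchanged, so throughout the derivation the clause-set union $\check{\phi} \cup \hat{\phi}$ stays equal to $\phi$; in particular, once $\hat{\phi} \equiv \top$ we have $\check{\phi} \equiv \phi$, hence $\langle \check{\phi}, \vec{a} \rangle = \ref{loop}$. (ii) The initial problem is consistent, because the guard of $\langle \top, \vec{a} \rangle$ is $\top$, so $\forall n \in \NN.\ \top(\vec{a}^n(\vec{x}))$ holds vacuously and every $\vec{x} \in \ZZ^d$ — in particular every model of $\top$ — is a witness of non-termination for $\langle \top, \vec{a} \rangle$.

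Next I would prove the consistency-preservation lemma: if $\ntprob{\psi_1}{\check{\phi}}{\hat{\phi}}{\vec{a}}$ is consistent and $\ntprob{\psi_1}{\check{\phi}}{\hat{\phi}}{\vec{a}} \leadstont \ntprob{\psi_1 \cup \psi_2}{\check{\phi} \cup \chi}{\hat{\phi} \setminus \chi}{\vec{a}}$ via a conditional non-termination technique $\nonterm$ with $\nonterm(\langle \chi, \vec{a} \rangle, \check{\phi}) = \psi_2$, then the successor problem is consistent. To see this, take any model $\vec{x}$ of $\psi_1 \cup \psi_2$, i.e., of $\psi_1 \land \psi_2$. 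From $\vec{x} \models \psi_1$ and consistency of the predecessor we get $\nt{\vec{x}}{\langle \check{\phi}, \vec{a} \rangle}$, that is, $\forall n.\ \check{\phi}(\vec{a}^n(\vec{x}))$. From $\vec{x} \models \psi_2 = \nonterm(\langle \chi, \vec{a} \rangle, \check{\phi})$ together with $\nt{\vec{x}}{\langle \check{\phi}, \vec{a} \rangle}$ and the defining property of conditional non-termination techniques (\Cref{def:cond-nonterm}) we obtain $\nt{\vec{x}}{\langle \chi, \vec{a} \rangle}$, that is, $\forall n.\ \chi(\vec{a}^n(\vec{x}))$. Conjoining the two quantified statements yields $\forall n.\ (\check{\phi} \land \chi)(\vec{a}^n(\vec{x}))$, i.e., $\nt{\vec{x}}{\langle \check{\phi} \land \chi, \vec{a} \rangle}$; since the clause set $\check{\phi} \cup \chi$ denotes $\check{\phi} \land \chi$, this is precisely consistency of the successor problem. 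Combining (ii), this lemma by induction on the length of the derivation, and (i): the reached problem $\ntprob{\psi}{\check{\phi}}{\top}{\vec{a}}$ is consistent with $\check{\phi} \equiv \phi$, so every model of $\psi$ is a witness of non-termination for $\ref{loop}$; as $\psi$ is assumed satisfiable, it has at least one model, so the described set of witnesses is nonempty, and $\psi$ is a certificate of non-termination for $\ref{loop}$ in the sense of \Cref{def:certnonterm}.

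I do not expect any real obstacle here: the argument is a routine transposition of \Cref{thm:calculus-sound} and \Cref{cor:calculus-sound}, replacing $\vec{x} \longrightarrow^n_{\langle \cdot, \vec{a} \rangle} \vec{x}'$ by the "for all $n$" predicate $\nt{\cdot}{\langle \cdot, \vec{a} \rangle}$, and replacing the identity ${\longrightarrow_{\langle \check{\phi}, \vec{a} \rangle}} \cap {\longrightarrow_{\langle \chi, \vec{a} \rangle}} = {\longrightarrow_{\langle \check{\phi} \land \chi, \vec{a} \rangle}}$ by the equally immediate observation that $\big(\forall n.\ \check{\phi}(\vec{a}^n(\vec{x}))\big) \land \big(\forall n.\ \chi(\vec{a}^n(\vec{x}))\big)$ is equivalent to $\forall n.\ (\check{\phi} \land \chi)(\vec{a}^n(\vec{x}))$. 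The only point that deserves a sentence of care is the bookkeeping in (i): one must use the CNF/clause-set identification so that $\check{\phi} \cup \hat{\phi}$ is literally $\phi$ along the whole derivation, ensuring that a solved non-termination problem really speaks about $\ref{loop}$ rather than merely about a formula equivalent to a part of its guard.
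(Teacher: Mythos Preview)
Your proposal is correct and follows essentially the same approach as the paper: both prove that $\leadstont$ preserves consistency via the chain $\psi_1 \land \psi_2 \implies \nt{\vec{x}}{\langle \check{\phi},\vec{a}\rangle} \land \psi_2 \implies \nt{\vec{x}}{\langle \check{\phi},\vec{a}\rangle} \land \nt{\vec{x}}{\langle \chi,\vec{a}\rangle} \iff \nt{\vec{x}}{\langle \check{\phi}\land\chi,\vec{a}\rangle}$, and then conclude. You are in fact slightly more explicit than the paper about the bookkeeping facts (initial consistency and $\check{\phi}\equiv\phi$ at the end), which the paper subsumes under ``then the claim follows immediately.''
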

\begin{proof}
  We prove that our calculus preserves consistency, then the claim follows immediately.
  Assume
  \[
    \ntprob{\psi_1}{\check{\phi}}{\hat{\phi}}{\vec{a}} \leadstont \ntprob{\psi_1 \cup \psi_2}{\check{\phi} \cup \chi}{\hat{\phi} \setminus \chi}{\vec{a}}
  \]
  where $\ntprob{\psi_1}{\check{\phi}}{\hat{\phi}}{\vec{a}}$ is consistent and
  \[
    \nonterm(\langle \chi, \vec{a} \rangle, \check{\phi}) = \psi_2.
  \]
  We get
  \begin{align*}
    &\psi_1 \land \psi_2\\
    \implies& \vec{x} \longrightarrow^\infty_{\langle \check{\phi}, \vec{a} \rangle} \bot \land \psi_2 \\
    \implies& \vec{x} \longrightarrow^\infty_{\langle \check{\phi}, \vec{a} \rangle} \bot \land \vec{x} \longrightarrow^\infty_{\langle \chi, \vec{a} \rangle} \bot \\
    \iff& \vec{x} \longrightarrow^\infty_{\langle \check{\phi} \land \chi, \vec{a} \rangle} \bot
  \end{align*}
  The first step holds since $\ntprob{\psi_1}{\check{\phi}}{\hat{\phi}}{\vec{a}}$ is consistent and the second step holds since $\nonterm$ is a conditional non-termination technique.
  This proves consistency of
  \begin{align*}
    &\ntprob{\psi_1 \land \psi_2}{\check{\phi} \land \chi}{\hat{\phi} \setminus \chi}{\vec{a}}\\
    = & \ntprob{\psi_1 \cup \psi_2}{\check{\phi} \cup \chi}{\hat{\phi} \setminus \chi}{\vec{a}}.
  \end{align*}
\end{proof}

Analogously to well-foundedness of ${\leadsto}$, well-foundedness of ${\leadstont}$ is trivial.

\begin{theorem}[Well-Foundedness of ${\leadstont}$]
  \label{thm:nt-term}
  The relation ${\leadstont}$ is well-founded.
\end{theorem}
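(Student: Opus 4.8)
The plan is to mirror exactly the argument already sketched for \Cref{thm:term}: exhibit a measure on non-termination problems that strictly decreases along every $\leadstont$-step and takes values in a well-founded set, namely $\NN$.

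Concretely, I would take the measure of a non-termination problem $\ntprob{\psi}{\check{\phi}}{\hat{\phi}}{\vec{a}}$ to be the number of clauses in $\hat{\phi}$ (recall that, by the convention introduced before \Cref{def:calculus}, a formula in CNF is identified with its finite set of clauses, so this cardinality is a well-defined natural number). Now consider an arbitrary step
\[
  \ntprob{\psi_1}{\check{\phi}}{\hat{\phi}}{\vec{a}} \leadstont \ntprob{\psi_1 \cup \psi_2}{\check{\phi} \cup \chi}{\hat{\phi} \setminus \chi}{\vec{a}}
\]
given by \Cref{def:nt-calculus}. The side condition of the rule requires $\emptyset \neq \chi \subseteq \hat{\phi}$, so $\chi$ is a non-empty subset of the clause set $\hat{\phi}$, and hence $\hat{\phi} \setminus \chi$ is a \emph{proper} subset of $\hat{\phi}$. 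Therefore the measure strictly decreases: $|\hat{\phi} \setminus \chi| < |\hat{\phi}|$.

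Since the measure lives in $\NN$, which is well-founded under $<$, there can be no infinite sequence $P_0 \leadstont P_1 \leadstont P_2 \leadstont \cdots$, as it would induce an infinite strictly decreasing sequence of natural numbers. This establishes that $\leadstont$ is well-founded. I do not anticipate any real obstacle here; the only point that needs a moment's care is the non-emptiness requirement $\emptyset \neq \chi$ in the rule, which is precisely what guarantees a \emph{strict} decrease rather than merely a non-increase.
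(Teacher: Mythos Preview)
Your proposal is correct and is exactly the argument the paper has in mind: the paper simply remarks that well-foundedness of $\leadstont$ is trivial ``analogously to well-foundedness of $\leadsto$,'' which in turn is justified by the strictly decreasing size of the third component $\hat{\phi}$. Your write-up just makes this one-line observation explicit, including the crucial use of the side condition $\emptyset \neq \chi$.
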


It remains to present non-termination techniques that can be used with our novel calculus.
We first derive a non-termination technique from \emph{Conditional Acceleration via Monotonic Increase} (\Cref{thm:conditional-recurrent}).

\begin{theorem}[Non-Termination via Monotonic In\-crease]
  \label{thm:nonterm-inc}
  If
  \begin{equation*}
    \check{\phi}(\vec{x}) \land \chi(\vec{x}) \implies \chi(\vec{a}(\vec{x})),
  \end{equation*}
  then
  \[
    (\langle \chi, \vec{a} \rangle, \check{\phi}) \mapsto \chi
  \]
  is a conditional non-termination technique.
\end{theorem}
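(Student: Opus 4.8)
The plan is to show that the function $\nonterm(\langle \chi, \vec{a} \rangle, \check{\phi}) \Def \chi$ satisfies the defining property of a conditional non-termination technique from \Cref{def:cond-nonterm}, namely that
\[
  \nt{\vec{x}}{\langle \check{\phi}, \vec{a} \rangle} \land \chi(\vec{x}) \quad \text{implies} \quad \nt{\vec{x}}{\langle \chi, \vec{a} \rangle}
\]
for all $\vec{x} \in \ZZ^d$. Unfolding \Cref{def:term}, the hypothesis $\nt{\vec{x}}{\langle \check{\phi}, \vec{a} \rangle}$ gives $\check{\phi}(\vec{a}^n(\vec{x}))$ for all $n \in \NN$, and we additionally assume $\chi(\vec{x})$; the goal is $\chi(\vec{a}^n(\vec{x}))$ for all $n \in \NN$.

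First I would fix $\vec{x}$ with $\nt{\vec{x}}{\langle \check{\phi}, \vec{a} \rangle}$ and $\chi(\vec{x})$, and prove $\chi(\vec{a}^n(\vec{x}))$ by induction on $n$. The base case $n = 0$ is exactly the assumption $\chi(\vec{x})$. For the induction step, assume $\chi(\vec{a}^n(\vec{x}))$. From $\nt{\vec{x}}{\langle \check{\phi}, \vec{a} \rangle}$ we also have $\check{\phi}(\vec{a}^n(\vec{x}))$. Instantiating the theorem's hypothesis $\check{\phi}(\vec{x}) \land \chi(\vec{x}) \implies \chi(\vec{a}(\vec{x}))$ at the point $\vec{a}^n(\vec{x})$ yields $\chi(\vec{a}(\vec{a}^n(\vec{x}))) = \chi(\vec{a}^{n+1}(\vec{x}))$, completing the induction. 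Hence $\chi(\vec{a}^n(\vec{x}))$ holds for all $n$, which is precisely $\nt{\vec{x}}{\langle \chi, \vec{a} \rangle}$.

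This argument is essentially the non-terminating analogue of the soundness proof of \Cref{thm:conditional-recurrent}: there one carries the invariant forward only through the first $n$ iterations along a finite run of $\langle \check{\phi}, \vec{a} \rangle$, whereas here the run of $\langle \check{\phi}, \vec{a} \rangle$ is infinite, so the same monotonicity step propagates $\chi$ through all iterations. I do not anticipate a genuine obstacle; the only thing to be careful about is that the induction uses the assumption $\nt{\vec{x}}{\langle \check{\phi}, \vec{a} \rangle}$ afresh at each step (to supply $\check{\phi}(\vec{a}^n(\vec{x}))$), rather than trying to derive $\check{\phi}$ from $\chi$, since $\check{\phi}$ is not assumed to follow from $\chi$.
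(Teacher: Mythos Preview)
Your proposal is correct and follows essentially the same approach as the paper: unfold \Cref{def:cond-nonterm} and \Cref{def:term}, assume $\nt{\vec{x}}{\langle \check{\phi}, \vec{a} \rangle}$ and $\chi(\vec{x})$, and prove $\chi(\vec{a}^n(\vec{x}))$ for all $n$ by induction, using the non-termination assumption to supply $\check{\phi}(\vec{a}^n(\vec{x}))$ at each step. Your remark relating this to the soundness proof of \Cref{thm:conditional-recurrent} is apt and matches the paper's intent.
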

\begin{proof}
  We need to prove
  \[
    \vec{x} \longrightarrow^\infty_{\langle \check\phi, \vec{a} \rangle} \bot \land \chi(\vec{x}) \implies \vec{x} \longrightarrow^\infty_{\langle \chi, \vec{a} \rangle} \bot.
  \]
  To this end, it suffices to prove
  \[
    \vec{x} \longrightarrow^\infty_{\langle \check\phi, \vec{a} \rangle} \bot \land \chi(\vec{x}) \implies \forall m \in \NN.\ \chi(\vec{a}^m(\vec{x}))
  \]
  by the definition of non-termination (\Cref{def:term}).
  Assume
  \[
    \vec{x} \longrightarrow^\infty_{\langle \check\phi, \vec{a} \rangle} \bot \land \chi(\vec{x}).
  \]
  We prove $\chi(\vec{a}^m(\vec{x}))$ for all $m \in \NN$ by induction on $m$.
  If $m=0$, then the claim follows immediately.
  For the induction step, note that $\vec{x} \longrightarrow^\infty_{\langle \check\phi, \vec{a} \rangle} \bot$ implies $\vec{x} \longrightarrow^{m+1}_{\langle \check\phi, \vec{a} \rangle} \vec{a}^{m+1}(\vec{x})$, which in turn implies $\check\phi(\vec{a}^m(\vec{x}))$.
  Together with the induction hypothesis $\chi(\vec{a}^m(\vec{x}))$, the claim follows from the prerequisites of the theorem.
\end{proof}

\begin{example}
  The canonical non-termination problem of \ref{ex:recurrent} is
  \[
    \ntprob{\top}{\top}{x>0}{(x+1)}.
  \]
  Thus, in order to apply ${\leadstont}$ with \Cref{thm:nonterm-inc}, the only possible choice for the formula $\chi$ is $x>0$.
  Furthermore, we have $\check{\phi} \Def \top$ and $\vec{a} \Def (x+1)$.
  Hence, \Cref{thm:nonterm-inc} is applicable if the implication
  \[
    \top \land x > 0 \implies x+1>0
  \]
  is valid, which is clearly the case.
  Thus, we get
  \[
    \ntprob{\top}{\top}{x>0}{(x+1)} \leadstont \ntprob{x>0}{x>0}{\top}{(x+1)}.
  \]
  Since the latter non-termination problem is solved and $x > 0$ is satisfiable, $x > 0$ is a certificate of non-termination for \ref{ex:recurrent} due to \Cref{thm:nonterm-correct}.
\end{example}

Clearly, \Cref{thm:nonterm-inc} is only applicable in very simple cases.
To prove non-termination of more complex examples, we now derive a conditional non-termination technique from \emph{Conditional Acceleration via Eventual Increase} (\Cref{thm:conditional-ev-inc}).

\begin{theorem}[Non-Termination via Eventual Increase]
  \label{thm:nonterm-ev-inc}
  If we have $\chi(\vec{x}) \equiv \bigwedge_{i=1}^k C_i$ where each clause $C_i$ contains an inequation $\expr_i(\vec{x}) > 0$ such that
  \[
    \check{\phi}(\vec{x}) \land \expr_i(\vec{x}) \leq \expr_i(\vec{a}(\vec{x})) \implies \expr_i(\vec{a}(\vec{x})) \leq \expr_i(\vec{a}^2(\vec{x})),
  \]
  then
  \begin{equation*}
    (\langle \chi, \vec{a} \rangle, \check{\phi}) \mapsto \bigwedge_{i=1}^k 0 < \expr_i(\vec{x}) \leq \expr_i(\vec{a}(\vec{x}))
  \end{equation*}
  is a conditional non-termination technique.
\end{theorem}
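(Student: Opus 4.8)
The plan is to verify the defining property of a conditional non-termination technique from \Cref{def:cond-nonterm} directly. So I fix $(\langle \chi, \vec{a} \rangle, \check{\phi}) \in \dom(\nonterm)$ and $\vec{x} \in \ZZ^d$, assume both $\nt{\vec{x}}{\langle \check{\phi}, \vec{a} \rangle}$ and $\nonterm(\langle \chi, \vec{a} \rangle, \check{\phi}) = \bigwedge_{i=1}^k 0 < \expr_i(\vec{x}) \leq \expr_i(\vec{a}(\vec{x}))$, and aim to derive $\nt{\vec{x}}{\langle \chi, \vec{a} \rangle}$, which by \Cref{def:term} means $\chi(\vec{a}^m(\vec{x}))$ for all $m \in \NN$. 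The first step is to record what the hypothesis $\nt{\vec{x}}{\langle \check{\phi}, \vec{a} \rangle}$ buys us: again by \Cref{def:term}, $\check{\phi}(\vec{a}^m(\vec{x}))$ holds for \emph{every} $m \in \NN$. This is the one structural difference from the proof of \Cref{thm:conditional-ev-inc}: there, $\check{\phi}$ was available only at the finitely many iterates $\vec{a}^0(\vec{x}), \ldots, \vec{a}^{n-1}(\vec{x})$, whereas here it holds at all iterates, which is precisely what lets the induction below run without a finite horizon.

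Next, fixing an index $i \in [1,k]$ and writing $\expr \Def \expr_i$, I would prove by induction on $m$ that $\expr(\vec{a}^m(\vec{x})) \leq \expr(\vec{a}^{m+1}(\vec{x}))$ for all $m \in \NN$. The base case $m=0$ is immediate from the assumed $\expr_i(\vec{x}) \leq \expr_i(\vec{a}(\vec{x}))$. For the induction step, the induction hypothesis gives $\expr(\vec{a}^m(\vec{x})) \leq \expr(\vec{a}(\vec{a}^m(\vec{x})))$, and combining this with $\check{\phi}(\vec{a}^m(\vec{x}))$ from the previous paragraph lets me instantiate the hypothesis of the theorem at $\vec{a}^m(\vec{x})$, which yields $\expr(\vec{a}(\vec{a}^m(\vec{x}))) \leq \expr(\vec{a}^2(\vec{a}^m(\vec{x})))$, i.e., $\expr(\vec{a}^{m+1}(\vec{x})) \leq \expr(\vec{a}^{m+2}(\vec{x}))$, as needed.

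From this monotonicity, transitivity gives $\expr_i(\vec{a}^m(\vec{x})) \geq \expr_i(\vec{x}) > 0$ for every $m \in \NN$ and every $i \in [1,k]$, using $0 < \expr_i(\vec{x})$ from the assumption. Since the inequation $\expr_i(\vec{x}) > 0$ occurs as a disjunct of the clause $C_i$, we get $C_i(\vec{a}^m(\vec{x}))$ for all $i$ and all $m$, hence $\chi(\vec{a}^m(\vec{x})) \equiv \bigwedge_{i=1}^k C_i(\vec{a}^m(\vec{x}))$ for all $m \in \NN$, i.e., $\nt{\vec{x}}{\langle \chi, \vec{a} \rangle}$.

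I do not expect a genuine obstacle: the whole argument is an unbounded variant of the induction already carried out for \Cref{thm:conditional-ev-inc}, and the only point that needs care is the observation, made in the first step, that non-termination of $\langle \check{\phi}, \vec{a} \rangle$ supplies $\check{\phi}$ at \emph{all} iterates $\vec{a}^m(\vec{x})$, so that the eventual-increase reasoning propagates $\expr_i > 0$ forever rather than only up to a finite bound.
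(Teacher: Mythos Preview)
Your proof is correct. The approach differs slightly from the paper's: you carry out the induction on $m$ directly (mirroring the proof of \Cref{thm:conditional-ev-inc}), whereas the paper first introduces $\chi' \Def \bigwedge_{i=1}^k 0 < \expr_i(\vec{x}) \leq \expr_i(\vec{a}(\vec{x}))$, observes that the theorem's hypothesis yields $\check{\phi}(\vec{x}) \land \chi'(\vec{x}) \implies \chi'(\vec{a}(\vec{x}))$, and then simply invokes \Cref{thm:nonterm-inc} applied to $\langle \chi', \vec{a} \rangle$ (together with $\chi' \implies \chi$). The paper's route is more modular and slightly shorter by reusing the earlier result, while yours is self-contained and makes the underlying induction explicit; substantively they are the same argument.
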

\begin{proof}
  Let $\chi' \Def \bigwedge_{i=1}^k 0 < \expr_i(\vec{x}) \leq \expr_i(\vec{a}(\vec{x}))$.
  We need to prove
  \[
    \nt{\vec{x}}{\langle \check{\phi}, \vec{a} \rangle} \land \chi'
    \implies \nt{\vec{x}}{\langle \chi, \vec{a} \rangle}.
  \]
  Then it suffices to prove
  \begin{equation}
    \label{eq:nonterm-ev-inc-to-show}
    \nt{\vec{x}}{\langle \check{\phi},\vec{a} \rangle} \land \chi'(\vec{x}) \implies \nt{\vec{x}}{\langle \chi', \vec{a} \rangle}
  \end{equation}
  since $\chi'$ implies $\chi$.
  By the prerequisites of the theorem, we have $\check{\phi} \land \chi'(\vec{x}) \implies \chi'(\vec{a}(\vec{x}))$.
  Thus, \Cref{thm:nonterm-inc} applies to $\langle \chi', \vec{a} \rangle$.
  Hence, the claim \eqref{eq:nonterm-ev-inc-to-show} follows.
\end{proof}

\begin{example}
  We continue \Cref{ex:canonical-nt}.
  To apply ${\leadstont}$ with \Cref{thm:nonterm-ev-inc} to the canonical non-termination problem of \ref{eq:phases2}, the only possible choice for the formula $\chi$ is $x_1>0$.
  Moreover, we again have $\check{\phi} \Def \top$, and $\vec{a} \Def \mat{x_1+x_2\\x_2+1}$.
  Thus, \Cref{thm:nonterm-ev-inc} is applicable if
  \[
    \top \land x_1 \leq x_1 + x_2 \implies x_1 + x_2 \leq x_1 + 2 \cdot x_2 + 1
  \]
  is valid.
  Since we have $x_1 \leq x_1 + x_2 \iff x_2 \geq 0$ and $x_1 + x_2 \leq x_1 + 2 \cdot x_2 + 1 \iff x_2 + 1 \geq 0$, this is clearly the case.
  Hence, we get
  \begin{align*}
    & \ntprob{\top}{\top}{x_1>0}{\vec{a}} \\
    \leadstont & \ntprob{0 < x_1 \leq x_1 + x_2}{x_1>0}{\top}{\vec{a}}.
  \end{align*}
  Since $0 < x_1 \leq x_1 + x_2 \equiv x_1 > 0 \land x_2 \geq 0$ is satisfiable, $x_1 > 0 \land x_2 \geq 0$ is a certificate of non-termination for \ref{eq:phases2} due to \Cref{thm:nonterm-correct}.
\end{example}

Of course, some non-terminating loops do not behave (eventually) monotonically, as the following example illustrates.

\begin{example}
  \label{ex:fixpoint}
  Consider the loop
  \begin{equation}
    \label{eq:fixpoint}\tag{\ensuremath{\TT_{\textit{fixpoint}}}}
    \mWhile{x_1 > 0}{\mat{x_1\\x_2} \assign \mat{x_2\\x_1}}.
  \end{equation}
  \Cref{thm:nonterm-inc} is inapplicable, since
  \[
    x_1 > 0 \centernot\implies x_2 > 0.
  \]
  Furthermore, \Cref{thm:nonterm-ev-inc} is inapplicable, since
  \[
    x_1 > x_2 \centernot\implies x_2 > x_1.
  \]
\end{example}
  
However, \ref{eq:fixpoint} has \emph{fixpoints}, i.e., there are valuations such that $\vec{x} = \vec{a}(\vec{x})$.
Therefore, it can be handled by existing approaches like \cite[Thm.\ 12]{fmcad19}.
As the following theorem shows, such techniques can also be embedded into our calculus.

\begin{theorem}[Non-Termination via Fixpoints]
  \label{thm:fixpoints}
  For each entity $e$, let $\VV(e)$ denote the set of variables occurring in $e$.
  Moreover, we define
  \[
    \closure_{\vec{a}}(e) \Def \bigcup_{n \in \NN} \VV(\vec{a}^n(e)).
  \]
  Let $\chi(\vec{x}) \equiv \bigwedge_{i=1}^k C_i$, and for each $i \in [1,k]$, assume that $\expr_i(\vec{x}) > 0$ is an inequation that occurs in $C_i$.
  Then
  \begin{equation*}
    (\langle \chi, \vec{a} \rangle, \check{\phi}) \mapsto \bigwedge_{i=1}^k \expr_i(\vec{x}) > 0 \land \bigwedge_{\mathclap{x_j \in \closure_{\vec{a}}(\expr_i)}} x_j = \vec{a}(\vec{x})_j
  \end{equation*}
  is a conditional non-termination technique.
\end{theorem}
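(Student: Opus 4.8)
The plan is to verify the defining property of a conditional non-termination technique from \Cref{def:cond-nonterm}, i.e., to show that every valuation $\vec{x} \in \ZZ^d$ satisfying
\[
  \eta \;\Def\; \bigwedge_{i=1}^k \Big( \expr_i(\vec{x}) > 0 \;\land\! \bigwedge_{x_j \in \closure_{\vec{a}}(\expr_i)}\! x_j = \vec{a}(\vec{x})_j \Big)
\]
is a witness of non-termination for $\langle \chi, \vec{a} \rangle$. In fact this does not use the premise $\nt{\vec{x}}{\langle \check{\phi}, \vec{a} \rangle}$ at all, so the resulting technique is ``conditional'' only vacuously. By \Cref{def:term}, $\nt{\vec{x}}{\langle \chi, \vec{a} \rangle}$ amounts to $\chi(\vec{a}^m(\vec{x}))$ for all $m \in \NN$; since $\chi \equiv \bigwedge_{i=1}^k C_i$ and the literal $\expr_i(\vec{x}) > 0$ occurs in the clause $C_i$, it suffices to establish $\expr_i(\vec{a}^m(\vec{x})) > 0$ for all $i \in [1,k]$ and all $m \in \NN$.

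The key step is the claim that, for a valuation $\vec{x}$ with $x_j = \vec{a}(\vec{x})_j$ for all $x_j \in \closure_{\vec{a}}(\expr_i)$, we have $\expr_i(\vec{a}^m(\vec{x})) = \expr_i(\vec{x})$ for every $m \in \NN$; this I would prove by induction on $m$. The base case $m = 0$ is immediate. For the step, the crucial observation is that $\expr_i(\vec{a}^{m+1}(\vec{x}))$ is syntactically the instance of $\expr_i(\vec{a}^m(\vec{x}))$ under the substitution $x_j \mapsto \vec{a}(\vec{x})_j$, while $\VV(\expr_i(\vec{a}^m(\vec{x}))) \subseteq \closure_{\vec{a}}(\expr_i)$ holds \emph{directly} by the definition of $\closure_{\vec{a}}$ as a union over all iterates. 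Hence, evaluated at the fixed $\vec{x}$, this substitution is the identity on every variable that actually occurs, so $\expr_i(\vec{a}^{m+1}(\vec{x}))$ and $\expr_i(\vec{a}^m(\vec{x}))$ have the same value at $\vec{x}$, and the induction hypothesis finishes the step.

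Combining the claim with $\eta$ then gives $\expr_i(\vec{a}^m(\vec{x})) = \expr_i(\vec{x}) > 0$ for all $i$ and all $m$, hence $C_i(\vec{a}^m(\vec{x}))$, hence $\chi(\vec{a}^m(\vec{x}))$ for all $m \in \NN$, as required; finiteness of $\closure_{\vec{a}}(\expr_i) \subseteq \{x_1,\dots,x_d\}$ also ensures that $\eta \in \Prop{\AAA(\vec{x})}$. I expect the only genuinely delicate point to be the bookkeeping in the induction step: one must be careful that composing $\expr_i$ with $\vec{a}$ may cause cancellations, so a naive one-step closure of $\VV(\expr_i)$ would not suffice --- it is precisely the union over \emph{all} $n$ in the definition of $\closure_{\vec{a}}$ that guarantees the set of occurring variables never leaves $\closure_{\vec{a}}(\expr_i)$, which is what makes the substitution behave as the identity throughout the induction.
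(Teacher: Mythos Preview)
Your proof is correct. It differs from the paper's argument in structure: the paper does not carry out an explicit induction on $m$, but instead observes that the formula $\chi' \Def \eta$ satisfies $\chi'(\vec{x}) \implies \chi'(\vec{a}(\vec{x}))$ ``by construction'' and then invokes \Cref{thm:nonterm-inc} (Non-Termination via Monotonic Increase) for $\langle \chi', \vec{a} \rangle$, combined with the trivial implication $\chi' \implies \chi$. Your direct induction essentially unfolds both this invariance claim and the induction inside the proof of \Cref{thm:nonterm-inc} in one pass; this makes the argument self-contained and spells out the substitution bookkeeping that the paper leaves implicit, at the cost of some length. Conversely, the paper's reduction highlights that \Cref{thm:fixpoints} is really an instance of \Cref{thm:nonterm-inc} after strengthening $\chi$ to $\chi'$, which fits the modular spirit of the calculus. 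Your remark that the premise $\nt{\vec{x}}{\langle \check{\phi}, \vec{a} \rangle}$ is unused is also consistent with the paper's proof, which likewise does not exploit it.
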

\begin{proof}
  Let
  \[
    \chi' \Def \bigwedge_{i=1}^k \expr_i(\vec{x}) > 0 \land \bigwedge_{\mathclap{x_j \in \closure_{\vec{a}}(\expr_i)}} x_j = \vec{a}(\vec{x})_j.
  \]
  We need to prove
  \[
    \nt{\vec{x}}{\langle \check{\phi}, \vec{a} \rangle} \land \chi' \implies \nt{\vec{x}}{\langle \chi, \vec{a} \rangle}.
  \]
  Then it suffices to prove
  \begin{equation}
    \label{eq:fixpoint-to-show}
    \nt{\vec{x}}{\langle \check{\phi}, \vec{a} \rangle} \land \chi'(\vec{x}) \implies \nt{\vec{x}}{\langle \chi', \vec{a} \rangle}
  \end{equation}
  since $\chi'$ implies $\chi$.
  By construction, we have
  \[
    \chi'(\vec{x}) \implies \chi'(\vec{a}(\vec{x})).
  \]
  Thus, \Cref{thm:nonterm-inc} applies to $\langle \chi', \vec{a} \rangle$.
  Hence, the claim \eqref{eq:fixpoint-to-show} follows.
\end{proof}

\begin{example}
  We continue \Cref{ex:fixpoint} by showing how to apply \Cref{thm:fixpoints} to \ref{eq:fixpoint}, i.e., we have $\chi \Def x_1 >0$, $\check{\phi} \Def \top$, and $\vec{a} \Def \mat{x_2\\x_1}$.
  Thus, we get
  \[
    \closure_{\vec{a}}(x_1 > 0) = \{x_1,x_2\}.
  \]
  So starting with the canonical non-termination problem of \ref{eq:fixpoint}, we get
  \begin{align*}
    &\ntprob{\top}{\top}{x_1>0}{\mat{x_2\\x_1}} \\
    \leadstont& \ntprob{x_1 > 0 \land x_1 = x_2}{x_1>0}{\top}{\mat{x_2\\x_1}}.
  \end{align*}
  Since the formula $x_1 > 0 \land x_1 = x_2$ is satisfiable, $x_1 > 0 \land x_1 = x_2$ is a certificate of non-termination for \ref{eq:fixpoint} by \Cref{thm:nonterm-correct}.
\end{example}

Like the acceleration techniques from \Cref{thm:ev-inc,thm:conditional-ev-inc}, the non-termination techniques from \Cref{thm:nonterm-ev-inc,thm:fixpoints} can result in empty under-approximations.
Thus, when integrating these techniques into our calculus, one should check the resulting formula for satisfiability after each step to detect fruitless derivations early.

We conclude this section with a more complex example, which shows how the presented non-termination techniques can be combined to find certificates of non-termination.

\begin{example}
  Consider the following loop:
  \begin{align*}
    & \mathbf{while}\ x_1 > 0 \land x_3 > 0 \land x_4 + 1 > 0\ \mDo\\
    & \qquad \mat{x_1\\x_2\\x_3\\x_4} \assign \mat{1\\x_2 + x_1\\x_3 + x_2\\-x_4}
  \end{align*}
  So we have
  \[
    \phi \Def x_1 > 0 \land x_3 > 0 \land x_4 + 1 > 0
  \]
  and
  \[
    \vec{a} \Def \mat{1\\x_2 + x_1\\x_3 + x_2\\-x_4}.
  \]
  Then the canonical non-termination problem is
  \[
    \ntprob{\top}{\top}{x_1 > 0 \land x_3 > 0 \land x_4 + 1 > 0}{\vec{a}}.
  \]
  First, our implementation applies \Cref{thm:nonterm-inc} to $x_1 > 0$ (as $x_1 > 0 \implies 1 > 0$), resulting in
  \[
    \ntprob{x_1 > 0}{x_1 > 0}{x_3 > 0 \land x_4 + 1 > 0}{\vec{a}}.
  \]
  
  Next, it applies \Cref{thm:nonterm-ev-inc} to $x_3 > 0$, which is possible since
  \[
    x_1 > 0 \land x_3 \leq x_3 + x_2 \implies x_3 + x_2 \leq x_3 + 2 \cdot x_2 + x_1
  \]
  is valid.
  Note that this implication breaks if one removes $x_1 > 0$ from the premise, i.e., \Cref{thm:nonterm-ev-inc} does not apply to $x_3 > 0$ without applying \Cref{thm:nonterm-inc} to $x_1 > 0$ before.
  This shows that our calculus is more powerful than ``the sum'' of the underlying non-termination techniques.
  Hence, we obtain the following non-termination problem:
  \[
  \resizebox{0.48\textwidth}{!}{
    $
      \ntprob{x_1 > 0 \land x_2 \geq 0 \land x_3 > 0}{x_1 > 0 \land x_3 > 0}{x_4 + 1 > 0}{\vec{a}}
    $
  }
  \]
  Here, we simplified
  \[
    0 < x_3 \leq x_3 + x_2
  \]
  to
  \[
    x_2 \geq 0 \land x_3 > 0.
  \]

  Finally, neither \Cref{thm:nonterm-inc} nor \Cref{thm:nonterm-ev-inc} applies to $x_4 + 1 > 0$, since $x_4$ does not behave (eventually) monotonically: Its value after $n$ iterations is given by $(-1)^n \cdot x_4^{\mathit{init}}$, where $x_4^{\mathit{init}}$ denotes the initial value of $x_4$.
  Thus, we apply \Cref{thm:fixpoints} and we get
  \[
    \ntprob{x_1 > 0 \land x_2 \geq 0 \land x_3 > 0 \land x_4 = 0}{\phi}{\top}{\vec{a}},
  \]
  which is solved.
  Here, we simplified the sub-formula $x_4 + 1 > 0 \land x_4 = -x_4$ that results from the last acceleration step to $x_4 = 0$.

  This shows that our calculus allows for applying \Cref{thm:fixpoints} to loops that do not have a fixpoint.
  The reason is that it suffices to require that a \emph{subset} of the program variables remain unchanged, whereas the values of other variables may still change.

  Since
  \[
    x_1 > 0 \land x_2 \geq 0 \land x_3 > 0 \land x_4 = 0
  \]
  is satisfiable, it is a certificate of non-termination due to \Cref{thm:nonterm-correct}.
\end{example}

\section{Related Work}
\label{sec:related}

The related work for our paper splits into papers on acceleration and papers on methods specifically designed to prove non-termination.
In both cases, one major difference between our approach and the approaches in the literature is that we enable a \emph{modular} analysis that allows for combining completely unrelated acceleration or non-termination techniques to process a loop in an iterative way and to reuse information obtained by earlier proof steps.

\subsection{Related Work on Acceleration}
\label{sec:related-accel}

Acceleration-like techniques are also used in \emph{over-ap\-prox\-i\-mat\-ing} settings (see, e.g., \cite{kincaid15,gonnord06,jeannet14,madhukar15,strejcek12,silverman19,kincaid17,schrammel14}), whereas we consider \emph{exact} and \emph{under-ap\-prox\-i\-mat\-ing} loop acceleration techniques.
As many related approaches have already been discussed in \Cref{sec:monotonic}, we only mention three more techniques here.

First, \cite{finite-monid,bozga10} presents an exact acceleration technique for \emph{finite monoid affine transformations} (FMATs), i.e., loops with linear arithmetic whose body is of the form $\vec{x} \assign A\vec{x} + \vec{b}$ where $\{A^i \mid i \in \NN\}$ is finite.
For such loops, Pres\-burger-Arithmetic is sufficient to construct an equivalent formula $\psi$, i.e., it can be expressed in a decidable logic.
In general, this is clearly not the case for the techniques presented in the current paper (which may even synthesize non-polynomial closed forms, see \ref{loop:exp}).
As a consequence and in contrast to our technique, this approach cannot handle loops where the values of variables grow super-linearly (i.e., it cannot handle examples like \ref{eq:three-way-ex}).
Implementations are available in the tools \tool{FAST} \cite{fast} and \tool{Flata} \cite{hojjat12}.
Further theoretical results on linear transformations whose $n$-fold closure is definable in (extensions of) Presburger-Arithmetic can be found in \cite{Boigelot03}.

Second, \cite{bozga09a} shows that \emph{octagonal relations} can be accelerated exactly.
Such relations are defined by a finite conjunction $\xi$ of inequations of the form $\pm x \pm y \leq c$, $x,y \in \vec{x} \cup \vec{x}'$, $c \in \ZZ$.
Then $\xi$ induces the relation $\vec{x} \longrightarrow_\xi \vec{x}' \iff \xi(\vec{x}, \vec{x}')$.
In \cite{octagonsP}, it is proven that such relations can even be accelerated in polynomial time.
This generalizes earlier results for \emph{difference bound constraints} \cite{differenceBounds}.
As in the case of FMATs, the resulting formula can be expressed in Presburger-Arithmetic.
In contrast to the loops considered in the current paper where $\vec{x}'$ is uniquely determined by $\vec{x}$, octagonal relations can represent non-deterministic programs.
Therefore and due to the restricted form of octagonal relations, the work from \cite{bozga09a,octagonsP} is orthogonal to ours.

Third, Albert et al.\ recently presented a technique to find metering functions via \emph{loop specialization}, which is automated via MAX-SMT solving \cite{Albert21}.
This approach could be integrated into our framework via \Cref{thm:conditional-metering}.
However, the technique from \cite{Albert21} focuses on multi-path loops, whereas we focus on single-path loops.
One of the main reasons for our restriction to single-path loops is that their closed form (\Cref{def:closed}) can often be computed automatically in practice.
In contrast, for multi-path loops, it is less clear how to obtain closed forms.

\subsection{Related Work on Proving Non-Termination}
\label{subsec:related-nonterm}

In the following, we focus on approaches for proving non-termination of programs that operate on (unbounded) integer numbers as data.

Many approaches to proving non-termination are based on finding recurrent sets \cite{rupak08}.
A recurrent set describes program configurations from which one can take a step to a configuration that is also in the recurrent set.
Thus, a recurrent set that includes an initial configuration implies non-termination of the program.
In the setting of this paper, a certificate of non-termination $\psi(\vec{x})$ for a loop $\langle \phi, \vec{a} \rangle$ induces the recurrent set
\[
  \{ \vec{a}^n(\vec{x}) \mid n \in \NN \land \vec{x} \in \ZZ^d \land \psi(\vec{x}) \}.
\]
As long as our calculus is used with the non-termination techniques presented in the current paper only (i.e., \Cref{thm:nonterm-inc,thm:nonterm-ev-inc,thm:fixpoints}), it even holds that $\{\vec{x} \in \ZZ^d \mid \psi(\vec{x})\}$ is a recurrent set.
Conversely, a formula characterizing a non-empty recurrent set of a loop is also a certificate of non-termination.
Thus, our calculus could also make use of other non-termination techniques that produce recurrent sets expressed by formulas in $\Prop{\AAA(\vec{x})}$.

Recurrent sets are often synthesized by searching for suitable parameter values for template formulas \cite{rupak08,velroyen,larraz14,anant,geometricNonterm,revterm} or for safety proofs \cite{pldi08,t2-nonterm,seahorn-term}.
In contrast to these search-based techniques, our current techniques use constraint solving only to check implications.
As also indicated by our experiments (\Cref{sec:experiments}), this aspect leads to low runtimes and an efficient analysis.

Many proof techniques for proving non-ter\-mi\-na\-tion of programs \cite{rupak08,jbc-nonterm,anant,geometricNonterm} work by proving that some loop is non-terminating and (often separately) proving that a witness of non-termination for this loop is reachable from an initial program configuration.
This captures \emph{lasso-shaped} counterexamples to program termination.
A \emph{lasso} consists of a \emph{stem} of straight-line code (which could be expressed as a single update), followed by a loop with a single update in its body.
Techniques for proving non-termination of loops that provide witnesses of non-termination can thus be lifted to techniques for lassos by checking reachability of the found witnesses of non-termination from an initial program configuration.
While the presentation in this paper focuses on loops, our implementation in \tool{LoAT} can also prove that the found certificate of non-termination for the loop is reachable from an initial program configuration.
If a loop cannot be proven non-terminating, it can still be possible to accelerate it and use the accelerated loop as part of the stem of a lasso for another loop.
Like this, \tool{LoAT} can analyze programs with more complex control flow than just single loops.

Several techniques for proving \emph{aperiodic} non-termination, i.e., non-termination of programs that do not have any lasso-shaped counterexamples to termination, have been proposed \cite{pldi08,jbc-nonterm,t2-nonterm,larraz14,revterm}.
By integrating our calculus into a suitable program-analysis framework \cite{journal,fmcad19}, it can be used to prove aperiodic non-termination as well.

Loop termination was recently proven to be decidable for the subclass of loops in which the guards and updates use only linear arithmetic and the guards are restricted to conjunctions of atoms \cite{cav19,Hosseini19}.
Our approach is less restrictive regarding the input loops: we allow for non-linear guards and updates, and we allow for arbitrary Boolean structure in the guards.
In future work, one could investigate the use of such decision procedures as conditional non-termination techniques in our calculus to make them applicable to larger classes of loops.
For practical applications to larger programs, it is important to obtain a certificate of non-termination for a loop when proving its non-termination, corresponding to a large, ideally infinite set of witnesses of non-termination.
The reason is that some witness of non-termination for the loop must be reachable from an initial program configuration so that the non-termination proof carries over from the loop to the input program.
However, the decision procedures in \cite{cav19,Hosseini19} are not optimized to this end:
They produce a certificate of \emph{eventual non-termination}, i.e., a formula that describes initial configurations that give rise to witnesses of non-termination by applying the loop body a finite, but unknown number of times.
For example, the most general certificate of non-termination for the loop \ref{ex:recurrent} is $x>0$, whereas the most general certificate of eventual non-termination is $\top$.
The reason is that, for any initial valuation $-c$ of $x$ (where $c$ is a natural number), one obtains a witness of non-termination by applying the body of the loop $c+1$ times while ignoring the loop condition.
The problem of transforming a \emph{single} witness of eventual non-termination into a witness of non-termination has been solved partially in \cite{polyloopsLPAR20}.
The problem of transforming certificates of eventual non-termination that describe \emph{infinite} sets of configurations into certificates of non-termination is, to the best of our knowledge, still open.
In contrast, the conditional non-termination techniques presented in \Cref{sec:nonterm} aim to identify a potentially infinite set of witnesses of non-termination.

For a subclass of loops involving non-linearity and arbitrary Boolean structures, decidability of termination has recently been proven, too \cite{polyloopsSAS20}.
However, the decidability proof from \cite{polyloopsSAS20} only applies to loops where the variables range over $\RR$.
For loops over $\ZZ$, termination of such programs is trivially undecidable (due to Hilbert's tenth problem).

Ben-Amram, Dom{\'{e}}nech, and Genaim \cite{amram19} show a connection between the non-existence of \emph{multiphase-linear ranking functions} as termination arguments for linear loops and \emph{monotonic} recurrent sets.
A recurrent set
\[
  R \Def \left\{\vec{x} \in \ZZ^d \relmiddle{\vert} \bigwedge_{i=1}^m e_i(\vec{x}) > 0\right\}
\]
is monotonic if we have $e_i(\vec{x}) \leq e_i(\vec{a}(\vec{x}))$ for all $i \in [1,m]$ and all $\vec{x} \in R$.
In particular, they propose a procedure that, if it terminates, returns either a multiphase-linear ranking function as a termination proof or a set of program states that could not be proven terminating.
If the input loop has a linear update function with only integer coefficients and if the procedure returns a non-empty set of states that includes integer values, this set is a monotonic recurrent set and proves non-termination.
This approach is implemented in the \tool{iRankFinder} tool.

Leike and Heizmann \cite{geometricNonterm} propose a method to find geometric non-termination arguments that allow for expressing the values of the variables after the $n^{th}$ loop iteration.
In this sense, their approach can also be seen as a use of loop acceleration to express a non-termination argument.
This approach is implemented in the \tool{Ultimate} tool.
While our technique for loop acceleration also relies on closed forms, our technique for proving non-termination does not need closed forms.
Hence our approach also applies to loops where closed forms cannot be computed, or contain sub-expressions that make further analyses difficult, like the imaginary unit.

Finally, Frohn and Giesl \cite{fmcad19} have already used loop acceleration for proving non-termination.
However, they use loop acceleration (more specifically, \Cref{thm:three-way}) solely for proving reachability of non-terminating loops.
To prove non-termination of loops, they used unconditional, standalone versions of \Cref{thm:fixpoints,thm:nonterm-inc}.
So their approach does not allow for combining different acceleration or non-termination techniques when analyzing loops.
However, they already exploited similarities between non-termination proving and loop acceleration:
Both their loop acceleration technique (\Cref{thm:three-way}) and their main non-termination technique (\Cref{thm:nonterm-inc}) are based on certain monotonicity properties of the loop condition.
Starting from this observation, they developed a technique for deducing invariants that may be helpful for both proving non-termination and accelerating loops.
This idea is orthogonal to our approach, which could, of course, be combined with techniques for invariant inference.

\section{Implementation and Experiments}
\label{sec:experiments}

We implemented our approach in our open-source \underline{Lo}op \underline{A}c\-cel\-er\-a\-tion \underline{T}ool \loat \cite{fmcad19,journal}:
\begin{center}
  \url{https://aprove-developers.github.io/LoAT}
\end{center}
It uses \tool{Z3} \cite{z3} and \tool{Yices2} \cite{yices} to check implications and \tool{PURRS} \cite{purrs} to compute closed forms.
While \loat can synthesize formulas with non-polynomial arithmetic, it cannot yet parse them, i.e., the input is restricted to polynomials.
Moreover, \loat does not yet support disjunctive loop conditions.

To evaluate our approach, we extracted $1511$ loops with conjunctive guards from the category \emph{Termination of Integer Transition Systems} of the \emph{Termination Problems Database} \cite{tpdb}, the benchmark collection which is used at the annual \emph{Termination and Complexity Competition} \cite{termcomp}, as follows:
\begin{enumerate}
\item We parsed all examples with \loat and extracted each single-path loop with conjunctive guard (resulting in $3829$ benchmarks).
\item We removed duplicates by checking syntactic equality (resulting in $2825$ benchmarks).
\item We removed loops whose runtime is trivially constant using an incomplete check (resulting in $1733$ benchmarks).
\item We removed loops which do not admit any terminating runs, i.e., loops where \Cref{thm:recurrent} applies (resulting in $1511$ benchmarks).
\end{enumerate}

\noindent
All tests have been run on \tool{StarExec} \cite{starexec} (Intel Xeon E5-2609, 2.40GHz, 264GB RAM \cite{starexec-spec}).
For our benchmark collection, more details about the results of our evaluation, and a pre-compiled binary (Linux, 64 bit) we refer to \cite{website}.

\subsection{Loop Acceleration}
\label{sec:experiments-accel}

For technical reasons, the closed forms computed by \loat are valid only if $n>0$, whereas \Cref{def:closed} requires them to be valid for all $n \in \NN$.
The reason is that \tool{PURRS} has only limited support for initial conditions.
Thus, \loat's results are correct only for all $n>1$ (instead of all $n > 0$).
Moreover, \loat can currently compute closed forms only if the loop body is \emph{triangular}, meaning that each $a_i$ is an expression over $x_1,\ldots,x_i$.
The reason is that \tool{PURRS} cannot solve \emph{systems} of recurrence equations, but only a single recurrence equation at a time.
While systems of recurrence equations can be transformed into a single recurrence equation of higher order, \loat does not yet implement this transformation.
However, \loat failed to compute closed forms for just $26$ out of $1511$ loops in our experiments, i.e., this appears to be a minor restriction in practice.
Furthermore, the implementation of our calculus does not use \emph{conditional acceleration via metering functions}.
The reason is that we are not aware of examples where our monotonicity-based acceleration techniques fail, but our technique for finding metering functions (based on Farkas' Lemma) may succeed.

Apart from these differences, our implementation closely follows the current paper.
It applies the conditional acceleration techniques from \Cref{sec:conditional,sec:accel} with the following priorities: $\text{\Cref{thm:conditional-recurrent}} > \text{\Cref{thm:conditional-one-way}} > \text{\Cref{thm:ev-dec-cond}} > \text{\Cref{thm:conditional-ev-inc}}$.

We compared our implementation with \loat's implementation of \emph{acceleration via monotonicity} (\Cref{thm:three-way}, \cite{fmcad19}) and its implementation of \emph{acceleration via metering functions} (\Cref{thm:meter}, \cite{ijcar16}), which also incorporates the improvements proposed in \cite{journal}.
We did not include the techniques from \Cref{thm:one-way,thm:recurrent} in our evaluation, as they are subsumed by \emph{acceleration via monotonicity}.

Furthermore, we compared with \tool{Flata} \cite{hojjat12}, which implements the techniques to accelerate FMATs and octagonal relations discussed in \Cref{sec:related}.
To this end, we used a straightforward transformation from \loat's native input format\footnote{\url{https://github.com/aprove-developers/LoAT\#koat}} (which is also used in the category \emph{Complexity of Integer Transition Systems} of the \emph{Termination and Complexity Competition}) to \tool{Flata}'s input format.
Note that our benchmark collection contains $16$ loops with non-linear arithmetic where \tool{Flata} is bound to fail, since it supports only linear arithmetic.
We did not compare with \tool{FAST} \cite{fast}, which uses a similar approach as the more recent tool \tool{Flata}.
We used a wall clock timeout of $60$s per example and a memory limit of $128$GB for each tool.

The results can be seen in \Cref{tab1}, where the information regarding the runtime includes all examples, not just solved examples.
They show that our novel calculus was superior to the competing techniques in our experiments.
In all but $7$ cases where our calculus successfully accelerated the given loop, the resulting formula was polynomial.
Thus, integrating our approach into existing acceleration-based verification techniques should not present major obstacles w.r.t.\ automation.

\begin{table}[ht]
\begin{minipage}{0.45\textwidth}
  \begin{center}
    \begin{tabular}{c||c|c|c|c}
                 & \loat    & Monot. & Meter          & \tool{Flata} \\ \hline \hline
      exact      & 1444     & 845    & 0\footnotemark & 1231         \\ \hline
      approx     & 38       & 0      & 733            & 0            \\ \hline
      fail       & 29       & 666    & 778            & 280          \\ \hline
      avg rt     & 0.16s    & 0.11s  & 0.09s          & 0.47s        \\ \hline
      median rt  & 0.09s    & 0.09s  & 0.09s          & 0.40s        \\ \hline
      std dev rt & 0.18s    & 0.09s  & 0.03s          & 0.50s        \\ \hline
    \end{tabular}
    \caption{\loat vs.\ other techniques}
    \label{tab1}
  \end{center}
\end{minipage}
\begin{minipage}{0.45\textwidth}
  \begin{center}
    \begin{tabular}{c||c|c|c}
                 & \sout{Ev-Inc} & \sout{Ev-Dec} & \sout{Ev-Mon} \\ \hline \hline
      exact      & 1444          & 845           & 845           \\ \hline
      approx     & 0             & 493           & 0             \\ \hline
      fail       & 67            & 173           & 666           \\ \hline
      avg rt     & 0.15s         & 0.14s         & 0.09s         \\ \hline
      median rt  & 0.08s         & 0.08s         & 0.07s         \\ \hline
      std dev rt & 0.17s         & 0.17s         & 0.06s         \\ \hline
    \end{tabular}
    \caption{Impact of our new acceleration techniques}
    \label{tab2}
  \end{center}
\end{minipage}

\smallskip
\hfill
\begin{minipage}{0.9\textwidth}
  \setlength{\tabcolsep}{2pt}
  \begin{tabular}{ll}
    \loat:& Acceleration calculus + \Cref{thm:conditional-one-way,thm:conditional-recurrent,thm:ev-dec-cond,thm:conditional-ev-inc}\\
    Monot.:& \emph{Acceleration via Monotonicity}, \Cref{thm:three-way}\\
    Meter:& \emph{Acceleration via Metering Functions}, \Cref{thm:meter}\\
    \tool{Flata}:& \url{http://nts.imag.fr/index.php/Flata}\\
    \sout{Ev-Inc}:& Acceleration calculus + \Cref{thm:conditional-one-way,thm:conditional-recurrent,thm:ev-dec-cond}\\
    \sout{Ev-Dec}:& Acceleration calculus + \Cref{thm:conditional-one-way,thm:conditional-recurrent,thm:conditional-ev-inc}\\
    \sout{Ev-Mon}:& Acceleration calculus + \Cref{thm:conditional-one-way,thm:conditional-recurrent}\\
    exact:& Examples that were accelerated \emph{exactly}\\
    approx:& Examples that were accelerated \emph{approximately}\\
    fail:& Examples that could not be accelerated\\
    avg rt:& Average wall clock runtime\\
    median rt:& Median wall clock runtime\\
    st dev rt:& Standard deviation of wall clock runtime\\
  \end{tabular}
\end{minipage}
\hfill
\end{table}

Furthermore, we evaluated the impact of our new acceleration techniques from \Cref{sec:accel} independently.
To this end, we ran experiments with three configurations where we disabled \emph{acceleration via eventual increase}, \emph{acceleration via eventual decrease}, and both of them.
The results can be seen in \Cref{tab2}.
They show that our calculus does not improve over \emph{acceleration via monotonicity} if both \emph{acceleration via eventual increase} and \emph{acceleration via eventual decrease} are disabled (i.e., our benchmark collection does not contain examples like \ref{eq:beyond-three-way}).
However, enabling either \emph{acceleration via eventual decrease} or \emph{acceleration via eventual increase} resulted in a significant improvement.
Interestingly, there are many examples that can be accelerated with either of these two techniques:
When both of them were enabled, \loat (exactly or approximately) accelerated $1482$ loops.
When only one of them was enabled, it accelerated $1444$ and $1338$ loops, respectively.
But when none of them was enabled, it accelerated only $845$ loops.
We believe that this is due to examples like
\begin{equation}
  \label {eq:ev-mon}
  \mWhile{x_1 > 0 \land \ldots}{\mat{x_1\\x_2\\\ldots} \assign \mat{x_2\\x_2\\\ldots}}
\end{equation}
where \Cref{thm:ev-dec-cond} \emph{and} \Cref{thm:conditional-ev-inc} are applicable (since $x_1 \leq x_2$ implies $x_2 \leq x_2$ and $x_1 \geq x_2$ implies $x_2 \geq x_2$).

\footnotetext{While acceleration via metering functions may be exact in some cases (see the discussion after \Cref{thm:meter}), our implementation cannot check whether this is the case.}

\tool{Flata} exactly accelerated 49 loops where \loat failed or approximated and \loat exactly accelerated 262 loops where \tool{Flata} failed.
So there were only 18 loops where both \tool{Flata} and the full version of our calculus failed to compute an exact result.
Among them were the only 3 examples where our implementation found a closed form, but failed anyway.
One of them was\footnote{The other two are structurally similar, but more complex.}
\[
  \mWhile{x_3 > 0}{\mat{x_1\\x_2\\x_3} \assign \mat{x_1+1\\x_2-x_1\\x_3+x_2}}.
\]
Here, the updated value of $x_1$ depends on $x_1$, the update of $x_2$ depends on $x_1$ and $x_2$, and the update of $x_3$ depends on $x_2$ and $x_3$.
Hence, the closed form of $x_1$ is linear, the closed form of $x_2$ is quadratic, and the closed form of $x_3$ is cubic:
\[
  x_3^{(n)} =-\tfrac{1}{6} \cdot n^3 + \tfrac{1-x_1}{2} \cdot n^2 + \left(\tfrac{x_1}{2} + x_2 - \tfrac{1}{3}\right) \cdot n + x_3
\]
So when $x_1,x_2$, and $x_3$ have been fixed, $x_3^{(n)}$ has up to $2$ extrema, i.e., its monotonicity may change twice.
However, our techniques based on eventual monotonicity require that the respective expressions behave monotonically once they start to de- or increase, so these techniques only allow one change of monotonicity.

This raises the question if our approach can accelerate \emph{every} loop with conjunctive guard and linear arithmetic whose closed form is a vector of (at most) quadratic polynomials with rational coefficients.
We leave this to future work.

\subsection{Non-Termination}
\label{sec:eval-nonterm}

To prove non-termination, our implementation applies the conditional non-termination techniques from \Cref{sec:nonterm} with the following priorities: $\text{\Cref{thm:nonterm-inc}} > \text{\Cref{thm:nonterm-ev-inc}} > \text{\Cref{thm:fixpoints}}$.
To evaluate our approach, we compared it with several leading tools for proving non-termination of integer programs: \tool{AProVE} \cite{tool-jar}, \tool{iRankFinder} \cite{amram19}, \tool{RevTerm} \cite{revterm}, \tool{Ultimate} \cite{Ultimate}, and \tool{VeryMax} \cite{larraz14}.
Note that \tool{AProVE} uses, among other techniques, the tool \tool{T2} \cite{t2-tool} as backend for proving non-termination, so we refrained from including \tool{T2} separately in our evaluation.

To compare with \tool{AProVE}, \tool{RevTerm}, and \tool{Ultimate}, we transformed all examples into the format which is used in the category \emph{Termination of C Integer Programs} of the \emph{Termination and Complexity Competition}.\footnote{\url{http://termination-portal.org/wiki/C_Integer_Programs}}
For \tool{iRankFinder} and \tool{VeryMax}, we transformed them into the format from the category \emph{Termination of Integer Transition Systems} of the \emph{Termination and Complexity Competition} \cite{smtlib-format}.
The latter format is also supported by \tool{LoAT}, so besides \tool{iRankFinder} and \tool{VeryMax}, we also used it to evaluate \tool{LoAT}.

For the tools \tool{iRankFinder}, \tool{Ultimate}, and \tool{VeryMax}, we used the versions of their last participations in the \emph{Termination and Complexity Competition} (2019 for \tool{VeryMax} and 2021 for \tool{iRankFinder} and \tool{Ultimate}), as suggested by the developers.
For \tool{AProVE}, the developers provided an up-to-date version.
For \tool{RevTerm}, we followed the build instruction from \cite{revterm-website} and sequentially executed the following command lines, as suggested by the developers:

\smallskip
\noindent
\resizebox{0.48\textwidth}{!}{
\tt
RevTerm.sh prog.c -linear part1 mathsat 2 1
}
\noindent
\resizebox{0.48\textwidth}{!}{
\tt
RevTerm.sh prog.c -linear part2 mathsat 2 1
}

It is important to note that all tools but \tool{RevTerm} and \loat also try to prove termination.
Thus, a comparison between the runtimes of \loat and those other tools is of limited significance.
Therefore, we also compared \loat with configurations of the tools that only try to prove non-termination.
For \tool{AProVE}, such a configuration was kindly provided by its developers (named \tool{AProVE NT} in \Cref{tab3}).
In the case of \tool{iRankFinder}, the excellent documentation allowed us to easily build such a configuration ourselves (named \tool{iRankFinder NT} in \Cref{tab3}).
In the case of \tool{Ultimate}, its developers pointed out that a comparison w.r.t.\ runtime is meaningless, as it is dominated by \tool{Ultimate}'s startup-time of ${\sim} 10$s on small examples.
For \tool{VeryMax}, it is not possible to disable termination-proving, according to its authors.

\smallskip
\noindent
We again used a wall clock timeout of $60$s and a memory limit of $128$GB for each tool.
For \tool{RevTerm}, we used a timeout of $30$s for the first invocation, and the remaining time for the second invocation.

\begin{table*}[ht]
  \begin{center}
    \scalebox{0.85}{
    \begin{tabular}{c||c|c|c|c|c|c|c|c|c|}
                       & \loat   & \tool{AProVE} & \tool{AProVE NT} & \tool{iRankFinder} & \tool{iRankFinder NT} & \tool{RevTerm} & \tool{Ultimate} & \tool{VeryMax} \\ \hline \hline
      NO               & 206     & 200           & 200              & 205                & 205                   & 133            & 205             & 175            \\ \hline
      YES              & 0       & 1301          & 0                & 1298               & 0                     & 0              & 919             & 1299           \\ \hline
      fail             & 1305    & 10            & 1311             & 8                  & 1306                  & 1378           & 387             & 37             \\ \hline
      avg rt           & 0.03s   & 16.09s        & 25.69s           & 1.40s              & 1.03s                 & 38.85s         & 23.30s          & 3.17s          \\ \hline
      avg rt NO        & 0.03s   & 10.65s        & 9.67s            & 1.34s              & 0.96s                 & 4.63s          & 7.99s           & 14.52s         \\ \hline
      median rt        & 0.02s   & 13.41s        & 15.74s           & 1.11s              & 0.91s                 & 34.93s         & 11.57s          & 0.03           \\ \hline
      median rt NO     & 0.02s   & 8.51s         & 6.91s            & 1.17s              & 0.90s                 & 1.88s          & 8.01s           & 5.36s           \\ \hline
      std dev rt       & 0.03s   & 10.09s        & 19.95s           & 2.25s              & 0.30s                 & 16.61s         & 21.60s          & 10.76s         \\ \hline
      std dev rt NO    & 0.06s   & 5.80s         & 5.80s            & 0.92s              & 0.12s                 & 11.08s         & 1.88s           & 13.24s         \\ \hline
    \end{tabular}
    }
    \caption{Comparison of \loat with competing tools}
    \label{tab3}
    \smallskip
    \scalebox{0.85}{
    \begin{tabular}{c||c|c|c|c|c|c|c|c|c|}
                       & \loat   & \sout{Inc} & \sout{Ev-Inc} & \sout{FP} & \sout{Ev-Inc, FP} \\ \hline \hline
      NO               & 206     & 206        & 203           & 205       & 0                 \\ \hline
      fail             & 1305    & 1305       & 1308          & 1306      & 1511              \\ \hline
      avg rt           & 0.03s   & 0.03s      & 0.03s         & 0.03s     & 0.02s             \\ \hline
      avg rt NO        & 0.03s   & 0.02s      & 0.02s         & 0.02s     & --                \\ \hline
      median rt        & 0.02s   & 0.02s      & 0.02s         & 0.02s     & 0.02s             \\ \hline
      median rt NO     & 0.02s   & 0.02s      & 0.02s         & 0.02s     & --                \\ \hline
      std dev rt       & 0.03s   & 0.02s      & 0.02s         & 0.02s     & 0.02s             \\ \hline
      std dev rt NO    & 0.06s   & 0.03s      & 0.03s         & 0.04s     & --                \\ \hline
    \end{tabular}
    }
    \caption{Comparison of \loat versions}
    \label{tab4}
    \smallskip
    \begin{tabular}{ll}
      \loat:& Calculus from \Cref{sec:nonterm}\\
      \tool{AProVE}:& \url{https://aprove.informatik.rwth-aachen.de}\\
      \tool{AProVE NT}:& Configuration of \tool{AProVE} that does not try to prove termination \\
      \tool{iRankFinder}:& \url{http://irankfinder.loopkiller.com}\\
      \tool{iRankFinder NT}:& Configuration of \tool{iRankFinder} that does not try to prove termination \\
      \tool{RevTerm}:& \url{https://github.com/ekgma/RevTerm} \\
      \tool{Ultimate}:& \url{https://monteverdi.informatik.uni-freiburg.de/tomcat/Website} \\
      \tool{VeryMax}:& \url{https://www.cs.upc.edu/~albert/VeryMax.html}\\
      \sout{Inc}:& Calculus from \Cref{sec:nonterm} without \Cref{thm:nonterm-inc}\\
      \sout{Ev-Inc}:& Calculus from \Cref{sec:nonterm} without \Cref{thm:nonterm-ev-inc}\\
      \sout{FP}:& Calculus from \Cref{sec:nonterm} without \Cref{thm:fixpoints}\\
      \sout{Ev-Inc, FP}:& Calculus from \Cref{sec:nonterm} without \Cref{thm:nonterm-ev-inc,thm:fixpoints}\\
      NO:& Number of non-termination proofs\\
      YES:& Number of termination proofs\\
      fail:& Number of examples where (non-)termination could not be proven\\
      avg rt:& Average wall clock runtime\\
      avg rt NO:& Average wall clock runtime when non-termination was proven\\
      median rt:& Median wall clock runtime\\
      median rt NO:& Median wall clock runtime when non-termination was proven\\
      st dev rt:& Standard deviation of wall clock runtime\\
      st dev rt NO:& Standard deviation of wall clock runtime when non-termination was proven\\
    \end{tabular}
  \end{center}
\end{table*}

The results can be seen in \Cref{tab3}.
They show that our novel calculus is competitive with state-of-the-art tools.
Both \tool{iRankFinder} and \tool{Ultimate} can prove non-termination of precisely the same $205$ examples.
\loat can prove non-termination of these examples, too.
In addition, it solves one benchmark that cannot be handled by any other tool:\footnote{\tt 1567523105272726.koat.smt2}

\[
  \mWhile{x>9 \land x_1 \geq 0}{\mat{x\\x_1} \assign \mat{x_1^2+2 \cdot x_1 + 1\\x_1+1}}
\]

Most likely, the other tools fail for this example due to the presence of non-linear arithmetic.
Our calculus from \Cref{sec:nonterm} just needs to check implications, so as long as the underlying SMT-solver supports non-linearity, it can be applied to non-linear examples, too.
It is worth mentioning that \loat subsumes all other tools w.r.t.\ proving non-termination.
There are only $4$ examples where none of the tools can prove termination or non-termination.
Termination of one of these examples can be proven by an experimental, unpublished module of \loat, which is inspired by the calculi presented in this paper.
A manual investigation revealed that the $3$ remaining examples are terminating, too.

To investigate the impact of the different non-termination techniques, we also tested configurations where one of the non-termination techniques from \Cref{thm:nonterm-inc,thm:nonterm-ev-inc,thm:fixpoints} was disabled, respectively.
The results can be seen in \Cref{tab4}.
First, note that disabling \Cref{thm:nonterm-inc} does not reduce \loat's power.
The reason is that if the left-hand side $p$ of an inequation $p>0$ is monotonically increasing (such that \Cref{thm:nonterm-inc} applies), then it is also eventually monotonically increasing (such that \Cref{thm:nonterm-ev-inc} applies).
However, since \Cref{thm:nonterm-inc} yields simpler certificates of non-termination than \Cref{thm:nonterm-ev-inc}, \loat still uses both techniques.
Interestingly, \sout{Ev-Inc} and \sout{FP} are almost equally powerful:
Without \Cref{thm:nonterm-ev-inc}, \loat still proves non-termination of 203 examples and without \Cref{thm:fixpoints}, \loat solves 205 examples.
Presumably, the reason is again due to examples like \eqref{eq:ev-mon}, where \Cref{thm:nonterm-ev-inc} finds the recurrent set $x_2 > 0$ and \Cref{thm:fixpoints} finds the recurrent set $x_1 > 0 \land x_1 = x_2$.
So even though both non-termination techniques are applicable in such cases, the recurrent set deduced via \Cref{thm:nonterm-ev-inc} is clearly more general and thus preferable in practice.
Note that \loat cannot solve a single example when \emph{both} \Cref{thm:nonterm-ev-inc} and \Cref{thm:fixpoints} are disabled (\sout{Ev-Inc, FP} in \Cref{tab4}).
Then the only remaining non-termination technique is \emph{Non-Termination via Monotonic Increase}.
Examples where this technique suffices to prove non-termination trivially diverge whenever the loop condition is satisfied, and hence they were filtered from our benchmark set (as explained at the beginning of \Cref{sec:experiments}).

Regarding the runtime, we think that \tool{LoAT} is faster than the competing tools due to the fact that the technique presented in \Cref{sec:nonterm} requires very little search, whereas many other non-termination techniques are heavily search-based (e.g., due to the use of \emph{templates}, as it is exercised by \tool{RevTerm}).
In our setting, the inequations that eventually constitute a certificate of non-termination immediately arise from the given loop.
In this regard, \tool{iRankFinder}'s approach for proving non-termination is similar to ours, as it also requires little search.
This is also reflected in our experiments, where \tool{iRankFinder} is the second fastest tool.

It should also be taken into account that \tool{iRankFinder} is implemented in {\sf Python}, \tool{AProVE}, \tool{RevTerm}, and \tool{Ultimate} are implemented in {\sf Java}, and \loat and \tool{VeryMax} are implemented in \CC.
Thus, the difference in runtime is in parts due to the different performances of the respective programming language implementations.

Another interesting aspect of our evaluation is the result of \tool{RevTerm}, which outperformed all other tools in the evaluation of \cite{revterm}.
The reason for this discrepancy is the following:
In \cite{revterm}, $300$ different configurations of \tool{RevTerm} have been tested, and a benchmark has been considered to be solved if at least one of these configurations was able to prove non-termination.
In contrast, we ran two configurations of \tool{RevTerm}, one for each of the two non-termination checks proposed in \cite{revterm}.
So essentially, \tool{RevTerm}'s results from \cite{revterm} correspond to a highly parallel setting, whereas the results from our evaluation correspond to a sequential setting.

\section{Conclusion and Future Work}
\label{sec:conclusion}

We discussed existing acceleration techniques (\Cref{sec:monotonic}) and presented a calculus to combine acceleration techniques modularly (\Cref{sec:integration}).
Then we showed how to combine existing (\Cref{sec:conditional}) and two novel (\Cref{sec:accel}) acceleration techniques with our calculus.
This improves over prior approaches, where acceleration techniques were used independently, and may thus improve acceleration-based verification techniques \cite{underapprox15,fmcad19,journal,bozga09a,bozga10,iosif17} in the future.
An empirical evaluation (\Cref{sec:experiments-accel}) shows that our approach is more powerful than state-of-the-art acceleration techniques.
Moreover, if it is able to accelerate a loop, then the result is exact (instead of just an under-approximation) in most cases.
Thus, our calculus can be used for under-approximating techniques (e.g., to find bugs or counterexamples) as well as in over-approximating settings (e.g., to prove safety or termination).

Furthermore, we showed how our calculus from \Cref{sec:integration} can be adapted for proving non-termination in \Cref{sec:nonterm}, where we also presented three non-termination techniques that can be combined with our novel calculus.
While two of them (\Cref{thm:nonterm-inc,thm:fixpoints}) are straightforward adaptions of existing non-termination techniques to our modular setting, the third one (\Cref{thm:nonterm-ev-inc}) is, to the best of our knowledge, new and might also be of interest independently from our calculus.

Actually, the two calculi presented in this paper are so similar that they do not require separate implementations.
In our tool \tool{LoAT}, both of them are implemented together, such that we can handle loops uniformly:
If our implementation of the calculi yields a certificate of non-termination, then it suffices to prove reachability of one of the corresponding witnesses of non-termination from an initial program state afterwards to finish the proof of non-termination.
If our implementation of the calculi successfully accelerates the loop under consideration, this may help to prove reachability of other, potentially non-terminating configurations later on.
If our implementation of the calculi fails, then \tool{LoAT} continues its search with other program paths.
The success of this strategy is demonstrated at the annual \emph{Termination and Complexity Competition}, where \tool{LoAT} has been the most powerful tool for proving non-termination of \emph{Integer Transition Systems} since its first participation in 2020.

Regarding future work, we are actively working on support for disjunctive loop conditions.
Moreover, our experiments indicate that integrating specialized techniques for FMATs (see \Cref{sec:related}) would improve the power of our approach for loop acceleration, as \tool{Flata} exactly accelerated 49 loops where \tool{LoAT} failed to do so (see \Cref{sec:experiments}).
Furthermore, we plan to extend our approach to richer classes of loops, e.g., loops operating on both integers and arrays, non-deterministic loops, or loops operating on bitvectors (as opposed to mathematical integers).

\subsubsection*{Acknowledgments}

We thank Marcel Hark, Sophie Tourret, and the anonymous reviewers for helpful feedback and comments.
Moreover, we thank Jera Hensel for her help with \tool{AProVE}, Radu Iosif and Filip Kone\v{c}n\'{y} for their help with \tool{Flata}, Samir Genaim for his help with \tool{iRankFinder}, Matthias Heizmann for his help with \tool{Ultimate}, Ehsan Goharshady for his help with \tool{RevTerm}, and Albert Rubio for his help with \tool{VeryMax}.

This work has been funded by the Deutsche Forschungsgemeinschaft (DFG, German Research Foundation) - 235950644 (Project GI 274/6-2),
    and by the Deutsche Forschungsgemeinschaft (DFG, German Research Foundation) - 389792660 as part of TRR 248 (see \url{https://perspicuous-computing.science}).

\bibliographystyle{splncs04}
\bibliography{refs,crossrefs,strings}

\providecommand{\noopsort}[1]{}
\begin{thebibliography}{10}
\providecommand{\url}[1]{\texttt{#1}}
\providecommand{\urlprefix}{URL }
\providecommand{\doi}[1]{https://doi.org/#1}

\bibitem{Albert21}
Albert, E., Genaim, S., Martin{-}Martin, E., Merayo, A., Rubio, A.: Lower-bound
  synthesis using loop specialization and max-{SMT}. In: CAV~'21. pp. 863--886.
  LNCS 12760 (2021). \doi{10.1007/978-3-030-81688-9_40}

\bibitem{purrs}
Bagnara, R., Pescetti, A., Zaccagnini, A., Zaffanella, E.: \tool{PURRS}:
  Towards computer algebra support for fully automatic worst{-}case complexity
  analysis (2005), \href{arXiv:cs/0512056}{https://arxiv.org/abs/cs/0512056}
  [cs.MS]

\bibitem{fast}
Bardin, S., Finkel, A., Leroux, J., Petrucci, L.: {FAST:} {A}cceleration from
  theory to practice. Int. J. Softw. Tools Technol. Transf.  \textbf{10}(5),
  401--424 (2008). \doi{10.1007/s10009-008-0064-3}

\bibitem{FlatFramework}
Bardin, S., Finkel, A., Leroux, J., Schnoebelen, P.: Flat acceleration in
  symbolic model checking. In: ATVA~'05. pp. 474--488. LNCS 3707 (2005).
  \doi{10.1007/11562948_35}

\bibitem{amram19}
Ben{-}Amram, A.M., Dom{\'{e}}nech, J.J., Genaim, S.: Multiphase-linear ranking
  functions and their relation to recurrent sets. In: SAS~'19. pp. 459--480.
  LNCS 11822 (2019). \doi{10.1007/978-3-030-32304-2_22}

\bibitem{finite-monid}
Boigelot, B.: Symbolic Methods for Exploring Infinite State Spaces. Ph.D.
  thesis, Université de Liège (1999),
  \url{https://orbi.uliege.be/bitstream/2268/74874/1/Boigelot98.pdf}

\bibitem{Boigelot03}
Boigelot, B.: On iterating linear transformations over recognizable sets of
  integers. Theor. Comput. Sci.  \textbf{309}(1-3),  413--468 (2003).
  \doi{10.1016/S0304-3975(03)00314-1}

\bibitem{bozga09a}
Bozga, M., G{\^{\i}}rlea, C., Iosif, R.: Iterating octagons. In: TACAS~'09. pp.
  337--351. LNCS 5505 (2009). \doi{10.1007/978-3-642-00768-2_29}

\bibitem{bozga10}
Bozga, M., Iosif, R., Kone\v{c}n\'{y}, F.: Fast acceleration of ultimately
  periodic relations. In: CAV~'10. pp. 227--242. LNCS 6174 (2010).
  \doi{10.1007/978-3-642-14295-6_23}

\bibitem{Bozga14}
Bozga, M., Iosif, R., Kone\v{c}n\'{y}, F.: Deciding conditional termination.
  Log. Methods Comput. Sci.  \textbf{10}(3) (2014).
  \doi{10.2168/LMCS-10(3:8)2014}

\bibitem{jbc-nonterm}
Brockschmidt, M., Str{\"{o}}der, T., Otto, C., Giesl, J.: Automated detection
  of non-termination and {{\texttt{{N}ull{P}ointer{E}xception}}}s for
  \tool{Java Bytecode}. In: FoVeOOS~'11. pp. 123--141. LNCS 7421 (2012).
  \doi{10.1007/978-3-642-31762-0_9}

\bibitem{smtlib-format}
Brockschmidt\noopsort{1}, M., Rybalchenko, A.: {{\tool{TermComp}}} proposal:
  Pushdown systems as a model for programs with procedures (2014),
  \url{https://www.microsoft.com/en-us/research/wp-content/uploads/2016/02/SMTPushdownPrograms.pdf}

\bibitem{t2-tool}
Brockschmidt\noopsort{3}, M., Cook, B., Ishtiaq, S., Khlaaf, H., Piterman, N.:
  {\tool{T2}:} temporal property verification. In: TACAS~'16. pp. 387--393.
  LNCS 9636 (2016). \doi{10.1007/978-3-662-49674-9_22}

\bibitem{revterm}
Chatterjee, K., Goharshady, E.K., Novotn{\'{y}}, P., Zikelic, D.: Proving
  non-termination by program reversal. In: PLDI~'21. pp. 1033--1048 (2021).
  \doi{10.1145/3453483.3454093}

\bibitem{Ultimate}
Chen, Y., Heizmann, M., Leng{\'{a}}l, O., Li, Y., Tsai, M., Turrini, A., Zhang,
  L.: Advanced automata-based algorithms for program termination checking. In:
  PLDI~'18. pp. 135--150 (2018). \doi{10.1145/3192366.3192405}

\bibitem{t2-nonterm}
Chen\noopsort{1}, H., Cook, B., Fuhs, C., Nimkar, K., O'Hearn, P.W.: Proving
  nontermination via safety. In: TACAS~'14. pp. 156--171. LNCS 8413 (2014).
  \doi{10.1007/978-3-642-54862-8_11}

\bibitem{differenceBounds}
Comon, H., Jurski, Y.: Multiple counters automata, safety analysis and
  {P}resburger arithmetic. In: CAV~'98. pp. 268--279. LNCS 1427 (1998).
  \doi{10.1007/BFb0028751}

\bibitem{anant}
Cook, B., Fuhs, C., Nimkar, K., O'Hearn, P.W.: Disproving termination with
  overapproximation. In: FMCAD~'14. pp. 67--74 (2014).
  \doi{10.1109/FMCAD.2014.6987597}

\bibitem{yices}
Dutertre, B.: \tool{Yices} 2.2. In: CAV~'14. pp. 737--744. LNCS 8559 (2014).
  \doi{10.1007/978-3-319-08867-9_49}

\bibitem{kincaid15}
Farzan, A., Kincaid, Z.: Compositional recurrence analysis. In: FMCAD~'15. pp.
  57--64 (2015). \doi{10.1109/FMCAD.2015.7542253}

\bibitem{fmcad19}
Frohn, F., Giesl, J.: Proving non-termination via loop acceleration. In:
  FMCAD~'19. pp. 221--230 (2019). \doi{10.23919/FMCAD.2019.8894271}

\bibitem{conference}
Frohn, F.: A calculus for modular loop acceleration. In: TACAS~'20. pp. 58--76.
  LNCS 12078 (2020). \doi{10.1007/978-3-030-45190-5_4}

\bibitem{website}
Frohn, F., Fuhs, C.: Empirical evaluation of ``{A} calculus for modular loop
  acceleration (and non-termination proofs)'' (2022),
  \url{https://ffrohn.github.io/acceleration-calculus}

\bibitem{cav19}
Frohn, F., Giesl, J.: Termination of triangular integer loops is decidable. In:
  CAV~'19. pp. 426--444. LNCS 11562 (2019). \doi{10.1007/978-3-030-25543-5_24}

\bibitem{polyloopsSAS20}
Frohn, F., Hark, M., Giesl, J.: Termination of polynomial loops. In: SAS~'20.
  pp. 89--112. LNCS 12389 (2020). \doi{10.1007/978-3-030-65474-0_5}

\bibitem{journal}
Frohn, F., Naaf, M., Brockschmidt, M., Giesl, J.: Inferring lower runtime
  bounds for integer programs. {ACM} Trans. Program. Lang. Syst.
  \textbf{42}(3),  13:1--13:50 (2020). \doi{10.1145/3410331}

\bibitem{ijcar16}
Frohn, F., Naaf, M., Hensel, J., Brockschmidt, M., Giesl, J.: Lower runtime
  bounds for integer programs. In: IJCAR~'16. pp. 550--567. LNCS 9706 (2016).
  \doi{10.1007/978-3-319-40229-1_37}

\bibitem{iosif17}
Ganty, P., Iosif, R., Kone\v{c}n\'{y}, F.: Underapproximation of procedure
  summaries for integer programs. Int. J. Softw. Tools Technol. Transf.
  \textbf{19}(5),  565--584 (2017). \doi{10.1007/s10009-016-0420-7}

\bibitem{tool-jar}
Giesl, J., Aschermann, C., Brockschmidt, M., Emmes, F., Frohn, F., Fuhs, C.,
  Hensel, J., Otto, C., Plücker, M., Schneider{-}Kamp, P., Ströder, T.,
  Swiderski, S., Thiemann, R.: Analyzing program termination and complexity
  automatically with \tool{AProVE}. {J.\ Autom.\ Reasoning}  \textbf{58}(1),
  3--31 (2017). \doi{10.1007/s10817-016-9388-y}

\bibitem{termcomp}
Giesl, J., Rubio, A., Sternagel, C., Waldmann, J., Yamada, A.: The termination
  and complexity competition. In: TACAS~'19. pp. 156--166. LNCS 11429 (2019).
  \doi{10.1007/978-3-030-17502-3_10}

\bibitem{revterm-website}
Goharshady, E.K.: Rev{T}erm on {G}it{H}ub (2021),
  \url{https://github.com/ekgma/RevTerm}

\bibitem{gonnord06}
Gonnord, L., Halbwachs, N.: Combining widening and acceleration in linear
  relation analysis. In: SAS~'06. pp. 144--160. LNCS 4134 (2006).
  \doi{10.1007/11823230_10}

\bibitem{schrammel14}
Gonnord, L., Schrammel, P.: Abstract acceleration in linear relation analysis.
  Sci. Comput. Program.  \textbf{93},  125--153 (2014).
  \doi{10.1016/j.scico.2013.09.016}

\bibitem{pldi08}
Gulwani, S., Srivastava, S., Venkatesan, R.: Program analysis as constraint
  solving. In: PLDI~'08. pp. 281--292 (2008). \doi{10.1145/1375581.1375616}

\bibitem{rupak08}
Gupta, A., Henzinger, T.A., Majumdar, R., Rybalchenko, A., Xu, R.: Proving
  non-termination. In: POPL~'08. pp. 147--158 (2008).
  \doi{10.1145/1328438.1328459}

\bibitem{polyloopsLPAR20}
Hark, M., Frohn, F., Giesl, J.: Polynomial loops: Beyond termination. In:
  LPAR~'20. pp. 279--297. EPiC Series in Computing 73 (2020).
  \doi{10.29007/nxv1}

\bibitem{iosif12}
Hojjat, H., Iosif, R., Kone\v{c}n\'{y}, F., Kuncak, V., R{\"{u}}mmer, P.:
  Accelerating interpolants. In: ATVA~'12. pp. 187--202. LNCS 7561 (2012).
  \doi{10.1007/978-3-642-33386-6_16}

\bibitem{hojjat12}
Hojjat, H., Kone\v{c}n\'{y}, F., Garnier, F., Iosif, R., Kuncak, V., Rümmer,
  P.: A verification toolkit for numerical transition systems - tool paper. In:
  FM~'12. pp. 247--251. LNCS 7436 (2012). \doi{10.1007/978-3-642-32759-9_21}

\bibitem{Hosseini19}
Hosseini, M., Ouaknine, J., Worrell, J.: Termination of linear loops over the
  integers. In: ICALP~'19. pp. 118:1--118:13. LIPIcs 132 (2019).
  \doi{10.4230/LIPIcs.ICALP.2019.118}

\bibitem{jeannet14}
Jeannet, B., Schrammel, P., Sankaranarayanan, S.: Abstract acceleration of
  general linear loops. In: POPL~'14. pp. 529--540 (2014).
  \doi{10.1145/2535838.2535843}

\bibitem{kincaid17}
Kincaid, Z., Breck, J., Boroujeni, A.F., Reps, T.W.: Compositional recurrence
  analysis revisited. In: PLDI~'17. pp. 248--262 (2017).
  \doi{10.1145/3062341.3062373}

\bibitem{octagonsP}
Kone\v{c}n\'{y}, F.: {PTIME} computation of transitive closures of octagonal
  relations. In: TACAS~'16. pp. 645--661. LNCS 9636 (2016).
  \doi{10.1007/978-3-662-49674-9_42}

\bibitem{underapprox15}
Kroening, D., Lewis, M., Weissenbacher, G.: Under{-}approximating loops in
  \pl{C} programs for fast counterexample detection. Formal Methods Syst. Des.
  \textbf{47}(1),  75--92 (2015). \doi{10.1007/s10703-015-0228-1}

\bibitem{larraz14}
Larraz\noopsort{1}, D., Nimkar, K., Oliveras, A., Rodríguez{-}Carbonell, E.,
  Rubio, A.: Proving non-termination using max-{SMT}. In: CAV~'14. pp.
  779--796. LNCS 8559 (2014). \doi{10.1007/978-3-319-08867-9_52}

\bibitem{geometricNonterm}
Leike, J., Heizmann, M.: Geometric nontermination arguments. In: TACAS~'18. pp.
  266--283. LNCS 10806 (2018). \doi{10.1007/978-3-319-89963-3_16}

\bibitem{madhukar15}
Madhukar, K., Wachter, B., Kroening, D., Lewis, M., Srivas, M.K.: Accelerating
  invariant generation. In: FMCAD~'15. pp. 105--111 (2015).
  \doi{10.1109/FMCAD.2015.7542259}

\bibitem{z3}
\noopsort{Moura}{de Moura}, L., Bj{\o}rner, N.: \tool{Z3}: An efficient {SMT}
  solver. In: TACAS\ '08. pp. 337--340. LNCS 4963 (2008).
  \doi{10.1007/978-3-540-78800-3_24}

\bibitem{Ouaknine15}
Ouaknine, J., Pinto, J.S., Worrell, J.: On termination of integer linear loops.
  In: SODA~'15. pp. 957--969 (2015). \doi{10.1137/1.9781611973730.65}

\bibitem{silverman19}
Silverman, J., Kincaid, Z.: Loop summarization with rational vector addition
  systems. In: CAV~'19. pp. 97--115. LNCS 11562 (2019).
  \doi{10.1007/978-3-030-25543-5_7}

\bibitem{starexec-spec}
\tool{StarExec} hardware specifications (2022),
  \url{https://www.starexec.org/starexec/public/machine-specs.txt}

\bibitem{strejcek12}
Strejcek, J., Trt{\'{\i}}k, M.: Abstracting path conditions. In: ISSTA~'12. pp.
  155--165 (2012). \doi{10.1145/2338965.2336772}

\bibitem{starexec}
Stump, A., Sutcliffe, G., Tinelli, C.: \tool{StarExec}: {A} cross-community
  infrastructure for logic solving. In: IJCAR~'14. pp. 367--373. LNCS 8562
  (2014). \doi{10.1007/978-3-319-08587-6_28}

\bibitem{tpdb}
Termination problems data base ({TPDB}),
  \url{http://termination-portal.org/wiki/TPDB}

\bibitem{seahorn-term}
Urban, C., Gurfinkel, A., Kahsai, T.: Synthesizing ranking functions from bits
  and pieces. In: TACAS~'16. pp. 54--70. LNCS 9636 (2016).
  \doi{10.1007/978-3-662-49674-9_4}

\bibitem{velroyen}
Velroyen, H., R{\"{u}}mmer, P.: Non-termination checking for imperative
  programs. In: TAP~'08. pp. 154--170. LNCS 4966 (2008).
  \doi{10.1007/978-3-540-79124-9_11}

\end{thebibliography}

\end{document}